\documentclass[11pt]{article}

\usepackage[margin=1in]{geometry}
\usepackage[T1]{fontenc}
\usepackage[utf8]{inputenc}
\usepackage{lmodern}
\usepackage{microtype}
\usepackage{amsmath,amssymb,mathtools}
\usepackage{xcolor}
\usepackage{booktabs}
\usepackage{tabularx}
\usepackage{array}
\usepackage{tikz}
\usepackage{longtable}
\usepackage{caption}

\usepackage{array}
\usepackage{booktabs}
\usepackage{float}
\usepackage{subcaption}
\usepackage{tabularx}
\definecolor{linkblue}{RGB}{30,70,140}
\usepackage[
  colorlinks=true,
  linkcolor=linkblue,
  citecolor=linkblue,
  urlcolor=linkblue,
  pdfauthor={Ravi Tandon},
  pdftitle={Strengthening Recursive Constructions for \\Zero-Error Shannon Capacity}
]{hyperref}

\usepackage[toc,page,titletoc]{appendix}
\usepackage{titlesec}

\usepackage{amsthm}

\theoremstyle{plain}
\newtheorem{theorem}{Theorem}
\newtheorem{lemma}{Lemma}
\newtheorem{proposition}{Proposition}

\theoremstyle{definition}
\newtheorem{definition}{Definition}

\theoremstyle{definition}
\newtheorem{remark}{Remark}

\newcommand{\strong}{\boxtimes}
\newcommand{\Shannon}{\Theta}

\usepackage{booktabs}
\usepackage{tabularx}
\usepackage{array}
\usepackage{xcolor}
\usepackage{sourcesanspro}

\definecolor{preAI}{HTML}{7A3E48}   
\definecolor{AIera}{HTML}{24557A}   

\definecolor{preAI}{HTML}{4A1721}   
\definecolor{AIera}{HTML}{102E4A}   

\newcommand{\preai}[1]{\textcolor{preAI}{#1}}
\newcommand{\aiera}[1]{\textcolor{AIera}{#1}}

\makeatletter
\def\bstctlcite{\@ifnextchar[{\@bstctlcite}{\@bstctlcite[@auxout]}}
\def\@bstctlcite[#1]#2{%
  \@bsphack
  \@for\@citeb:=#2\do{%
    \edef\@citeb{\expandafter\@firstofone\@citeb}%
    \if@filesw
      \immediate\write\csname #1\endcsname{%
        \string\citation{\@citeb}}%
    \fi
  }%
  \@esphack
}
\makeatother

\title{Strengthening Recursive Constructions for \\Zero-Error Shannon Capacity}

\author{
Ravi Tandon\\
School of Electrical, Computing and Software Engineering,\\
University of Arizona, Tucson, AZ 85721, USA.\\
Email: \href{mailto:tandonr@arizona.edu}{\texttt{tandonr@arizona.edu}}}
\date{}

\begin{document}
\bstctlcite{BSTcontrol}

\maketitle
\vspace{-2.5em}
\begin{abstract}

The exact Shannon capacity remains unknown for every odd cycle beyond the
five-cycle $C_5$, making odd cycles central examples in zero-error
information theory. Improving the known lower bounds requires constructing
large independent sets in strong powers of these graphs. Recent AI-assisted
work has accelerated this effort and produced a rapid sequence of
improvements. In particular, building on the construction of Itty et al.,
Gao developed a recursive product construction for combining structured
independent sets, and Buys, Polak, and Zuiddam (BPZ) subsequently strengthened
this through a richer 
recursion framework.

We continue this line of AI-assisted exploration and introduce a
\emph{heterogeneous refinement} of these recursive constructions. The
central observation is that the usefulness of an intermediate construction
depends not only on the size of its current main independent set, but also
on the auxiliary structure that it carries into subsequent recursion.
Consequently, different parts of the auxiliary structure need not always
use the same independent set, and different occurrences in a recursive
construction need not always use the same intermediate representation. We
first formalize this principle for Gao's binary product and derive explicit
propagation rules showing how heterogeneous choices can strengthen the
resulting gadget while leaving its current code size unchanged. We then
extend this principle to the more general BPZ recursive framework, showing
how different heterogeneous constructions can be tailored to the distinct
roles they play within the recursion.

Applying these refinements to the seven-cycle $C_7$, we obtain an independent set in $C_7^{\strong 500}$ yielding
$\Shannon(C_7)\ge 3.25883262\ldots$, thereby improving the best known lower bound. Beyond the numerical improvement, the results illustrate a general principle for recursive zero-error code constructions: intermediate structures having the same dimension and current code size can have different downstream value depending on where and how they are used in the recursion.
\end{abstract}

\vspace{0pt}
\section{Introduction}\label{sec:introduction}
\vspace{0pt}

In classical information theory, reliable communication usually means that
the probability of decoding error can be made arbitrarily small by using long
codes. Shannon's theory of \emph{zero-error} communication asks for something
strictly stronger: which communication rates are possible when confusion is
not allowed at all?~\cite{Shannon1956} The problem is naturally represented by
a graph $G$, called the confusability graph. Each possible channel input is represented by a vertex, and two
vertices are connected when the corresponding inputs can produce a common
output and hence may be confused by the receiver. A collection of inputs that
are pairwise nonconfusable is therefore an independent set in this graph.

When the channel is used $d$ times, a channel input is a word
$\mathbf{x}=(x_1,\ldots,x_d)$. 
Two words are confusable if, in every coordinate, the corresponding symbols are either equal or adjacent in the confusability graph $G$.
Graph-theoretically, the relevant graph is the
$d$-fold strong power $G^{\strong d}$. If $\alpha(H)$ denotes the independence
number (size of the largest independent set) of a graph $H$, then the largest zero-error code of blocklength $d$ has
size $\alpha(G^{\strong d})$. The Shannon capacity of $G$ is
$\Shannon(G)
  =\sup_{d\ge 1}\alpha\!\left(G^{\strong d}\right)^{1/d}$.
Thus, an independent set of size $M$ in $G^{\strong d}$ immediately gives the
lower bound $\Shannon(G)\ge M^{1/d}$. The normalization by the $d$th root is
important: it converts a blocklength-$d$ code into the effective number of
zero-error symbols transmitted per channel use.

\vspace{-7pt}
\paragraph{The first unresolved odd cycle: \(C_7\).}
Cycle graphs are among the oldest and most natural examples in zero-error
information theory. The classical nontrivial example is the pentagon $C_5$.
Shannon exhibited five pairwise nonconfusable two-letter words, proving
$\Shannon(C_5)\ge\sqrt{5}$~\cite{Shannon1956}. Lov\'asz later introduced
the theta function $\vartheta(G)$, proved the general upper bound
$\Shannon(G)\le\vartheta(G)$, and established
$\Shannon(C_5)=\vartheta(C_5)=\sqrt{5}$~\cite{Lovasz1979}. 
The situation changes immediately at the next odd cycle. For every odd cycle
$C_{2m+1}$ with $m\ge 3$, the exact Shannon capacity remains unknown. In
particular, $C_7$ is the smallest cycle whose Shannon capacity has not been
determined. Lov\'asz's theta function gives the explicit upper bound
  $\Shannon(C_7)
  \le \vartheta(C_7)
  =\frac{7\cos(\pi/7)}{1+\cos(\pi/7)}
  =3.31766720\ldots$,
which remains the best known upper bound for \(C_7\) to date. 
The challenge is therefore to construct increasingly large independent sets in strong powers of \(C_7\), thereby improving the best known lower bounds and narrowing the gap to the Lov\'asz upper bound.

\vspace{-7pt}
\paragraph{Overview of progress on lower bounds for \(C_7\).}
The history of lower bounds for $C_7$ (and, more generally, for larger odd cycles) illustrates both the difficulty of the problem and the importance of finding structure at the right blocklength. It can be divided into two periods. From 1971 through 2019, progress came through increasingly strong finite-dimensional constructions~\cite{BaumertEtAl1971,VeselZerovnik2002,MathewOstergard2017}, culminating in the $367$-word independent set in $C_7^{\strong 5}$ of Polak and Schrijver~\cite{PolakSchrijver2019}. Beginning in July 2026, AI-assisted exploration and structured product constructions produced a rapid sequence of improvements: Itty et al.~\cite{IttyEtAl2026} improved on the direct product in dimension $10$, Gao~\cite{Gao2026} converted that improvement into a binary recursive mechanism, and BPZ~\cite{BuysPolakZuiddam2026,BuysPolakZuiddamGitHub2026} subsequently introduced a stronger base structure and a richer recursion. Table~\ref{tab:C7progress} summarizes this progression. These developments are reviewed in detail in Sections~\ref{subsec:pre-llm}, \ref{subsec:llm-assistedprogress}, and \ref{subsec:BPZmultigadget}.

A striking feature of the recent improvements is the role of large language models (LLMs) in mathematical exploration. Earlier LLM-based program search had already rediscovered the $367$-word lower bound and found new constructions for related odd cycles~\cite{RomeraParedesEtAl2024}. Itty et al.~\cite{IttyEtAl2026} report that their $134753$-word construction was discovered through iterative interactions with an LLM and was not reached by the conventional search heuristics they tested. Gao~\cite{Gao2026} likewise reports using an LLM to implement the computational search and expand the proofs underlying his recursive construction. BPZ~\cite{BuysPolakZuiddam2026} state that, building on these approaches and using ChatGPT 5.6 Sol Pro and Claude Opus 5, they obtained their further improved bounds.

\vspace{10pt}
\noindent\textbf{Zero-error Shannon capacity as a benchmark for AI-assisted mathematical discovery.} Taken together, these developments suggest that zero-error Shannon capacity provides a particularly clean benchmark for AI-assisted mathematical discovery, with odd cycles offering a compelling test case.
The objective is unambiguous, progress is measured by an exact and independently verifiable quantity, and every claimed improvement must be supported by an explicit construction. At the same time, the value of the benchmark is not limited to the final numerical gain: even a small improvement may require new structural ideas, more effective search strategies, and new methods of proof and verification. The process may therefore reveal substantially more about the capabilities of AI-assisted mathematical reasoning than the magnitude of improvement alone.

\begin{table}[t]
\centering
\captionsetup{
    labelfont=bf,
    textfont=bf,
    justification=centering,
    singlelinecheck=false
}
\caption{Progression of lower bounds on the zero-error Shannon capacity of \(C_7\).\\[2pt]
{\normalfont (All displayed decimal values are truncated.)}}
\label{tab:C7progress}
\vspace{-8pt}
{\rmfamily
\small
\renewcommand{\arraystretch}{1.24}
\setlength{\tabcolsep}{5.2pt}

\begin{tabularx}{\linewidth}{
@{}
>{\centering\arraybackslash}p{0.070\linewidth}
>{\raggedright\arraybackslash}p{0.275\linewidth}
>{\raggedright\arraybackslash}X
>{\raggedleft\arraybackslash}p{0.145\linewidth}
@{}
}

\toprule
\textbf{Year}
&
\textbf{Authors}
&
\textbf{Construction / main idea}
&
\(\boldsymbol{\Theta(C_7)\geq}\)
\\
\midrule

\multicolumn{4}{@{}l}{
    \textit{\preai{\textbf{Finite-dimensional constructions}}}
}
\\[2pt]

1971 &
Baumert, McEliece, Rodemich, Rumsey, Stanley, and Taylor
\cite{BaumertEtAl1971} &
\preai{\(343\) words in \(C_7^{\strong 5}\)} &
\preai{\(3.21409584\ldots\)}
\\

2002 &
Vesel and \v{Z}erovnik
\cite{VeselZerovnik2002} &
\preai{\(108\) words in \(C_7^{\strong 4}\) via simulated annealing} &
\preai{\(3.22370979\ldots\)}
\\

2017 &
Mathew and \"Osterg\r{a}rd
\cite{MathewOstergard2017} &
\preai{\(350\) words in \(C_7^{\strong 5}\) using symmetry-constrained stochastic search} &
\preai{\(3.22710880\ldots\)}
\\

2019 &
Polak and Schrijver
\cite{PolakSchrijver2019} &
\preai{\(367\) words in \(C_7^{\strong 5}\) using a circular-graph construction and computation} &
\preai{\(3.25786596\ldots\)}
\\

\addlinespace[5pt]

\multicolumn{4}{@{}l}{
    \textit{\aiera{\textbf{AI-assisted exploration and recursive constructions}}}
}
\\[2pt]

2026 &
Itty, Rosin, Carstensen, and Reichman
\cite{IttyEtAl2026} &
\aiera{\(134753\) words in \(C_7^{\strong 10}\); improvement beyond the direct product} &
\aiera{\(3.25802073\ldots\)}
\\

2026 &
Gao
\cite{Gao2026} &
\aiera{Binary recursive propagation of structured independent sets} &
\aiera{\(3.25878915\ldots\)}
\\

2026 &
Buys, Polak, and Zuiddam (BPZ)
\cite{BuysPolakZuiddam2026} &
\aiera{Stronger base structure followed by recursive propagation} &
\aiera{\(3.25880536\ldots\)}
\\

2026 &
BPZ, subsequent update
\cite{BuysPolakZuiddamGitHub2026} &
\aiera{More general combining rules, including combinations of more than two inputs} &
\aiera{\(3.25882798\ldots\)}
\\

\midrule

2026 &
\textbf{This work (Theorem~\ref{thm:mainC7})} &
\textbf{\aiera{Heterogeneous recursive constructions}} &
\aiera{\(\boldsymbol{3.25883262\ldots}\)}
\\

\bottomrule
\end{tabularx}
}
\vspace{-10pt}
\end{table}

\vspace{8pt}
\noindent\textbf{Our contribution and organization of the paper.}
This paper further develops the recursive-construction viewpoint underlying the recent advances in lower bounds on zero-error Shannon capacity~\cite{IttyEtAl2026,Gao2026,BuysPolakZuiddam2026,BuysPolakZuiddamGitHub2026}. Our emphasis is both structural and expository. Rather than presenting our numerical improvement in isolation, we use the recent sequence of constructions to identify and formalize additional flexibility in the recursive framework, and to explain how it can be exploited. The central principle is that the value of an intermediate construction is determined not only by the size of its main independent set at that stage, but also by the auxiliary structure that it carries into later stages of the recursion.

We first give a self-contained account of Gao's binary product construction and the gadget structure that it propagates. We then introduce a \textit{heterogeneous refinement} (Theorem~\ref{thm:heteroGao}) in which different classes of the right-hand auxiliary set may be paired with different left-hand independent sets, derive the corresponding propagation rules, and show that this already improves the bounds obtained from Gao's recursion~\cite{Gao2026} and the first BPZ refinement~\cite{BuysPolakZuiddam2026}. We next describe the BPZ multi-gadget framework~\cite{BuysPolakZuiddamGitHub2026}, including admissible combining rules, and show how Gao's binary product arises as a special case. Finally, we construct role-specific heterogeneous Gao gadgets, convert them into BPZ seven-family representations, and use them in different branches of the BPZ recursion for $C_7$. This yields an independent set in
$C_7^{\strong 500}$ and, as presented in Theorem~\ref{thm:mainC7}, the improved
lower bound
\begin{align}\label{eq:main-bound-intro}
\Shannon(C_7)
\ge
3.25883262\ldots .
\end{align}
Throughout, we provide explicit constructions and propagation rules, together with finite certificates and exact arithmetic, so that the constructions and resulting bounds can be reproduced and independently verified.

\vspace{-3pt}
\section{Problem Statement and Background}
\label{sec:background}

\vspace{-2pt}
\subsection{Zero-error communication and Shannon capacity}

The notion of zero-error communication was introduced by Shannon in his
seminal 1956 paper~\cite{Shannon1956}. Consider a communication channel with
a finite input alphabet. Two input symbols are said to be \emph{confusable}
if they can produce a common output at the receiver. These relations can be
represented by a graph \(G\), called the \emph{confusability graph}: the
vertices of \(G\) are the input symbols, and two distinct vertices are joined
by an edge whenever the corresponding symbols are confusable.

Suppose first that the channel is used only once. A collection of symbols can
be transmitted with zero error precisely when no two symbols in the
collection are confusable. In graph-theoretic language, such a collection is
an independent set in \(G\). The largest number of messages that can be
transmitted with zero error in one channel use is therefore the independence
number \(\alpha(G)\).

The problem becomes more interesting when the channel is used repeatedly.
After \(d\) uses, a transmitted message is a word
\(\mathbf{x}=(x_1,\ldots,x_d)\), where each coordinate is an input symbol. Two such
words are confusable if, in every coordinate, the corresponding symbols are
either equal or confusable. The resulting
confusability graph is the \(d\)-fold strong power \(G^{\boxtimes d}\).
Consequently, the largest zero-error code of blocklength \(d\) has size
\(\alpha(G^{\boxtimes d})\). To compare codes of different blocklengths, their sizes must be normalized.
The Shannon capacity of \(G\) is defined as
\begin{align}
    \Theta(G)
    &=
    \sup_{d\geq 1}
    \alpha\!\left(G^{\boxtimes d}\right)^{1/d}.
    \label{eq:shannon-capacity}
\end{align}
Thus, an independent set of size \(M\) in \(G^{\boxtimes d}\) gives the
lower bound \(\Theta(G)\geq M^{1/d}\). The subtlety is that repeated channel uses can perform better than simply
taking independent copies of a one-use code. The pentagon provides the
simplest and most celebrated example of this phenomenon.

\begin{figure}[t]
    \centering
    \includegraphics[width=0.95\linewidth]{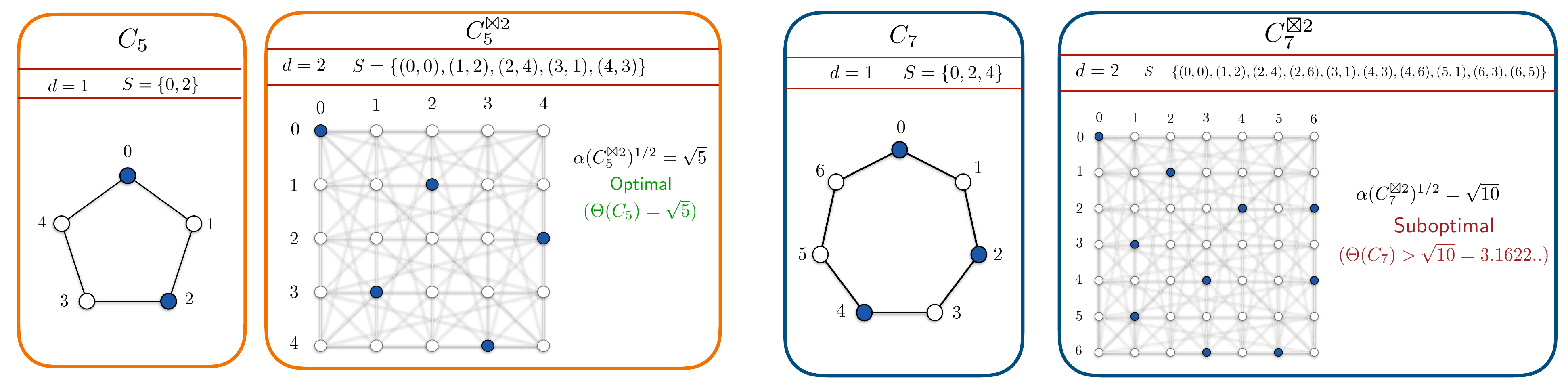}
    \caption{One- and two-use zero-error codes for $C_5$ and $C_7$. The five-word code in $C_5^{\boxtimes 2}$ achieves $\Theta(C_5)=\sqrt{5}$, whereas the ten-word code in $C_7^{\boxtimes 2}$ gives the lower bound $\Theta(C_7)\ge \sqrt{10}$. The exact Shannon capacity of $C_7$ remains unknown.}
    \label{fig:c5-zero-error}
\end{figure}

\noindent \textit{The pentagon $C_5$: a completely solved example}. 
Let $C_5$ denote the cycle on the five symbols
$\{0,1,2,3,4\}$, where each symbol is confusable with its two neighbors.
With one channel use, the largest zero-error code has size
$\alpha(C_5)=2$. Repeating such a code independently over two uses would
therefore produce only $2^2=4$ codewords. As illustrated in
Fig.~\ref{fig:c5-zero-error}, however, joint coding permits five pairwise
nonconfusable codewords in $C_5^{\strong 2}$, yielding
$\Shannon(C_5)\ge\sqrt{5}$. Lovász proved the matching upper bound
$\Shannon(C_5)\le\vartheta(C_5)=\sqrt{5}$, and hence
$\Shannon(C_5)=\sqrt{5}$~\cite{Lovasz1979}. Thus, the pentagon is the
simplest example in which joint coding improves upon independent repetition,
and the gain already achieved at blocklength two is exactly optimal.


\subsection{The seven-cycle $C_7$: the first open odd cycle graph}

We now consider the seven-cycle \(C_7\). Its vertices are identified with
\(\mathbb Z_7=\{0,1,\ldots,6\}\), with arithmetic understood modulo \(7\).
Two distinct symbols are adjacent, and hence confusable, when they differ by
\(1\) or \(-1\) modulo \(7\). For convenience, we include equality in the confusability relation. For
\(a,b\in\mathbb Z_7\), write \(a\simeq b\) when
\(a-b\in\{-1,0,1\}\pmod 7\). For two words
\(x,y\in\mathbb Z_7^d\), we similarly write \(x\simeq y\) when
\(x_i\simeq y_i\) in every coordinate. Thus, two distinct words are
confusable precisely when \(x\simeq y\).

With one use of the channel, three symbols can be selected without
confusion. For example, \(\{0,2,4\}\) is an independent set. Four symbols
cannot be selected without including two adjacent vertices, so
\(\alpha(C_7)=3\). Using this three-symbol code independently over two channel uses would give
\(3^2=9\) codewords. As with the pentagon, joint coding performs better:
there exists a ten-word independent set in \(C_7^{\boxtimes 2}\) (Fig.~\ref{fig:c5-zero-error}). Moreover,
ten is optimal (by optimal we mean that the size of the largest independent set in $C_7^{\boxtimes 2}$ is $10$). Hence
    $\alpha(C_7)= 3, ~~ 
    \alpha\!\left(C_7^{\boxtimes 2}\right) = 10$.
It follows that
\(\Theta(C_7)\geq\sqrt{10}=3.1622776601\ldots\), which exceeds the
one-use value \(3\). The two-use behavior of \(C_5\) and \(C_7\) therefore looks superficially
similar. In both cases, joint coding beats the direct product of an optimal
one-use code. The essential difference is that the two-use construction
settles the pentagon completely, whereas for \(C_7\) still better normalized
code sizes arise at larger blocklengths.

Lov\'asz's theta function gives the general upper bound
\(\Theta(G)\leq\vartheta(G)\). For the seven-cycle, this becomes
\begin{align}
    \Theta(C_7)
    &\leq
    \vartheta(C_7)
    =
    \frac{7\cos(\pi/7)}
         {1+\cos(\pi/7)}
    =
    3.317667207394095\ldots .
    \label{eq:c7-theta}
\end{align}
As the higher-dimensional constructions reviewed below show,
the Shannon capacity of $C_7$ is strictly larger than $\sqrt{10}$. Determining how far the lower
bound can be pushed requires the construction of increasingly large
independent sets in higher strong powers of \(C_7\). The next section reviews the sequence of constructions that progressively
improved this lower bound.

\subsection{Progress before LLM-assisted search (1971-2019)}
\label{subsec:pre-llm}

The size-$10$ independent set in $C_7^{\boxtimes 2}$ described above gives $\Theta(C_7)\geq 10^{1/2}\approx 3.16228$. A natural way to improve this bound is to search for larger independent sets in higher strong powers of $C_7$. If $I\subseteq V(C_7^{\boxtimes d})$ is an independent set, then
$\Theta(C_7)\geq |I|^{1/d}$.
Thus, when comparing constructions across different dimensions, the relevant quantity is the normalized size $|I|^{1/d}$ rather than simply $|I|$. 

\vspace{10pt}
\noindent \textbf{[1971: $\Theta(C_7)\geq 343^{1/5}\approx 3.21410$]}: In 1971, Baumert, McEliece, Rodemich, Rumsey, Stanley, and Taylor~\cite{BaumertEtAl1971} found the independence number for $C_7^{\boxtimes 3}$
and showed that 
$\alpha(C_7^{\boxtimes 3})=33$. Their approach viewed a vertex $(x_1,x_2,x_3)\in\mathbb{Z}_7^3$ as the
corner of a $2\times2\times2$ cube in the discrete $7^3$-torus, so that
an independent set in $C_7^{\boxtimes 3}$ becomes a packing of disjoint
cubes. By slicing the torus into seven two-dimensional layers and using
$\alpha(C_7^{\boxtimes 2})=10$, they first obtained the upper bound
$\alpha(C_7^{\boxtimes 3})\leq35$. They then used a symmetry-reduced
exhaustive computer search over compatible layer packings to rule out
packings of sizes $35$ and $34$, while finding several packings of size
$33$. Thus, their construction led to the following lower bound on $\Theta(C_7)$: $\Theta(C_7)\geq 33^{1/3}\approx3.20753$. The same paper went on to give a stronger lower bound on the Shannon capacity
through an explicit construction in the fifth strong power $(d=5)$. Baumert et
al.~\cite{BaumertEtAl1971} showed that the $343=7^3$ vectors
\[
\left(
x_1,x_2,x_3,\,
2x_1+2x_2+2x_3,\,
2x_1+4x_2+6x_3
\right),
\qquad x_1,x_2,x_3\in\mathbb{Z}_7,
\]
form an independent set in $C_7^{\boxtimes 5}$. To see this, consider
the difference $(a,b,c)$ between the first three coordinates of two
codewords. If the two words were confusable, then
$a,b,c\in\{0,\pm1\}$ and also
$2(a+b+c)\in\{0,\pm1\}$ modulo $7$. This forces
$a+b+c\in\{0,\pm3\}$. The cases $a+b+c=\pm3$ are separated by the
fifth coordinate, while if $a+b+c=0$, any nonzero possibility is a
permutation of $(1,-1,0)$ and again gives a fifth-coordinate difference
outside $\{0,\pm1\}$. Hence distinct codewords are nonconfusable. This
construction yields $\Theta(C_7)\geq 343^{1/5}\approx 3.21410$, which was stronger than the bound obtained from the exact three-dimensional result.

\vspace{10pt}
\noindent \textbf{[2002: $\Theta(C_7)\geq 108^{1/4}\approx 3.22371$]}: Vesel and \v{Z}erovnik~\cite{VeselZerovnik2002} subsequently used simulated annealing to construct an independent set of size $108$ in $C_7^{\boxtimes 4}$, yielding $\Theta(C_7)\geq 108^{1/4}\approx 3.22371$.

\vspace{10pt}
\noindent \textbf{[2017: $\Theta(C_7)\geq 350^{1/5}\approx 3.22711$]}:
The next improvements came from the fifth strong power. Mathew and \"{O}sterg{\aa}rd~\cite{MathewOstergard2017} used stochastic search with prescribed symmetries to obtain an independent set of size $350$ in $C_7^{\boxtimes 5}$, giving $\Theta(C_7)\geq 350^{1/5}\approx 3.22711$.

\vspace{10pt}
\noindent \textbf{[2019: $\Theta(C_7)\geq 367^{1/5}\approx 3.257866$]}:
A substantially larger improvement was obtained by Polak and Schrijver~\cite{PolakSchrijver2019}. Using an auxiliary circular graph and computer search, they constructed an independent set
$
I\subseteq V(C_7^{\boxtimes 5})$, with $|I|=367$,
and hence established $\Theta(C_7)\geq 367^{1/5}\approx 3.257866$.


These advances increasingly relied on computational search, but the underlying paradigm remained the same: choose a dimension $d$, search for a large independent set in $C_7^{\boxtimes d}$, and use its cardinality to lower-bound $\Theta(C_7)$. The $367$-word construction remained the benchmark for subsequent progress. The developments described next introduced a different possibility: exploiting the \emph{internal structure} of a known independent set to  construct larger independent sets in higher dimensions.

\vspace{-3pt}
\subsection{Overview of Recent LLM-assisted Progress}\label{subsec:llm-assistedprogress}

The recent progress on $C_7$ began with the work of Itty, Rosin, Carstensen, and Reichman~\cite{IttyEtAl2026}. Starting from the $367$-word independent set $I_{0}\subseteq C_7^{\boxtimes 5}$ of Polak and Schrijver~\cite{PolakSchrijver2019}, the direct Cartesian product $I_{0}\times I_{0}$ gives an independent set of size $367^2=134689$ in $C_7^{\boxtimes 10}$. Itty et al. improved upon this by modifying the product construction locally: they remove eight carefully chosen words from $I_{0}$, leaving a set $B$ of size $359$, retain the large core $B\times B$, and replace the remaining part of the Cartesian product by two specially constructed families, each of size $8\cdot367$. The resulting independent set has size $359^2+2\cdot8\cdot367=134753$, and therefore yields $\Theta(C_7)\ge 134753^{1/10}>3.258020$. The authors report that this construction was discovered through iterative interactions with an LLM, which generated and executed search procedures that successively improved the size of the ten-dimensional code. This construction is important not only for the numerical improvement, but also because it revealed that the $367$-word code could be exploited in a more structured manner than by taking a simple Cartesian product, an observation that Gao subsequently developed into a general recursive construction as we describe next.

\vspace{-3pt}
\subsubsection{Gao's Product Construction}\label{subsec:GaoproductandC7}

Gao~\cite{Gao2026} observed that the construction of Itty et al.\ \cite{IttyEtAl2026} contains additional structure that can be preserved under further products. His key contribution was to isolate this structure, encode it through a small collection of parameters, and prove a product lemma showing that two such structured independent sets can be combined to produce another object of the same type. This converts the ten-dimensional improvement of Itty et al.\ from a one-time construction into a recursive mechanism that can be iterated to increasingly large dimensions. We next give a self-contained description of Gao's construction and product lemma. Alongside the general development, we use a simple running example based on $C_7$ to illustrate each of the constituent objects and to show how the product construction recovers an optimal $10$-word independent set in $C_7^{\boxtimes 2}$.

For vertices $u,v$ of a graph $G$, write $u\simeq v$ when $u=v$ or $u$ and $v$ are adjacent, so that $u\simeq v$ means that the two vertices are confusable. For $S\subseteq V(G)$, let
\begin{align}
N(S)=\{u\in V(G):u\simeq v\text{ for some }v\in S\}
\end{align}
denote the closed neighborhood of $S$. For a singleton, we write $N(v)=N(\{v\})$.

\begin{definition}[Private pair]
Let $I\subseteq V(G)$ be an independent set. A pair $(r,q)$ of vertices of $G$ is called a \emph{private pair} for $I$ if
\begin{align}
r\in I,\qquad q\notin I,\qquad N(q)\cap I=\{r\}.
\end{align}
The vertex $r$ is called the \emph{center} of the private pair, and $q$ is called its \emph{private neighbor}.
Equivalently, $q$ is confusable with exactly one vertex of $I$, namely $r$.
\end{definition}

\noindent \textit{Example $1$: To illustrate these definitions throughout this subsection, we use the following simple example. Consider the independent set $I=\{0,2,4\}\subseteq V(C_7)$. Take $r=0$ and $q=6$. The closed neighborhood of $6$ is $N(6)=\{5,6,0\}$, and hence $N(6)\cap I=\{0\}$. Therefore, $(0,6)$ is a private pair for $I$: the vertex $0$ is its center and $6$ is its private neighbor. We will continue to use this example as the remaining components of Gao's gadget and the subsequent product construction are introduced.}

\begin{definition}[Gao gadget and profile]\label{def:GaoGadget}
Let $G$ be a graph. A \emph{Gao gadget} in $G$ consists of an independent set $I\subseteq V(G)$ together with a selected collection of $t$ pairwise endpoint-disjoint private pairs
\begin{align}
\{(r_i,q_i)\}_{i=1}^{t},
\end{align}
two independent sets $P^{\mathrm H}$ and $P^{\mathrm V}$ forming complementary transversals of these private pairs, and an auxiliary independent set $X\subseteq V(G)$ satisfying
\begin{align}
X\cap N(P^{\mathrm H})\cap N(P^{\mathrm V})=\varnothing.
\end{align}
Here, complementary transversals means that, for every $i$, one endpoint of the pair $\{r_i,q_i\}$ belongs to $P^{\mathrm H}$ and the other belongs to $P^{\mathrm V}$.  Thus each private pair contributes exactly one endpoint to each transversal.  The final condition requires that no vertex of $X$ is simultaneously
confusable with a vertex of $P^{\mathrm H}$ and a vertex of
$P^{\mathrm V}$.
Associated with the selected private pairs, define
\begin{align}
R&=\{r_1,\ldots,r_t\}, \quad 
Q=\{q_1,\ldots,q_t\}, \quad 
B=I\setminus R.
\end{align}
Thus, $R$ is the set of centers, $Q$ is the set of private neighbors, and $B$ is the part of $I$ remaining after the selected centers are removed. The auxiliary set $X$ is then partitioned according to its interaction with the two transversals. Define
\begin{align}
X^0
&=X\setminus\left(N(P^{\mathrm H})\cup N(P^{\mathrm V})\right), \nonumber\\
X^{\mathrm H}
&=X\cap N(P^{\mathrm H}), \nonumber\\
X^{\mathrm V}
&=X\cap N(P^{\mathrm V}).
\end{align}
Since no point of $X$ is confusable with both transversals, these three sets are pairwise disjoint and
\begin{align}
X=X^0\mathbin{\dot\cup}X^{\mathrm H}\mathbin{\dot\cup}X^{\mathrm V}.
\end{align}
\begin{figure}[t]
    \centering
\includegraphics[width=0.95\linewidth]{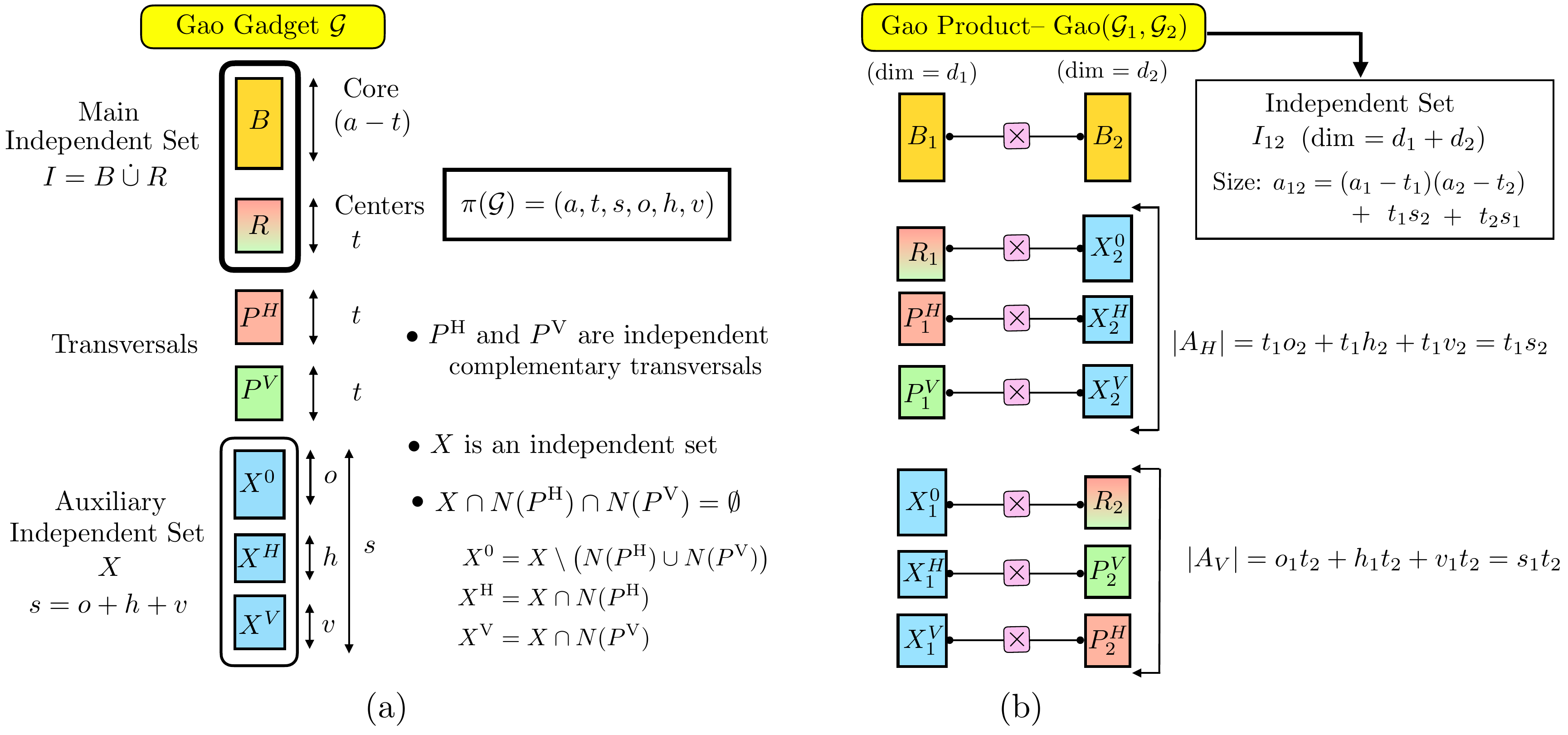}
    \caption{Schematic description of a Gao gadget and Gao's product construction. \textup{(a)} A Gao gadget $\mathcal G$ with six-parameter profile $\pi(\mathcal G)=(a,t,s,o,h,v)$, consisting of a main independent set $I=B\mathbin{\dot\cup}R$, two independent complementary transversals $P^{\mathrm H}$ and $P^{\mathrm V}$, and an auxiliary independent set $X=X^0\mathbin{\dot\cup}X^{\mathrm H}\mathbin{\dot\cup}X^{\mathrm V}$ satisfying $X\cap N(P^{\mathrm H})\cap N(P^{\mathrm V})=\emptyset$. \textup{(b)} Gao's product applied to gadgets $\mathcal G_1$ and $\mathcal G_2$. Each pink $\times$ denotes a Cartesian-product block. The three blocks $R_1\times X_2^0$, $P_1^{\mathrm H}\times X_2^{\mathrm H}$, and $P_1^{\mathrm V}\times X_2^{\mathrm V}$ form $A_{\mathrm H}$, while the three blocks $X_1^0\times R_2$, $X_1^{\mathrm H}\times P_2^{\mathrm V}$, and $X_1^{\mathrm V}\times P_2^{\mathrm H}$ form $A_{\mathrm V}$. Together with $B_1\times B_2$, these blocks form the independent set
$I_{12}=(B_1\times B_2)\mathbin{\dot\cup}A_{\mathrm H}\mathbin{\dot\cup}A_{\mathrm V}$
in dimension $d_1+d_2$, of size
$a_{12}=(a_1-t_1)(a_2-t_2)+t_1s_2+t_2s_1$.}
    \label{fig:gaogadget}
\end{figure} 
Thus, the vertices in $X^0$ are confusable with neither transversal, those in $X^{\mathrm H}$ are confusable with $P^{\mathrm H}$ but not with $P^{\mathrm V}$, and those in $X^{\mathrm V}$ are confusable with $P^{\mathrm V}$ but not with $P^{\mathrm H}$. The \emph{profile} of the Gao gadget is the six-tuple (see Fig.~\ref{fig:gaogadget} for an illustration):
\begin{align}
\pi(\mathcal G) \triangleq (a,t,s,o,h,v),
\end{align}
where
$a=|I|,  
s=|X|,  
o=|X^0|, 
h=|X^{\mathrm H}|, 
v=|X^{\mathrm V}|$,
and $s=o+h+v$. 
\end{definition}

\noindent \textit{Example $1$ (continued): Continuing the example above, the independent set $I=\{0,2,4\}$ together with the private pair $(0,6)$ gives $t=1$. We choose the complementary transversals $P^{\mathrm H}=\{0\}$ and $P^{\mathrm V}=\{6\}$, and take $X=\{1,3,5\}$. Relative to these transversals, $1$ is confusable with $P^{\mathrm H}$ but not with $P^{\mathrm V}$, $5$ is confusable with $P^{\mathrm V}$ but not with $P^{\mathrm H}$, and $3$ is confusable with neither. Hence $R=\{0\}$, $Q=\{6\}$, $B=\{2,4\}$, $X^{\mathrm H}=\{1\}$, $X^0=\{3\}$, and $X^{\mathrm V}=\{5\}$. The resulting Gao gadget therefore has profile $\pi(\mathcal G)=(3,1,3,1,1,1)$.}

\begin{lemma}[Gao's product lemma~\cite{Gao2026}] \label{lemma:GaoProduct}
Let $\mathcal G_1$ and $\mathcal G_2$ be Gao gadgets in graphs $G_1$ and $G_2$, respectively, with profiles
\begin{align}
\pi(\mathcal G_i)=(a_i,t_i,s_i,o_i,h_i,v_i),
\qquad i\in\{1,2\}.
\end{align}
For $x\in X_2$, define
\begin{align}
L_1(x)=
\begin{cases}
R_1, & x\in X_2^0,\\
P_1^{\mathrm H}, & x\in X_2^{\mathrm H},\\
P_1^{\mathrm V}, & x\in X_2^{\mathrm V},
\end{cases}
\end{align}
and for $y\in X_1$, define
\begin{align}
K_2(y)=
\begin{cases}
R_2, & y\in X_1^0,\\
P_2^{\mathrm V}, & y\in X_1^{\mathrm H},\\
P_2^{\mathrm H}, & y\in X_1^{\mathrm V}.
\end{cases}
\end{align}
Define the two routed families
\begin{align}
A_{\mathrm H}
&=\{(p,x):x\in X_2,\ p\in L_1(x)\}, \nonumber\\
A_{\mathrm V}
&=\{(y,q):y\in X_1,\ q\in K_2(y)\},
\end{align}
and let
\begin{align}
I_{12}
=
(B_1\times B_2)
\cup A_{\mathrm H}
\cup A_{\mathrm V}.
\end{align}
Then $I_{12}$ is an independent set in $G_1\boxtimes G_2$.

Moreover, $I_{12}$ carries the following selected collection of pairwise endpoint-disjoint private pairs:
\begin{align}
&\{((r_i,x),(q_i,x)):
1\leq i\leq t_1,\ x\in X_2^0\}
 \cup
\{((y,r_j),(y,q_j)):
y\in X_1^0,\ 1\leq j\leq t_2\}.
\end{align}
Complementary transversals for these private pairs are
\begin{align}
P_{12}^{\mathrm H}
&=
(P_1^{\mathrm H}\times X_2^0)
\cup
(X_1^0\times P_2^{\mathrm V}),
\nonumber\\
P_{12}^{\mathrm V}
&=
(P_1^{\mathrm V}\times X_2^0)
\cup
(X_1^0\times P_2^{\mathrm H}).
\end{align}
Taking
\begin{align}
X_{12}=X_1\times X_2
\end{align}
as the auxiliary independent set, its decomposition relative to the two propagated transversals is
\begin{align}
X_{12}^0
&=
(X_1^0\times X_2^0)
\mathbin{\dot\cup}
\left(
(X_1^{\mathrm H}\cup X_1^{\mathrm V})
\times
(X_2^{\mathrm H}\cup X_2^{\mathrm V})
\right),
\nonumber\\
X_{12}^{\mathrm H}
&=
(X_1^{\mathrm H}\times X_2^0)
\mathbin{\dot\cup}
(X_1^0\times X_2^{\mathrm V}),
\nonumber\\
X_{12}^{\mathrm V}
&=
(X_1^{\mathrm V}\times X_2^0)
\mathbin{\dot\cup}
(X_1^0\times X_2^{\mathrm H}).
\end{align}
Consequently, these objects form a new Gao gadget $\mathcal G_{12}$ in $G_1\boxtimes G_2$, whose profile is
\begin{align}
\pi(\mathcal G_{12})
=
(a_{12},t_{12},s_{12},o_{12},h_{12},v_{12}),
\end{align}
where
\begin{align}
a_{12}
&=(a_1-t_1)(a_2-t_2)+t_1s_2+s_1t_2,
\nonumber\\
t_{12}
&=t_1o_2+o_1t_2,
\nonumber\\
s_{12}
&=s_1s_2,
\nonumber\\
o_{12}
&=o_1o_2+(h_1+v_1)(h_2+v_2),
\nonumber\\
h_{12}
&=h_1o_2+o_1v_2,
\nonumber\\
v_{12}
&=v_1o_2+o_1h_2.
\end{align}
\end{lemma}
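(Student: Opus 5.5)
The plan is to verify everything directly from the coordinatewise description of confusability in $G_1\boxtimes G_2$: $(u,x)\simeq(u',x')$ if and only if $u\simeq u'$ and $x\simeq x'$. So two distinct words are nonconfusable as soon as one coordinate pair is nonconfusable.

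\textbf{Preliminary facts.} I would first record the consequences of the gadget axioms used repeatedly.
\begin{itemize}
\item[(i)] Every transversal is contained in $R_i\cup Q_i$, and $|R_i|=|P_i^{\mathrm H}|=|P_i^{\mathrm V}|=t_i$.
\item[(ii)] No vertex of $B_i$ is confusable with any vertex of $R_i\cup Q_i$. For $R_i$ this holds because $I_i$ is independent. For $Q_i$ it holds because $N(q)\cap I_i=\{r\}$.
\item[(iii)] By definition of the classes, a vertex of $X_i^0$ is confusable with nothing in $P_i^{\mathrm H}\cup P_i^{\mathrm V}=R_i\cup Q_i$. A vertex of $X_i^{\mathrm H}$ is not confusable with $P_i^{\mathrm V}$, and a vertex of $X_i^{\mathrm V}$ is not confusable with $P_i^{\mathrm H}$.
\item[(iv)] Since $X_i$ is independent, $x\simeq x'$ with $x,x'\in X_i$ forces $x=x'$.
\end{itemize}

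\textbf{Independence of $I_{12}$.} I would check this block by block.
\begin{itemize}
\item Within $B_1\times B_2$: use independence of $B_1$ and $B_2$.
\item Within $A_{\mathrm H}$: by (iv), two words either have nonconfusable second coordinates or share $x$. In the latter case their first coordinates are distinct elements of the single independent set $L_1(x)\in\{R_1,P_1^{\mathrm H},P_1^{\mathrm V}\}$. The argument for $A_{\mathrm V}$ is symmetric.
\item $B_1\times B_2$ against $A_{\mathrm H}$: the first coordinates are separated by (ii), since $L_1(x)\subseteq R_1\cup Q_1$. Against $A_{\mathrm V}$, the second coordinates are separated in the same way.
\item $A_{\mathrm H}$ against $A_{\mathrm V}$: this is the step where the crossed routing ($K_2$ swaps $\mathrm H$ and $\mathrm V$) is essential, and I expect it to be the main obstacle. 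Compare $(p,x)$ with $(y,q)$ by cases on the classes of $x\in X_2$ and $y\in X_1$.
  \begin{itemize}
  \item If $x\in X_2^0$, then $q\in R_2\cup Q_2$ is not confusable with $x$ by (iii). If $y\in X_1^0$, the same holds with the roles of the coordinates reversed.
  \item If $x\in X_2^{\mathrm H}$ and $y\in X_1^{\mathrm H}$, then $q\in P_2^{\mathrm V}$ and $x\not\simeq q$.
  \item If $x\in X_2^{\mathrm H}$ and $y\in X_1^{\mathrm V}$, then $p\in P_1^{\mathrm H}$ and $y\not\simeq p$.
  \item The two cases with $x\in X_2^{\mathrm V}$ are symmetric.
  \end{itemize}
\end{itemize}
Nonconfusability also shows that the three pieces of $I_{12}$ are pairwise disjoint; for $B_1\times B_2$ against $A_{\mathrm H}$ this also follows from $B_1\cap(R_1\cup Q_1)=\varnothing$. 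Disjointness is needed for counting $a_{12}$.

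\textbf{Private pairs and transversals.} For the pair $((r_i,x),(q_i,x))$ with $x\in X_2^0$, the center lies in $A_{\mathrm H}$ because $L_1(x)=R_1$. I would show $N((q_i,x))\cap I_{12}=\{(r_i,x)\}$ in three parts.
\begin{itemize}
\item Against $B_1\times B_2$: the first coordinate is separated by (ii).
\item Against $A_{\mathrm H}$: by (iv) a confusable word must be $(p,x)$ with $p\in R_1$, and the private-pair property forces $p=r_i$.
\item Against $A_{\mathrm V}$: the second coordinate $q\in R_2\cup Q_2$ is not confusable with $x\in X_2^0$ by (iii).
\end{itemize}
In particular $(q_i,x)\notin I_{12}$. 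The pairs $((y,r_j),(y,q_j))$ with $y\in X_1^0$ are handled symmetrically. Endpoint-disjointness is immediate from endpoint-disjointness in each factor together with distinctness of the $x$'s and $y$'s.

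Each of $P_{12}^{\mathrm H}$ and $P_{12}^{\mathrm V}$ clearly takes one endpoint from each pair. For independence of $P_{12}^{\mathrm H}$, each block is a product of independent sets. Across the two blocks, $x\in X_2^0$ is not confusable with $P_2^{\mathrm V}$. The argument for $P_{12}^{\mathrm V}$ is identical.

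\textbf{Decomposition of $X_{12}$ and the profile.} The set $X_{12}=X_1\times X_2$ is independent. Using (iv), $(y,x)\in N(P_1^{\mathrm H}\times X_2^0)$ exactly when $y\in X_1^{\mathrm H}$ and $x\in X_2^0$. Similarly, $(y,x)\in N(X_1^0\times P_2^{\mathrm V})$ exactly when $y\in X_1^0$ and $x\in X_2^{\mathrm V}$. This gives $X_{12}^{\mathrm H}$, and $X_{12}^{\mathrm V}$ follows likewise. These two sets are visibly disjoint, which yields the condition $X_{12}\cap N(P_{12}^{\mathrm H})\cap N(P_{12}^{\mathrm V})=\varnothing$. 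The complement of their union is the stated $X_{12}^0$.

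The profile then follows by counting. For instance, $|A_{\mathrm H}|=t_1(o_2+h_2+v_2)=t_1s_2$ by (i), and $|B_1\times B_2|=(a_1-t_1)(a_2-t_2)$. The formulas for $t_{12}$, $s_{12}$, $o_{12}$, $h_{12}$ and $v_{12}$ are read off the displayed disjoint unions.
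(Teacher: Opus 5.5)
Your proof is correct and complete. The four-way case split on the classes of $x\in X_2$ and $y\in X_1$ for $A_{\mathrm H}$ against $A_{\mathrm V}$ is exactly where the crossed routing of $K_2$ is needed. The private-pair check via $N((q_i,x))\cap I_{12}=\{(r_i,x)\}$ is sound, and so is the classification of $X_1\times X_2$ using $N(A\times B)=N(A)\times N(B)$ and $X_i\cap N(X_i^0)=X_i^0$. One phrase needs tightening. Endpoint-disjointness \emph{between} the two families of pairs does not come from distinctness of the $x$'s and $y$'s. It comes from $X_2^0\cap(R_2\cup Q_2)=\varnothing$, which follows from your fact (iii) because every vertex lies in its own closed neighborhood.

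The paper does not prove this lemma; it cites it from Gao, so the comparison is with nearby arguments in the paper.

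\textbf{Theorem~\ref{thm:heteroGao}.} Its proof, specialized to $J_0=J_{\mathrm H}=J_{\mathrm V}=X_L$, recomputes the decomposition of $X_1\times X_2$ essentially as you do, organized by the class of the right coordinate. However, it takes the independence of $I_{12}$ and the validity of the propagated private pairs directly from Gao's lemma.

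\textbf{Example~2 and Proposition~\ref{prop:BPZgeneralizedCombiningRule}.} These encode the product as the admissible rule $S_{2a}$. Independence of $I_{12}=F_B^{\mathrm{out}}\cup F_O^{\mathrm{out}}$, independence of the transversals, and the avoidance condition then reduce to finitely many label-level separations. These are checked once (in Lean) rather than by a vertex-level case analysis, which is more economical and extends to $m$-ary rules. But that route records only separations. It does not certify the private-pair property $N(q)\cap I=\{r\}$. Nor does it show that the output $N$, $A$, $D$ families are \emph{exactly} the neutral and one-sided classes; that needs the confusability direction, e.g.\ that each $y\in X_1^{\mathrm H}$ is confusable with some point of $P_1^{\mathrm H}$. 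Your direct argument establishes both, and both are needed for the lemma as stated, in particular for the exact values of $o_{12}$, $h_{12}$, $v_{12}$.
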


\begin{figure}[t]
    \centering
\includegraphics[width=0.95\linewidth]{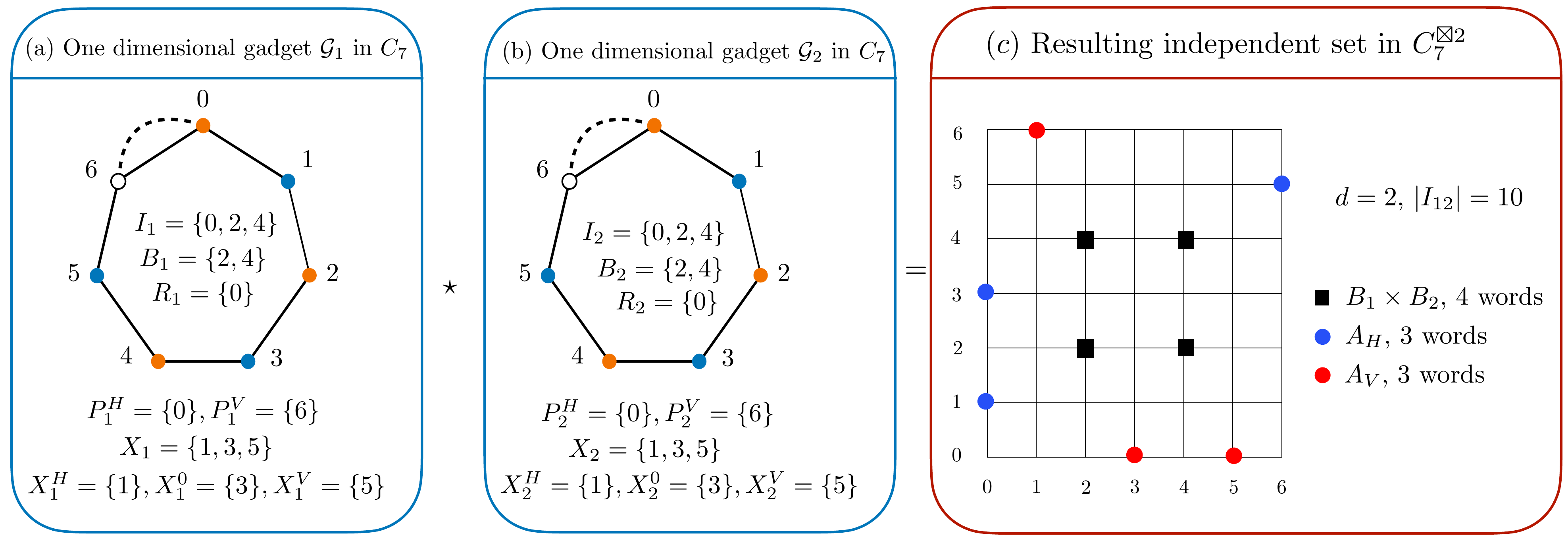}
    \caption{Illustration of Gao's product construction on two one-dimensional gadgets in $C_7$. The Cartesian-product core $B_1\times B_2$ is augmented by the routed families $A_H$ and $A_V$, producing an optimal $10$-word independent set in $C_7^{\boxtimes 2}$.}
    \label{fig:c7-gao-example}
\end{figure} 

\noindent\textit{Example $1$ (continued).}
\textit{We now apply Gao's product lemma to two copies of the gadget in our example, each with profile $\pi(\mathcal G)=(3,1,3,1,1,1)$. Recall that $B=\{2,4\}$, $R=\{0\}$, $P^{\mathrm H}=\{0\}$, $P^{\mathrm V}=\{6\}$, and $X^{\mathrm H}=\{1\}$, $X^0=\{3\}$, $X^{\mathrm V}=\{5\}$. The first routing map therefore gives $A_{\mathrm H}=\{(0,1),(0,3),(6,5)\}$, while the reversed routing in the second coordinate gives $A_{\mathrm V}=\{(1,6),(3,0),(5,0)\}$. Together with the Cartesian-product core $B_1\times B_2=\{(2,2),(2,4),(4,2),(4,4)\}$, this produces the $10$-word independent set $I_{12}=\{(2,2),(2,4),(4,2),(4,4),(0,1),(0,3),(6,5),(1,6),(3,0),(5,0)\}$ in $C_7^{\boxtimes 2}$. Thus, Gao's product construction recovers an optimal $10$-word code.
More importantly, the resulting code again carries the structure of a Gao gadget. Since $X_1^0=X_2^0=\{3\}$, the construction propagates the two private pairs $((0,3),(6,3))$ and $((3,0),(3,6))$. Taking $X_{12}=X_1\times X_2$ as the new auxiliary set, its three classes have sizes $|X_{12}^0|=5$, $|X_{12}^{\mathrm H}|=2$, and $|X_{12}^{\mathrm V}|=2$. Hence the resulting gadget has profile $\pi(\mathcal G_{12})=(10,2,9,5,2,2)$. This illustrates the central feature of Gao's lemma: the product does not merely produce a larger independent set, but produces another gadget of the same type, so that the construction can be applied recursively.}

\vspace{5pt}

\noindent
\textbf{Gao's Construction for Improving $C_7$}. For $C_7$, Gao constructs the five-dimensional base gadget from the $367$-word independent set $I_0\subseteq C_7^{\boxtimes5}$ of Polak and Schrijver~\cite{PolakSchrijver2019}. Gao uses the same eight pairs $(r_j,q_j)$ that appear in the construction of Itty et al.~\cite{IttyEtAl2026}, and verifies that each is a private pair for $I_0$. 
The two complementary transversals are
\begin{equation}
\begin{aligned}
P^{\mathrm H}
&= \{r_0,r_5,r_6\}\cup\{q_1,q_2,q_3,q_4,q_7\},\\
P^{\mathrm V}
&= \{q_0,q_5,q_6\}\cup\{r_1,r_2,r_3,r_4,r_7\}.
\end{aligned}
\end{equation}
For $w=(w_0,\ldots,w_4)\in\mathbb Z_7^5$, Gao defines the automorphism $T(w)=(2-w_1,w_3,w_0,2-w_2,w_4)$, takes $T(I_0)$, and replaces the single word $(2,4,6,3,5)$ by $(1,5,6,3,5)$ to obtain the auxiliary set $X$. A finite computer verification establishes that $I_0$ and $X$ are independent sets of size $367$, that the eight pairs are private, that the two transversals are independent, and that the points of $X$ split into $321$ points confusable with neither transversal, $26$ confusable only with $P^{\mathrm H}$, and $20$ confusable only with $P^{\mathrm V}$. Thus Gao's five-dimensional base gadget has profile
\begin{align}
\text{Profile of gadget~} \mathcal G_5:  \pi(\mathcal{G}_5)=(367,8,367,321,26,20).
\end{align}
Applying the product lemma to two copies of this gadget gives $(367-8)^2+2\cdot8\cdot367=134753$, reproducing the ten-dimensional construction of Itty et al. \cite{IttyEtAl2026}, but now with the additional gadget structure needed for further recursion.

Gao then iterates the product, using the output gadget from one stage as an input to subsequent stages. Let $\mathcal G_5$ denote the five-dimensional base gadget described above. Since the product of gadgets in dimensions $d_1$ and $d_2$ produces a new gadget in dimension $d_1+d_2$, Gao recursively constructs
\begin{align}
\mathcal G_{10} &= \operatorname{Gao}(\mathcal G_5,\mathcal G_5) \rightarrow 
\mathcal G_{15} = \operatorname{Gao}(\mathcal G_5,\mathcal G_{10}) \rightarrow 
\mathcal G_{25} = \operatorname{Gao}(\mathcal G_{10},\mathcal G_{15}), \nonumber\\
\mathcal G_{50} &= \operatorname{Gao}(\mathcal G_{25},\mathcal G_{25}) \rightarrow 
\mathcal G_{100} = \operatorname{Gao}(\mathcal G_{50},\mathcal G_{50}) \rightarrow 
\mathcal G_{200} = \operatorname{Gao}(\mathcal G_{100},\mathcal G_{100}),
\end{align}
where $\operatorname{Gao}(\mathcal{G}_1, \mathcal{G}_2)$ denotes the product operation of Gao's Lemma \ref{lemma:GaoProduct}. Thus the final $200$-dimensional gadget is obtained from forty copies of the original five-dimensional gadget. Gao selects this product tree by optimizing over the possible binary splits at each stage using exact integer computations. The resulting code in $C_7^{\boxtimes 200}$ yields
\begin{align}
\Theta(C_7)\geq 3.258789153908691016\ldots .
\end{align}
The important point for our purposes is the recursive nature of the construction: the product lemma does not merely produce a larger independent set, but produces another gadget of the same form, which can itself be combined again.

\paragraph{Stronger base gadget and Lean verification by BPZ.}

Buys, Polak, and Zuiddam~\cite{BuysPolakZuiddam2026} subsequently found a stronger five-dimensional base gadget. Specifically, Gao's base gadget has profile $(367,8,367,321, 26, 20)$, whereas their new gadget has profile $(367,8,367,322, 26, 19)$. Thus, the code size, number of private pairs, and auxiliary-set size remain unchanged, while the number of auxiliary words confusable with neither transversal increases from $321$ to $322$. They then apply the same recursive product construction, combining gadgets according to the sequence $1+1=2$, $1+2=3$, $2+3=5$, followed by repeated squaring $5+5=10$, $10+10=20$, and $20+20=40$. Starting from five-dimensional gadgets, this produces a gadget in $C_7^{\boxtimes 200}$ and improves Gao's bound to $\Theta(C_7)\geq 3.258805369885\ldots$. A further contribution of Buys, Polak, and Zuiddam is that the resulting bounds, including the validity of the base gadgets and the recursive product calculations, are fully formalized in Lean; Gao's original construction was instead verified by an accompanying Python program using exact finite checks and integer arithmetic.

\section{Heterogeneous Refinement of Recursive Gadgets}\label{sec:heterogeneousGao}

\paragraph{Motivation for Improving Gadget Structure.}
Gao's recursion highlights an important feature of structured product constructions: the usefulness of a gadget is not determined solely by the size of its current code. The other parameters of the gadget describe additional structure---private pairs, the auxiliary independent set, and its interaction with the two transversals---that is propagated into subsequent product steps. These parameters directly influence the size and structure of later descendants. Consequently, two gadgets with the same code size may behave very differently under further recursion, and a gadget that offers no immediate improvement in code size may nevertheless become more valuable at a later stage if it carries more favorable auxiliary structure.

This suggests that, when constructing the output of Gao's product lemma, one should ask not only how large the new independent set is, but also whether the accompanying gadget structure can be chosen more effectively for subsequent products. Our main observation is that Gao's construction imposes a certain uniformity on the auxiliary set of the product gadget that is not always necessary. By allowing different parts of this auxiliary structure to be built from different independent sets whenever the required separation conditions permit, we obtain a more flexible way of completing the same product code into a gadget. We refer to this additional freedom as \emph{heterogeneity}. Importantly, heterogeneity may leave the current code size unchanged while producing more favorable auxiliary structure, whose advantage becomes visible only in later recursive steps.

\paragraph{Overview of Heterogeneous Gadget Construction.}
Consider two Gao gadgets $\mathcal G_L$ and $\mathcal G_R$ in graphs $G_L$ and $G_R$, respectively, where the subscripts $L$ and $R$ denote the \textit{left} and \textit{right} gadgets. Gao's product lemma first constructs the product code $I_{LR}$, together with its propagated private pairs and complementary transversals. The auxiliary set is then chosen to be the Cartesian product $X_L\times X_R$. Using the decomposition $X_R=X_R^0\mathbin{\dot\cup}X_R^{\mathrm H}\mathbin{\dot\cup}X_R^{\mathrm V}$, Gao's choice can equivalently be written as
\begin{align}
X_L\times X_R
=
(X_L\times X_R^0)
\mathbin{\dot\cup}
(X_L\times X_R^{\mathrm H})
\mathbin{\dot\cup}
(X_L\times X_R^{\mathrm V}).
\end{align}
Thus, the same left-hand auxiliary set $X_L$ is used over all three classes of the right-hand auxiliary set. The first key observation is that these three classes play \textit{different roles} with respect to the propagated transversals. Recall that Gao's product construction gives
\begin{align}
P_{LR}^{\mathrm H}
&=
(P_L^{\mathrm H}\times X_R^0)
\cup
(X_L^0\times P_R^{\mathrm V}), \nonumber\\
P_{LR}^{\mathrm V}
&=
(P_L^{\mathrm V}\times X_R^0)
\cup
(X_L^0\times P_R^{\mathrm H}).
\end{align}
Consider first a point whose right coordinate lies in $X_R^{\mathrm H}$. Such a point is confusable with $P_R^{\mathrm H}$ but not with $P_R^{\mathrm V}$. Moreover, since $X_R$ is independent, a point of $X_R^{\mathrm H}$ is nonconfusable with every point of $X_R^0$. Consequently, its right coordinate already separates it from both product components involving $X_R^0$, as well as from the component $X_L^0\times P_R^{\mathrm V}$ of $P_{LR}^{\mathrm H}$. The only possible interaction that remains is with $X_L^0\times P_R^{\mathrm H}$, which belongs to $P_{LR}^{\mathrm V}$. Thus, regardless of the choice of its left coordinate, such a product point can be confusable with at most one of the two propagated transversals. The situation for a right coordinate in $X_R^{\mathrm V}$ is symmetric. Such a point is nonconfusable with $P_R^{\mathrm H}$ and with every point of $X_R^0$, so the only possible interaction is with $X_L^0\times P_R^{\mathrm V}$, which belongs to $P_{LR}^{\mathrm H}$. Again, the right coordinate itself guarantees that the product point cannot be confusable with both propagated transversals.  The class $X_R^0$ imposes a different requirement. If $x\in X_R^0$, then $x$ is nonconfusable with both $P_R^{\mathrm H}$ and $P_R^{\mathrm V}$, so the two components involving $P_R^{\mathrm H}$ and $P_R^{\mathrm V}$ create no restriction. A product point $(u,x)$ can therefore be confusable with both propagated transversals only if $u$ is confusable with both $P_L^{\mathrm H}$ and $P_L^{\mathrm V}$. Hence the left-hand codebook over $X_R^0$ need not equal $X_L$; it need only be an independent set satisfying the same auxiliary-set separation condition as $X_L$.

We may therefore choose three independent sets $J_0,J_{\mathrm H},J_{\mathrm V}\subseteq V(G_L)$, where $J_0$ additionally satisfies
\begin{align}
J_0\cap N(P_L^{\mathrm H})\cap N(P_L^{\mathrm V})
=
\varnothing,
\end{align}
and define the heterogeneous auxiliary set
\begin{align}
X_{LR}^{\mathrm{het}}
=
(J_0\times X_R^0)
\mathbin{\dot\cup}
(J_{\mathrm H}\times X_R^{\mathrm H})
\mathbin{\dot\cup}
(J_{\mathrm V}\times X_R^{\mathrm V}).
\end{align}
Thus, the auxiliary set of the output gadget is no longer forced to inherit the same left-hand codebook over all three parts of $X_R$. The three pieces above remain mutually separated because their right-hand coordinates belong to distinct subsets of the independent set $X_R$, while independence within each piece follows from the independence of $J_0$, $J_{\mathrm H}$, and $J_{\mathrm V}$. Hence this additional freedom affects only the auxiliary structure of the output gadget: Gao's product code $I_{LR}$, its propagated private pairs, and its transversals are left unchanged.

The potential gain now comes from three possibly distinct choices. The set $J_0$ may have a more favorable size and interaction with $P_L^{\mathrm H}$ and $P_L^{\mathrm V}$ than the original auxiliary set $X_L$, while $J_{\mathrm H}$ and $J_{\mathrm V}$ may have more favorable size and interaction with $X_L^0$. These quantities determine how the new auxiliary set is distributed among its three classes and therefore determine the profile of the resulting gadget. Gao's construction is recovered by taking $J_0=J_{\mathrm H}=J_{\mathrm V}=X_L$. Our first main result formalizes this heterogeneous refinement of Gao's recursive gadget construction.

\begin{theorem}[Heterogeneous refinement of Gao's product]
\label{thm:heteroGao}
Let $\mathcal G_L$ and $\mathcal G_R$ be Gao gadgets in graphs $G_L$ and $G_R$, with profiles
\begin{align}
\pi(\mathcal G_L)
&=(a_L,t_L,s_L,o_L,h_L,v_L), \quad 
\pi(\mathcal G_R)
=(a_R,t_R,s_R,o_R,h_R,v_R).
\end{align}
Apply Gao's product lemma to $\mathcal G_L$ and $\mathcal G_R$, and let $I_{LR}$ denote the resulting product code. Retain Gao's selected propagated private pairs and the complementary transversals
\begin{align}
P_{LR}^{\mathrm H}
&=
(P_L^{\mathrm H}\times X_R^0)
\cup
(X_L^0\times P_R^{\mathrm V}), \nonumber\\
P_{LR}^{\mathrm V}
&=
(P_L^{\mathrm V}\times X_R^0)
\cup
(X_L^0\times P_R^{\mathrm H}).
\end{align}

Let $J_0,J_{\mathrm H},J_{\mathrm V}\subseteq V(G_L)$ be independent sets, with $J_0$ additionally satisfying
\begin{align}
J_0
\cap
N_{G_L}(P_L^{\mathrm H})
\cap
N_{G_L}(P_L^{\mathrm V})
=
\varnothing.
\end{align}
Define the decomposition of $J_0$ relative to the two left-hand transversals by
\begin{align}
J_0^0
&=
J_0\setminus
\left(
N_{G_L}(P_L^{\mathrm H})
\cup
N_{G_L}(P_L^{\mathrm V})
\right), \nonumber\\
J_0^{\mathrm H}
&=
J_0\cap N_{G_L}(P_L^{\mathrm H}), \nonumber\\
J_0^{\mathrm V}
&=
J_0\cap N_{G_L}(P_L^{\mathrm V}),
\end{align}
so that
$J_0
=
J_0^0
\mathbin{\dot\cup}
J_0^{\mathrm H}
\mathbin{\dot\cup}
J_0^{\mathrm V}$.
Write
\begin{align}
j_0&=|J_0|, \qquad
o_0=|J_0^0|, \qquad
h_0=|J_0^{\mathrm H}|, \qquad
v_0=|J_0^{\mathrm V}|, \nonumber\\
j_{\mathrm H}
&=|J_{\mathrm H}|, \qquad
j_{\mathrm V}=|J_{\mathrm V}|, \nonumber\\
q_{\mathrm H}
&=
\left|J_{\mathrm H}\setminus N_{G_L}(X_L^0)\right|,
\qquad
q_{\mathrm V}
=
\left|J_{\mathrm V}\setminus N_{G_L}(X_L^0)\right|.
\end{align}

Define the heterogeneous auxiliary set
\begin{align}
X_{LR}^{\mathrm{het}}
=
(J_0\times X_R^0)
\mathbin{\dot\cup}
(J_{\mathrm H}\times X_R^{\mathrm H})
\mathbin{\dot\cup}
(J_{\mathrm V}\times X_R^{\mathrm V}).
\end{align}
Then $X_{LR}^{\mathrm{het}}$ is an independent set and satisfies
\begin{align}
X_{LR}^{\mathrm{het}}
\cap
N(P_{LR}^{\mathrm H})
\cap
N(P_{LR}^{\mathrm V})
=
\varnothing.
\end{align}
Consequently, the unchanged code $I_{LR}$, together with Gao's selected propagated private pairs and transversals and the auxiliary set $X_{LR}^{\mathrm{het}}$, forms a Gao gadget $\mathcal G_{LR}^{\mathrm{het}}$ in $G_L\boxtimes G_R$.
The code size and the selected private-pair count are unchanged from Gao's product:
\begin{align}
a_{LR}
&=
(a_L-t_L)(a_R-t_R)+t_Ls_R+s_Lt_R, \nonumber\\
t_{LR}
&=
t_Lo_R+o_Lt_R.
\end{align}
The parameters of the heterogeneous auxiliary set are
\begin{align}
s_{LR}^{\mathrm{het}}
&=
j_0o_R+j_{\mathrm H}h_R+j_{\mathrm V}v_R, \nonumber\\
o_{LR}^{\mathrm{het}}
&=
o_0o_R+q_{\mathrm H}h_R+q_{\mathrm V}v_R, \nonumber\\
h_{LR}^{\mathrm{het}}
&=
h_0o_R+(j_{\mathrm V}-q_{\mathrm V})v_R, \nonumber\\
v_{LR}^{\mathrm{het}}
&=
v_0o_R+(j_{\mathrm H}-q_{\mathrm H})h_R.
\end{align}
Hence $\pi(\mathcal G_{LR}^{\mathrm{het}})
=
\left(
a_{LR},
t_{LR},
s_{LR}^{\mathrm{het}},
o_{LR}^{\mathrm{het}},
h_{LR}^{\mathrm{het}},
v_{LR}^{\mathrm{het}}
\right)$.
\end{theorem}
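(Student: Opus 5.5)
Since Lemma~\ref{lemma:GaoProduct} already shows that $I_{LR}$ is independent, that the propagated pairs are private and pairwise endpoint-disjoint, and that $P_{LR}^{\mathrm H},P_{LR}^{\mathrm V}$ are independent complementary transversals, the only new claims concern the auxiliary set. The plan is to verify four things in order. First, $X^{\mathrm{het}}_{LR}$ is independent. Second, it satisfies the separation condition. Third, its three classes are correctly identified. Fourth, the cardinalities follow. Throughout I use that $(u,x)\simeq(u',x')$ in $G_L\boxtimes G_R$ iff $u\simeq u'$ and $x\simeq x'$. I also use that the three classes $X_R^0,X_R^{\mathrm H},X_R^{\mathrm V}$ lie in the independent set $X_R$, so a point of one class is confusable with no \emph{other} point of $X_R$.

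For independence, take two distinct points $(u,x),(u',x')$ of $X^{\mathrm{het}}_{LR}$. If $x\neq x'$, they are separated by the right coordinate, since $X_R$ is independent. If $x=x'$, both left coordinates lie in the same $J_\ast$, so $u\ne u'$ are nonconfusable by independence of $J_\ast$.

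For separation and classification I compute, for each piece, which of the four blocks of the transversals a point $(u,x)$ can touch.

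\emph{Case $x\in X_R^0$.} Here $x\not\simeq P_R^{\mathrm H}\cup P_R^{\mathrm V}$, so the blocks $X_L^0\times P_R^{\mathrm V}$ and $X_L^0\times P_R^{\mathrm H}$ are never hit. The point $x$ itself lies in $X_R^0$, so $(u,x)\in N(P_{LR}^{\mathrm H})$ iff $u\in N(P_L^{\mathrm H})$, and likewise for $\mathrm V$. The hypothesis on $J_0$ then gives separation, and the classes are $J_0^\ast\times X_R^0$.

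\emph{Case $x\in X_R^{\mathrm H}$.} The point $x$ is not confusable with any element of $X_R^0$ (independence of $X_R$, $x\notin X_R^0$), nor with $P_R^{\mathrm V}$. So the only block that can be hit is $X_L^0\times P_R^{\mathrm H}\subseteq P_{LR}^{\mathrm V}$, and it is hit iff $u\in N(X_L^0)$, since $x\simeq$ some element of $P_R^{\mathrm H}$ by definition. Hence $J_{\mathrm H}\times X_R^{\mathrm H}$ contributes $q_{\mathrm H}h_R$ points to the $0$-class and $(j_{\mathrm H}-q_{\mathrm H})h_R$ points to the $\mathrm V$-class, and never to both.

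\emph{Case $x\in X_R^{\mathrm V}$.} This is symmetric, with the possible hit on $X_L^0\times P_R^{\mathrm V}\subseteq P_{LR}^{\mathrm H}$.

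The step needing most care is the "iff" in these cases. For instance, one needs that $x\in X_R^{\mathrm H}$ is confusable with some element of $P_R^{\mathrm H}$ (true by definition), and that this single right coordinate pairs with $u\simeq y$ for some $y\in X_L^0$. Both directions hold, because membership in $N$ of a product block just requires a single witness in each coordinate. Summing the cases gives the stated $s,o,h,v$ formulas, using disjointness of the pieces. The parameters $a_{LR}$ and $t_{LR}$ are unchanged because the code, pairs and transversals are exactly Gao's.
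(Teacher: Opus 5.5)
Your proposal is correct and follows essentially the same route as the paper's proof. Both invoke Gao's lemma for the unchanged code, pairs and transversals, get independence of $X_{LR}^{\mathrm{het}}$ from separation in the right coordinate, and run the same three-case analysis ($x\in X_R^0$, $X_R^{\mathrm H}$, $X_R^{\mathrm V}$) of which transversal blocks each piece can touch to obtain the class decomposition and the $s,o,h,v$ counts; your pointwise $x=x'$ versus $x\neq x'$ phrasing of independence is only a cosmetic difference from the paper's blockwise argument.
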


\begin{proof}
By Gao's product lemma, $I_{LR}$ is an independent set, and the selected
propagated private pairs and the transversals $P_{LR}^{\mathrm H}$ and
$P_{LR}^{\mathrm V}$ are valid. Since these objects are unchanged, it
remains only to verify that $X_{LR}^{\mathrm{het}}$ is a valid auxiliary
independent set and to determine its decomposition relative to the two
propagated transversals.

We first show that $X_{LR}^{\mathrm{het}}$ is independent. Each of the
three pieces $J_0\times X_R^0$, $J_{\mathrm H}\times X_R^{\mathrm H}$,
and $J_{\mathrm V}\times X_R^{\mathrm V}$ is independent, since both
factors in each Cartesian product are independent. Moreover, the three
right-hand classes $X_R^0$, $X_R^{\mathrm H}$, and $X_R^{\mathrm V}$
are disjoint subsets of the independent set $X_R$. Hence points belonging
to different pieces are already nonconfusable in the right coordinate.
Therefore $X_{LR}^{\mathrm{het}}$ is independent.

We next determine how points of $X_{LR}^{\mathrm{het}}$ interact with the
propagated transversals. Consider first a point
$(u,x)\in J_0\times X_R^0$. Since $x\in X_R^0$, it is nonconfusable
with every point of both $P_R^{\mathrm H}$ and $P_R^{\mathrm V}$.
Consequently, the components $X_L^0\times P_R^{\mathrm V}$ and
$X_L^0\times P_R^{\mathrm H}$ create no interaction. On the other hand,
because $x\in X_R^0$, the point $(u,x)$ is confusable with some point of
$P_L^{\mathrm H}\times X_R^0$ if and only if
$u\in N_{G_L}(P_L^{\mathrm H})$, and similarly it is confusable with
some point of $P_L^{\mathrm V}\times X_R^0$ if and only if
$u\in N_{G_L}(P_L^{\mathrm V})$. Since $J_0$ satisfies
\begin{align}
J_0
\cap
N_{G_L}(P_L^{\mathrm H})
\cap
N_{G_L}(P_L^{\mathrm V})
=
\varnothing,
\end{align}
no point in $J_0\times X_R^0$ is confusable with both propagated
transversals. More precisely, points in
$J_0^0\times X_R^0$, $J_0^{\mathrm H}\times X_R^0$, and
$J_0^{\mathrm V}\times X_R^0$ belong respectively to the new classes
$X_{LR}^0$, $X_{LR}^{\mathrm H}$, and $X_{LR}^{\mathrm V}$.

Now consider a point
$(u,x)\in J_{\mathrm H}\times X_R^{\mathrm H}$. Since
$x\in X_R^{\mathrm H}\subseteq X_R$ and $X_R$ is independent,
$x$ is nonconfusable with every point of $X_R^0$. Thus $(u,x)$ cannot
be confusable with any point of either
$P_L^{\mathrm H}\times X_R^0$ or
$P_L^{\mathrm V}\times X_R^0$. Furthermore,
$x\in X_R^{\mathrm H}$ means that $x$ is confusable with at least one
point of $P_R^{\mathrm H}$ and with no point of
$P_R^{\mathrm V}$. Hence $(u,x)$ cannot be confusable with any point of
$X_L^0\times P_R^{\mathrm V}\subseteq P_{LR}^{\mathrm H}$.
Its only possible interaction is with
$X_L^0\times P_R^{\mathrm H}\subseteq P_{LR}^{\mathrm V}$, and such an
interaction occurs if and only if
$u\in N_{G_L}(X_L^0)$. Therefore the $q_{\mathrm H}$ vertices of
$J_{\mathrm H}\setminus N_{G_L}(X_L^0)$ give points confusable with
neither propagated transversal, while the remaining
$j_{\mathrm H}-q_{\mathrm H}$ vertices give points confusable only with
$P_{LR}^{\mathrm V}$.

The case
$(u,x)\in J_{\mathrm V}\times X_R^{\mathrm V}$ is symmetric. Since
$x\in X_R^{\mathrm V}$, it is nonconfusable with every point of
$X_R^0$, is confusable with at least one point of
$P_R^{\mathrm V}$, and is nonconfusable with every point of
$P_R^{\mathrm H}$. Hence its only possible interaction is with
$X_L^0\times P_R^{\mathrm V}\subseteq P_{LR}^{\mathrm H}$, and such an
interaction occurs if and only if
$u\in N_{G_L}(X_L^0)$. Thus the $q_{\mathrm V}$ vertices of
$J_{\mathrm V}\setminus N_{G_L}(X_L^0)$ contribute to the new class
$X_{LR}^0$, while the remaining $j_{\mathrm V}-q_{\mathrm V}$ vertices
contribute to $X_{LR}^{\mathrm H}$.

It follows that no point of $X_{LR}^{\mathrm{het}}$ is confusable with
both propagated transversals, and hence
\begin{align}
X_{LR}^{\mathrm{het}}
\cap
N_{G_L\boxtimes G_R}(P_{LR}^{\mathrm H})
\cap
N_{G_L\boxtimes G_R}(P_{LR}^{\mathrm V})
=
\varnothing.
\end{align}
Thus $X_{LR}^{\mathrm{het}}$ is a valid auxiliary set, and together with
Gao's unchanged product code, selected private pairs, and transversals it
defines a Gao gadget.

More explicitly, the three auxiliary classes are
\begin{align}
X_{LR}^0
&=
(J_0^0\times X_R^0)
\mathbin{\dot\cup}
\left(
(J_{\mathrm H}\setminus N_{G_L}(X_L^0))
\times X_R^{\mathrm H}
\right)
\mathbin{\dot\cup}
\left(
(J_{\mathrm V}\setminus N_{G_L}(X_L^0))
\times X_R^{\mathrm V}
\right),
\nonumber\\
X_{LR}^{\mathrm H}
&=
(J_0^{\mathrm H}\times X_R^0)
\mathbin{\dot\cup}
\left(
(J_{\mathrm V}\cap N_{G_L}(X_L^0))
\times X_R^{\mathrm V}
\right),
\nonumber\\
X_{LR}^{\mathrm V}
&=
(J_0^{\mathrm V}\times X_R^0)
\mathbin{\dot\cup}
\left(
(J_{\mathrm H}\cap N_{G_L}(X_L^0))
\times X_R^{\mathrm H}
\right).
\end{align}
Taking cardinalities gives
\begin{align}
s_{LR}^{\mathrm{het}}
&=
j_0o_R+j_{\mathrm H}h_R+j_{\mathrm V}v_R,
\nonumber\\
o_{LR}^{\mathrm{het}}
&=
o_0o_R+q_{\mathrm H}h_R+q_{\mathrm V}v_R,
\nonumber\\
h_{LR}^{\mathrm{het}}
&=
h_0o_R+(j_{\mathrm V}-q_{\mathrm V})v_R,
\nonumber\\
v_{LR}^{\mathrm{het}}
&=
v_0o_R+(j_{\mathrm H}-q_{\mathrm H})h_R.
\end{align}
Finally, since neither the product code nor the selected propagated private
pairs have been changed, their parameters remain
\begin{align}
a_{LR}
&=
(a_L-t_L)(a_R-t_R)+t_Ls_R+s_Lt_R,
\nonumber\\
t_{LR}
&=
t_Lo_R+o_Lt_R.
\end{align}
These are precisely the claimed parameters of
$\mathcal G_{LR}^{\mathrm{het}}$. Taking $J_0=J_{\mathrm H}=J_{\mathrm V}=X_L$ recovers Gao's original auxiliary set $X_L\times X_R$ and the propagation formulas in Lemma~\ref{lemma:GaoProduct}.
\end{proof}

\begin{remark}[\textbf{Compact notation for product constructions}]
For the remainder of the paper, we write
\begin{align}
{\color{red}\operatorname{Gao}}(\mathcal{G}_L,\mathcal{G}_R)
[X_L;X_R],
\end{align}
for Gao's product of the gadgets $\mathcal{G}_L$ and $\mathcal{G}_R$ using the left and right auxiliary sets $X_L$ and $X_R$, respectively. Its output auxiliary set is $X_L\times X_R$.
For the heterogeneous refinement presented in Theorem~\ref{thm:heteroGao}, we write
\begin{align}
{\color{blue}\operatorname{HetGao}}(\mathcal{G}_L,\mathcal{G}_R)
\bigl[(J_0,J_{\mathrm H},J_{\mathrm V});X_R\bigr],
\end{align}
where $X_R=X_R^0\mathbin{\dot\cup}X_R^{\mathrm H}\mathbin{\dot\cup}X_R^{\mathrm V}$ is the fixed right-hand auxiliary set, and $J_0,J_{\mathrm H},J_{\mathrm V}\subseteq V(G_L)$ are the three left-hand codebooks used over $X_R^0,X_R^{\mathrm H},X_R^{\mathrm V}$, respectively. The output auxiliary set is $(J_0\times X_R^0)\mathbin{\dot\cup}(J_{\mathrm H}\times X_R^{\mathrm H})\mathbin{\dot\cup}(J_{\mathrm V}\times X_R^{\mathrm V})$. The main product code, propagated private pairs, and propagated transversals are exactly those of the ordinary Gao product formed using $X_L$ and $X_R$; the sets $J_0,J_{\mathrm H},J_{\mathrm V}$ enter only in the choice of the output auxiliary set, and hence in the parameters $s,o,h,v$ of the resulting gadget. Gao's ordinary product is recovered by taking $J_0=J_{\mathrm H}=J_{\mathrm V}=X_L$.
\end{remark}

\begin{remark}[\textbf{Two orientations of heterogeneity}]
The heterogeneous refinement can be applied in two orientations. In the formulation of Theorem~\ref{thm:heteroGao}, the right-hand auxiliary set $X_R$ is decomposed into $X_R^0$, $X_R^{\mathrm H}$, and $X_R^{\mathrm V}$, and three possibly different left-hand codebooks $J_0$, $J_{\mathrm H}$, and $J_{\mathrm V}$ are used over these classes. Alternatively, one may exchange the roles of the two input gadgets: decompose $X_L$ into $X_L^0$, $X_L^{\mathrm H}$, and $X_L^{\mathrm V}$ and choose three corresponding right-hand codebooks, with the codebook used over $X_L^0$ satisfying the auxiliary-set condition with respect to $P_R^{\mathrm H}$ and $P_R^{\mathrm V}$. Equivalently, this second orientation is obtained by applying Theorem~\ref{thm:heteroGao} to $(\mathcal G_R,\mathcal G_L)$ and then swapping the two coordinate blocks. The two orientations need not yield the same propagated profile, and both can therefore be useful in recursive optimization. 
\end{remark}

\begin{remark}[\textbf{On two-sided heterogeneity}]\label{remark:twosided}
Theorem~\ref{thm:heteroGao} varies the left-hand codebook while retaining the fixed right-hand auxiliary set $X_R$. This fixed side guarantees that the three product blocks are mutually separated, since $X_R^0$, $X_R^{\mathrm H}$, and $X_R^{\mathrm V}$ are disjoint subsets of the same independent set $X_R$. If unrelated codebooks were allowed on both sides, this automatic separation would no longer hold, and additional compatibility conditions would be needed to ensure that every pair of product blocks is separated in at least one coordinate. Thus, a genuinely two-sided heterogeneous construction would require additional coordination beyond the hypotheses of Theorem~\ref{thm:heteroGao}. We leave the development of such a two-sided extension, including the necessary compatibility conditions, for future work.
\end{remark}

\subsection{Warmup: Improving Gao's Recursion for $C_7$ via Heterogeneity}\label{sec:heterogeneousGaoforC7}

The profile of a Gao gadget should be viewed as recording the information that the gadget carries forward to later stages of the recursion. The parameter $a$ records the size of the current main independent set, while $t$ records private-pair structure that can create additional main-code words in subsequent products. The auxiliary set is further divided into the three parts counted by
$o$, $h$, and $v$ because these parts behave differently under later product constructions. In particular, the $\mathrm H$- and $\mathrm V$-parts already satisfy one of the two relevant avoidance conditions, and hence admit greater freedom in Theorem~\ref{thm:heteroGao}. Thus, changing the auxiliary profile matters not because any one of $o$, $h$, or $v$ is intrinsically preferable, but because it changes the choices available at subsequent stages of the recursion. Leveraging this distinction, we next show how Theorem~\ref{thm:heteroGao} can be inserted into Gao's recursive framework to obtain a stronger sequence of gadgets. As shown in Fig.~\ref{fig:gao-vs-het}, the heterogeneous construction not only replaces selected Gao products by more flexible ones, but also makes a different recursion tree advantageous. We will see that these changes propagate through the recursion and ultimately yield a larger main independent set in dimension $d=200$, improving the lower bounds obtained by Gao~\cite{Gao2026} and by the first refinement of Buys, Polak, and Zuiddam (BPZ)~\cite{BuysPolakZuiddam2026}. The purpose of this subsection is to give a transparent first application of Theorem~\ref{thm:heteroGao} within Gao's binary recursion. We do not attempt to optimize over all admissible codebook choices or all binary product trees; the construction below is chosen to make the heterogeneous mechanism and its recursive effect explicit.

We begin with the strengthened five-dimensional ($d=5$) Gao gadget used by BPZ~\cite{BuysPolakZuiddam2026}. In the notation of Definition~\ref{def:GaoGadget}, its profile is
\begin{align}
\pi(\mathcal G_5)
=
(a_5,t_5,s_5,o_5,h_5,v_5)
=
(367,8,367,322,26,19),
\end{align}
where $a_5$ is the size of the main independent set, $t_5$ is the number of selected private pairs, $s_5$ is the size of the auxiliary independent set, and $o_5$, $h_5$, and $v_5$ are the sizes of its neutral, $\mathrm H$-side, and $\mathrm V$-side parts, respectively.

\vspace{5pt}
\noindent \textbf{\underline{Step $1$}}. Applying Gao's product (Lemma~\ref{lemma:GaoProduct}), to two copies of $\mathcal G_5$,
\begin{align}
\mathcal G_{10}
=
{\color{red}\operatorname{Gao}}(\mathcal G_5,\mathcal G_5)
[X_5;X_5],
\end{align}
gives the ten-dimensional gadget with profile
\begin{align}
\pi(\mathcal G_{10})
=
(a_{10},t_{10},s_{10},o_{10},h_{10},v_{10})
=
(134753,5152,134689,105709,14490,14490).
\end{align}
Let $I_{10}$ denote its $a_{10}=134753$-word main code, and let $X_{10}=X_5\times X_5$ denote its $s_{10}=134689$-word auxiliary set.

\begin{figure}[t]
\centering
\begin{minipage}[t]{0.45\textwidth}
\centering
\textbf{Gao/BPZ recursion}\\[2.0em]
\fbox{%
\parbox{0.7\linewidth}{%
\footnotesize
\begin{align*}
\mathcal G_{10}
&={\color{red}\operatorname{Gao}}(\mathcal G_5,\mathcal G_5),\\
\mathcal G_{15}
&={\color{red}\operatorname{Gao}}(\mathcal G_5,\mathcal G_{10}),\\
\mathcal G_{25}
&={\color{red}\operatorname{Gao}}(\mathcal G_{10},\mathcal G_{15}),\\
\mathcal G_{50}
&={\color{red}\operatorname{Gao}}(\mathcal G_{25},\mathcal G_{25}),\\
\mathcal G_{100}
&={\color{red}\operatorname{Gao}}(\mathcal G_{50},\mathcal G_{50}),\\
\mathcal G_{200}
&={\color{red}\operatorname{Gao}}(\mathcal G_{100},\mathcal G_{100}).
\end{align*}
}%
}
\end{minipage}
\hfill
\begin{minipage}[t]{0.47\textwidth}
\centering
\textbf{Heterogeneous recursion}\\[0.2em]
\fbox{%
\parbox{0.92\linewidth}{%
\footnotesize
\begin{align*}
(\text{Step $1$})~~\mathcal G_{10}
&={\color{red}\operatorname{Gao}}(\mathcal G_5,\mathcal G_5),\\
(\text{Step $2$})~~\mathcal G_{15}^{\mathrm{het}}
&={\color{blue}\operatorname{HetGao}}(\mathcal G_{10},\mathcal G_5),\\
(\text{Step $3$})~~\mathcal G_{25}^{\mathrm{het}}
&={\color{blue}\operatorname{HetGao}}(\mathcal G_{15}^{\mathrm{het}},
\mathcal G_{10}),\\
(\text{Step $4$})~~\mathcal G_{40}^{\mathrm{het}}
&={\color{blue}\operatorname{HetGao}}(\mathcal G_{15}^{\mathrm{het}},
\mathcal G_{25}^{\mathrm{het}}),\\
(\text{Step $5$})~~\mathcal G_{30}^{\mathrm{het}}
&={\color{blue}\operatorname{HetGao}}(\mathcal G_{15}^{\mathrm{het}},
\mathcal G_{15}^{\mathrm{het}}),\\
(\text{Step $6$})~~\mathcal G_{60}^{\mathrm{het}}
&={\color{red}\operatorname{Gao}}(\mathcal G_{30}^{\mathrm{het}},
\mathcal G_{30}^{\mathrm{het}}),\\
(\text{Step $7$})~~\mathcal G_{100}^{\mathrm{het}}
&={\color{blue}\operatorname{HetGao}}(\mathcal G_{60}^{\mathrm{het}},
\mathcal G_{40}^{\mathrm{het}}),\\
(\text{Step $8$})~~\mathcal G_{200}^{\mathrm{het}}
&={\color{red}\operatorname{Gao}}(\mathcal G_{100}^{\mathrm{het}},
\mathcal G_{100}^{\mathrm{het}}).
\end{align*}
}%
}
\end{minipage}
\caption{Comparison of Gao/BPZ \cite{Gao2026, BuysPolakZuiddam2026} recursion (described in Section~\ref{subsec:GaoproductandC7}) and the heterogeneous recursion presented in Section~\ref{sec:heterogeneousGaoforC7}. The heterogeneous construction changes both selected product rules and the recursion tree: the improved fifteen-dimensional gadget is reused to create two branches, which are combined at dimension $100$ before the final Gao product. For readability, the bracketed auxiliary data are suppressed in the schematic and specified in the text below.}
\label{fig:gao-vs-het}
\vspace{-10pt}
\end{figure}
\vspace{5pt}
\noindent \textbf{\underline{Step $2$}}. The first opportunity to exploit Theorem~\ref{thm:heteroGao} occurs in the next step, when $\mathcal G_{10}$ is combined with $\mathcal G_5$. In Gao's product, the same left-hand auxiliary set $X_{10}$ is used throughout the auxiliary set of $\mathcal G_5$. Theorem~\ref{thm:heteroGao} allows us instead to use the decomposition
$X_5
=
X_5^0~\dot\cup~X_5^{\mathrm H}~\dot\cup~X_5^{\mathrm V}$,
and make different choices on its three parts. We retain $X_{10}$ on the neutral part $X_5^0$, while on the $\mathrm H$- and $\mathrm V$-parts we may use a full $a_{10}$-word independent set. The independent set $J^+\subseteq V(C_7^{\boxtimes 10})$ is obtained from the automorphic image $(T\times T)(I_{10})$, followed by eight exchanges along selected private pairs, where
\begin{align}
T(w_0,w_1,w_2,w_3,w_4)
=
(2-w_1,w_3,w_0,2-w_2,w_4)
\pmod 7.
\end{align}
The eight exchanges are chosen so that the resulting set remains independent, and it satisfies
\begin{align}
|J^+|
=
134753,
\qquad
q_{10}
:=
\left|J^+\setminus N(X_{10}^0)\right|
=
27488.
\end{align}

The immediate effect on the auxiliary-set size is easy to see. The $o_5=322$ neutral points still use the $s_{10}=134689$ words of $X_{10}$, whereas the $h_5+v_5=45$ one-sided points may use all $a_{10}=134753$ words of $J^+$. Hence
\begin{align}
s_{15}=
s_{10}o_5+a_{10}(h_5+v_5)=
134689\cdot322+134753\cdot45
=
49433743.
\end{align}
Compared with using $X_{10}$ on all three parts, the increase is
\begin{align}
(a_{10}-s_{10})(h_5+v_5)
=
64\cdot45
=
2880.
\end{align}
More importantly for what follows, the value $q_{10}$ controls how these choices are distributed among the neutral, $\mathrm H$-side, and $\mathrm V$-side parts of the resulting gadget, and hence affects the information available to the next stages of the recursion.

Denote the gadget obtained from this application of Theorem~\ref{thm:heteroGao} by
\begin{align}
\mathcal G_{15}^{\mathrm{het}}
=
{\color{blue}\operatorname{HetGao}}(\mathcal G_{10},\mathcal G_5)
\bigl[(X_{10},J^+,J^+);X_5\bigr].
\end{align}
Its profile is
\begin{align}
\pi(\mathcal G_{15}^{\mathrm{het}})
=
(49495055,2504616,49433743,35275258,6703815,7454670).
\end{align}
Let $X_{15}$ denote the auxiliary set of $\mathcal G_{15}^{\mathrm{het}}$. Notice that the size $a_{15}=49495055$ of the main independent set is the same as in the corresponding ordinary Gao product. The gain at this stage lies instead in the auxiliary structure carried forward by $\mathcal G_{15}^{\mathrm{het}}$. Since the output of Theorem~\ref{thm:heteroGao} is itself again a Gao gadget, this modified structure can be exploited in subsequent applications of the theorem.

Let $I_{15}$ denote the main code of $\mathcal G_{15}^{\mathrm{het}}$ and let $X_{15}^0$ denote the neutral part of its auxiliary set. To use $\mathcal G_{15}^{\mathrm{het}}$ as the left gadget in later heterogeneous steps, define
\begin{align}
J_{15}
=
(T\times T\times T)(I_{15}),
\end{align}
where $T$ is applied independently to the three five-coordinate blocks. Since $T\times T\times T$ is an automorphism of $C_7^{\boxtimes 15}$, $J_{15}$ is an independent set with
\begin{align}
|J_{15}|
=
|I_{15}|
=
49495055.
\end{align}
For this placement we obtain the certified count
\begin{align}
q_{15}
:=
\left|J_{15}\setminus N(X_{15}^0)\right|
=
12872271.
\end{align}
No exchanges are used in constructing $J_{15}$; it is a pure automorphic image of the main code. Thus $J_{15}$ is immediately admissible for the $\mathrm H$- and $\mathrm V$-parts in Theorem~\ref{thm:heteroGao}, while the certified value of $q_{15}$ determines the auxiliary profiles produced in the subsequent heterogeneous products.

The availability of this fifteen-dimensional choice changes the recursive possibilities. Rather than continuing along a single chain, we reuse $\mathcal G_{15}^{\mathrm{het}}$ to construct two different branches, as shown in Fig.~\ref{fig:gao-vs-het}. In each such application, the left-hand codebook triple is $(X_{15},J_{15},J_{15})$. The choice $J_{0}=X_{15}$ satisfies the stronger neutral-set condition in Theorem~\ref{thm:heteroGao}, while $J_{15}$ is independent and therefore admissible for the $\mathrm H$- and $\mathrm V$-parts.

\vspace{5pt}
\noindent \textbf{\underline{Steps $3$ and $4$}}.
For the first branch, we combine $\mathcal G_{15}^{\mathrm{het}}$ with $\mathcal G_{10}$ to obtain
\begin{align}
\mathcal G_{25}^{\mathrm{het}}
=
{\color{blue}\operatorname{HetGao}}
(\mathcal G_{15}^{\mathrm{het}},\mathcal G_{10})
\bigl[(X_{15},J_{15},J_{15});X_{10}\bigr],
\end{align}
with profile
\begin{align}
\pi(\mathcal G_{25}^{\mathrm{het}})
=
(&6682034753199,\,
446498581960,\,
6659958232687,\nonumber\\
&4101950661502,\,
1239317719995,\,
1318689851190).
\end{align}
Let $X_{25}$ denote the auxiliary set of $\mathcal G_{25}^{\mathrm{het}}$. We then reuse $\mathcal G_{15}^{\mathrm{het}}$ on the left once more:
\begin{align}
\mathcal G_{40}^{\mathrm{het}}
=
{\color{blue}\operatorname{HetGao}}
(\mathcal G_{15}^{\mathrm{het}},\mathcal G_{25}^{\mathrm{het}})
\bigl[(X_{15},J_{15},J_{15});X_{25}\bigr],
\end{align}
which gives
\begin{align}
\pi(\mathcal G_{40}^{\mathrm{het}})
=
(&331763316186294443393,\,
26024163933281638912,\,
329383500225587852161,\nonumber\\
&177624734564098828651,\,
75792811956960543090,\,
75965953704528480420).
\end{align}
Let $X_{40}$ denote the auxiliary set of $\mathcal G_{40}^{\mathrm{het}}$.

\vspace{5pt}
\noindent \textbf{\underline{Steps $5$ and $6$}}.
For the second branch, we instead combine two copies of the fifteen-dimensional heterogeneous gadget:
\begin{align}
\mathcal G_{30}^{\mathrm{het}}
=
{\color{blue}\operatorname{HetGao}}
(\mathcal G_{15}^{\mathrm{het}},\mathcal G_{15}^{\mathrm{het}})
\bigl[(X_{15},J_{15},J_{15});X_{15}\bigr],
\end{align}
giving
\begin{align}
\pi(\mathcal G_{30}^{\mathrm{het}})
=
(&2455726444728097,\,
176701951181856,\,
2444563032022369,\nonumber\\
&1426595682835999,\,
509489572910550,\,
508477776275820).
\end{align}
Let $X_{30}$ denote the auxiliary set of $\mathcal G_{30}^{\mathrm{het}}$. We then apply Gao's product lemma to two copies of this gadget:
\begin{align}
\mathcal G_{60}^{\mathrm{het}}
=
{\color{red}\operatorname{Gao}}
(\mathcal G_{30}^{\mathrm{het}},\mathcal G_{30}^{\mathrm{het}})
[X_{30};X_{30}],
\end{align}
with profile
\begin{align}
\pi(\mathcal G_{60}^{\mathrm{het}})
=
(&6057870757274473350846763103809,\,
504164481409466441663744868288,\nonumber\\
&5975888417530397884926116372161,\,
3071432766295835202512524104901,\nonumber\\
&1452227825617281341206796133630,\,
1452227825617281341206796133630).
\end{align}

The same fifteen-dimensional gadget therefore contributes differently to the two branches: the first builds the $25$- and $40$-dimensional states, while the second first produces a $30$-dimensional state that is then amplified by an ordinary Gao squaring. This illustrates why a heterogeneous profile should be viewed as a recursive resource rather than only through its immediate main-code size.

\vspace{5pt}
\noindent \textbf{\underline{Step $7$}}.
The two branches are now combined at dimension $100$. At this stage we again apply Theorem~\ref{thm:heteroGao}, but in the simpler form appropriate for the penultimate step. Let $I_{60}$ and $X_{60}$ denote the main code and auxiliary set of $\mathcal G_{60}^{\mathrm{het}}$, respectively. We use $\mathcal G_{60}^{\mathrm{het}}$ as the left gadget and $\mathcal G_{40}^{\mathrm{het}}$ as the right gadget, with left-hand codebook triple $(X_{60},I_{60},I_{60})$. These choices require no additional construction: $X_{60}$ is an independent auxiliary set satisfying the neutral-set condition by the gadget axioms, while $I_{60}$ is independent. Hence
\begin{align}
\mathcal G_{100}^{\mathrm{het}}
=
{\color{blue}\operatorname{HetGao}}
(\mathcal G_{60}^{\mathrm{het}},\mathcal G_{40}^{\mathrm{het}})
\bigl[(X_{60},I_{60},I_{60});X_{40}\bigr].
\end{align}
Since the resulting $100$-dimensional gadget will be used only in the final ordinary Gao product, no additional overlap count is required. Such a count would affect only the subdivision of its auxiliary set into the neutral, $\mathrm H$-side, and $\mathrm V$-side parts, whereas the final Gao product depends only on $a_{100}$, $t_{100}$, and $s_{100}$. Theorem~\ref{thm:heteroGao} gives
\begin{align}
a_{100}
&=
2019566410046082519473049091812626784766961954468801,\\
t_{100}
&=
169483552007138542660886975007057178898284167227200,\\
s_{100}
&=
1980800582609313371231331570113696358456445857134401.
\end{align}
Let $X_{100}$ denote the auxiliary set of $\mathcal G_{100}^{\mathrm{het}}$.

\vspace{5pt}
\noindent \textbf{\underline{Step $8$}}.
For the final step, we return to Gao's product lemma and combine two copies of $\mathcal G_{100}^{\mathrm{het}}$:
\begin{align}
\mathcal G_{200}^{\mathrm{het}}
=
{\color{red}\operatorname{Gao}}
(\mathcal G_{100}^{\mathrm{het}},\mathcal G_{100}^{\mathrm{het}})
[X_{100};X_{100}].
\end{align}
By Lemma~\ref{lemma:GaoProduct}, the resulting main independent set has size
\begin{align}
a_{200}
&=
(a_{100}-t_{100})^2+2t_{100}s_{100}
\nonumber\\
&=
4094232818726419107644671962602278993363036698767319
\nonumber\\
&\qquad
682486929900807041313140911658278692029232150857601.
\end{align}
Consequently,
\begin{align}
\Shannon(C_7)
&\ge
a_{200}^{1/200}=
3.2588236744275819433344360437765093813959865800495343
\ldots .
\end{align}
This improves upon both Gao's bound and the first BPZ refinement:
\begin{align}
\underbrace{3.25878915\ldots}_{\text{Gao~}\cite{Gao2026}}
<
\underbrace{3.25880536\ldots}_{\text{BPZ~}\cite{BuysPolakZuiddam2026}}
<
\underbrace{3.25882367\ldots}_{\textit{``Warmup'' improvement}}.
\end{align}
Our heterogeneous recursion uses the same five-dimensional base gadget
as the first BPZ refinement. The improvement comes from using the additional choices of Theorem~\ref{thm:heteroGao} to produce new intermediate gadgets and then reorganizing the recursion to exploit their different profiles. This construction remains below the stronger BPZ multi-gadget bound \cite{BuysPolakZuiddamGitHub2026}
$\Shannon(C_7)\ge 3.2588279859\ldots$,
which we discuss in the next section. The purpose of the present example is to show that heterogeneity can improve not only the gadgets propagated through Gao's binary framework, but also the recursion tree through which those gadgets are most effectively combined.


\section{Heterogeneity in the BPZ Multi-Gadget Recursion}\label{subsec:BPZgeneral}
The preceding section showed that heterogeneity can strengthen Gao's binary
product framework and can also make a different recursion tree advantageous.
We now turn to the more general and recently introduced \textit{multi-gadget recursion} of Buys, Polak, and Zuiddam (BPZ) presented in their Lean formalization~\cite{BuysPolakZuiddamGitHub2026}, which allows several gadgets to be combined through more flexible combining rules and yields a stronger lower bound for $C_7$ and other odd cycle graphs. We first describe the specific BPZ recursion underlying this improvement and the additional flexibility it introduces beyond Gao's construction. We then show that the heterogeneous ideas developed in Section~\ref{sec:heterogeneousGao} can be incorporated into this BPZ recursion as well, leading to the main improvement of the paper which is presented in Theorem~\ref{thm:mainC7}.

\subsection{The BPZ Multi-Gadget Recursion}\label{subsec:BPZmultigadget}

The key generalization introduced by BPZ~\cite{BuysPolakZuiddamGitHub2026} is to retain more of the internal
structure of a Gao gadget throughout the recursion. Rather than combining
gadgets only through Gao's fixed binary product, BPZ keep track of seven
constituent sets and the separation relations among them. An
admissible combining rule then specifies which Cartesian products of these constituent sets may be assembled while preserving the required independence
and separation properties. Gao's binary product (Lemma~\ref{lemma:GaoProduct}) appears as one particular
rule within this more general framework.

\begin{definition}[Set separation]\label{def:BPZseparation}
Let $Y,Z\subseteq V(G)$. We say that $Y$ and $Z$ are \emph{separated}, and write $Y\perp_G Z$, if no vertex of $Y$ is confusable with a vertex of $Z$; that is,
\begin{align}
y\not\simeq z
\qquad
\text{for every } y\in Y,\ z\in Z.
\end{align}
\end{definition}

\begin{definition}[BPZ set labels $(\mathcal L)$ and label separation]
BPZ represent the structure propagated by a gadget using \textit{seven labels} for
its constituent sets, together with prescribed separation relations among
these labels. Specifically, let
$\mathcal L
=
\{B,N,A,D,O,H,V\}$
denote the set of seven labels. We define a symmetric relation $\perp$ on $\mathcal L$ by specifying, for
each $\lambda\in\mathcal L$, exactly those labels $\mu\in\mathcal L$ for
which $\lambda\perp\mu$:
\begin{center}
\renewcommand{\arraystretch}{1.15}
\begin{tabular}{c|c}
$\lambda$
&
$\{\,\mu\in\mathcal L:\lambda\perp\mu\,\}$
\\ \hline
$B$ & $\{O,H,V\}$ \\
$N$ & $\{A,D,O,H,V\}$ \\
$A$ & $\{N,D,H\}$ \\
$D$ & $\{N,A,V\}$ \\
$O$ & $\{B,N\}$ \\
$H$ & $\{B,N,A\}$ \\
$V$ & $\{B,N,D\}$
\end{tabular}
\end{center}
\end{definition}

\begin{definition}[BPZ seven-family representation] \label{def:BPZ7familyrep}
A \emph{seven-family representation} in a graph $G$ is a collection
\begin{align}
\mathcal{F}=\{F_\lambda:\lambda\in\mathcal{L}\}
\end{align}
such that every $F_\lambda\subseteq V(G)$ is an independent set and
$F_\lambda\perp_G F_\mu$
whenever
$\lambda\perp\mu$.
\end{definition}

\vspace{5pt}
\noindent\textbf{Converting a Gao gadget to a BPZ seven-family representation.}
Every Gao gadget naturally gives a BPZ seven-family representation as we describe next. Let
$\mathcal G$ be a Gao gadget as in Definition~\ref{def:GaoGadget}, with main
independent set $I=B\dot\cup R$, auxiliary-set decomposition
$X=X^0~\dot\cup~ X^{\mathrm H}~\dot\cup~ X^{\mathrm V}$, and complementary
transversals $P^{\mathrm H}$ and $P^{\mathrm V}$. The correspondence between
the Gao-gadget sets and the seven BPZ families is given in
Table~\ref{tab:GaoToBPZconversion}(\subref{tab:GaoToBPZfamilies}), while
Table~\ref{tab:GaoToBPZconversion}(\subref{tab:GaoToBPZseparations}) records
how the defining properties of a Gao gadget imply all eleven separation
requirements.
\begin{table}[t]
\centering
\small

\begin{subtable}[t]{0.30\textwidth}
\centering
\caption{Family correspondence.}
\label{tab:GaoToBPZfamilies}
\vspace{5.0em}
\renewcommand{\arraystretch}{1.15}
\begin{tabular}{@{}ccc@{}}
\toprule
\textbf{Family} & \textbf{Gao set} & \textbf{Size} \\
\midrule
$F_B$ & $B$ & $a-t$ \\
$F_N$ & $X^0$ & $o$ \\
$F_A$ & $X^{\mathrm V}$ & $v$ \\
$F_D$ & $X^{\mathrm H}$ & $h$ \\
$F_O$ & $R$ & $t$ \\
$F_H$ & $P^{\mathrm H}$ & $t$ \\
$F_V$ & $P^{\mathrm V}$ & $t$ \\
\bottomrule
\end{tabular}
\end{subtable}
\hfill
\begin{subtable}[t]{0.67\textwidth}
\centering
\caption{Separation justifications.}
\label{tab:GaoToBPZseparations}
\renewcommand{\arraystretch}{1.18}
\begin{tabularx}{\linewidth}{
@{}
>{\raggedright\arraybackslash}p{0.29\linewidth}
>{\raggedright\arraybackslash}X
@{}}
\toprule
\textbf{Required separation} & \textbf{Gao-gadget property} \\
\midrule

$B\perp O$
&
$B$ and $R$ are disjoint subsets of the independent set $I$.
\\

$B\perp H,\ B\perp V$
&
The private-pair property and the independence of $I$ imply that $B$ is
separated from every private-pair endpoint; both transversals consist of
such endpoints.
\\

$\begin{gathered}
N\perp A,\ N\perp D,\\
A\perp D
\end{gathered}$
&
The sets $X^0$, $X^{\mathrm V}$, and $X^{\mathrm H}$ are disjoint subsets
of the independent set $X$.
\\

$N\perp H,\ N\perp V$
&
By definition, $X^0$ is separated from both $P^{\mathrm H}$ and
$P^{\mathrm V}$.
\\

$N\perp O$
&
Every center in $R$ belongs to one of the two transversals, while $X^0$ is
separated from both transversals.
\\

$A\perp H,\ D\perp V$
&
The condition
$X\cap N(P^{\mathrm H})\cap N(P^{\mathrm V})=\emptyset$
implies that $X^{\mathrm V}$ is separated from $P^{\mathrm H}$ and that
$X^{\mathrm H}$ is separated from $P^{\mathrm V}$.
\\

\bottomrule
\end{tabularx}
\end{subtable}

\caption{Conversion of a Gao gadget into a BPZ seven-family representation.
Part~(\subref{tab:GaoToBPZfamilies}) identifies the seven constituent sets and
their cardinalities, while Part~(\subref{tab:GaoToBPZseparations}) shows how
the defining properties of a Gao gadget imply the eleven separation
requirements of Definition~\ref{def:BPZ7familyrep}.}
\label{tab:GaoToBPZconversion}
\vspace{-15pt}
\end{table}
Each of the seven sets in
Table~\ref{tab:GaoToBPZconversion}(\subref{tab:GaoToBPZfamilies}) is
independent: $B$ and $R$ are subsets of the independent set $I$;
$X^0$, $X^{\mathrm H}$, and $X^{\mathrm V}$ are subsets of the independent
set $X$; and the independence of $P^{\mathrm H}$ and $P^{\mathrm V}$ is
part of the Gao-gadget definition. Together with the separation properties
in Table~\ref{tab:GaoToBPZconversion}(\subref{tab:GaoToBPZseparations}), this
shows that the seven sets form a BPZ seven-family representation. For a
seven-family representation $\mathcal F$, we refer to
$(|F_B|,|F_N|,|F_A|,|F_D|,|F_O|,|F_H|,|F_V|)$ as its cardinality vector, always
recorded in the order $(B,N,A,D,O,H,V)$. We denote the resulting Gao-to-BPZ
cardinality map by
\begin{align}\label{eq:Gao-to-BPZ}
\Phi(a,t,s,o,h,v)
:=
(a-t,o,v,h,t,t,t),
\end{align}
where the seven entries on the right are recorded in the fixed order
$(B,N,A,D,O,H,V)$.
Thus the $A$- and $D$-coordinates of the seven-family cardinality vector are
$v$ and $h$, respectively; equivalently,
\begin{align}
|F_A|=v,
\qquad
|F_D|=h.
\end{align}
The BPZ labels $H$ and $V$ refer to the transversals
$P^{\mathrm H}$ and $P^{\mathrm V}$, respectively, and should not be
confused with the auxiliary sets $X^{\mathrm H}$ and $X^{\mathrm V}$.

\vspace{5pt}
\noindent\textbf{Reversing the orientation.}
Given a Gao gadget $\mathcal G$, let $\sigma\mathcal G$ denote the same
gadget with its two complementary transversals interchanged:
$P^{\mathrm H}
\longleftrightarrow
P^{\mathrm V}$.
The main independent set, the private pairs, and the auxiliary set are left
unchanged. Since the neutral part depends only on the union of the two
transversal neighborhoods, $X^0$ is unchanged, while the two one-sided
parts are exchanged:
$X^{\mathrm H}
\longleftrightarrow
X^{\mathrm V}$.
Consequently, 
\begin{align}
\text{if~~}\pi(\mathcal G)
=
(a,t,s,o,h,v),
\text{~~then~~}
\pi(\sigma\mathcal G)
=
(a,t,s,o,v,h).
\end{align}
Under the seven-family correspondence, reversing the orientation exchanges
\begin{align}
F_A&\longleftrightarrow F_D,
&
F_H&\longleftrightarrow F_V,
\end{align}
while leaving $F_B$, $F_N$, and $F_O$ unchanged. Hence, for a seven-family
cardinality vector recorded in the order $(B,N,A,D,O,H,V)$,
\begin{align}
\sigma(B,N,A,D,O,H,V)
=
(B,N,D,A,O,V,H).
\end{align}
This is compatible with the conversion in \eqref{eq:Gao-to-BPZ}. Although
$|F_H|=|F_V|=t$ for every representation arising from a Gao gadget, the
underlying sets $F_H$ and $F_V$ are still interchanged by $\sigma$.

\begin{definition}[Admissible combining rule]
\label{def:BPZadmissibleRule}
Let $m\geq 2$. An $m$-ary combining rule is a collection
\begin{align}
\mathcal T
=
\{T_\lambda:\lambda\in\mathcal L\},
\qquad
T_\lambda\subseteq\mathcal L^m,
\end{align}
where each $T_\lambda$ is a set of ordered $m$-tuples of labels. The rule is
called \emph{admissible} if the following two conditions hold.
\begin{enumerate}
\item[(i)] For every $\lambda\in\mathcal L$ and every two distinct tuples
$(\lambda_1,\ldots,\lambda_m),(\mu_1,\ldots,\mu_m)\in T_\lambda$, there
exists an index $i\in\{1,\ldots,m\}$ such that
$\lambda_i\perp\mu_i$.

\item[(ii)] Whenever $\lambda\perp\mu$, for every
$(\lambda_1,\ldots,\lambda_m)\in T_\lambda$ and
$(\mu_1,\ldots,\mu_m)\in T_\mu$, there exists an index
$i\in\{1,\ldots,m\}$ such that $\lambda_i\perp\mu_i$.
\end{enumerate}

Condition~(i) ensures that each output family is independent, while
Condition~(ii) ensures that the prescribed separation relations between
different output families are preserved.
\end{definition}

Now let
$\mathcal F^{(1)},\ldots,\mathcal F^{(m)}$ be seven-family representations
in graphs $G_1,\ldots,G_m$, respectively, and let $\mathcal T$ be an
admissible combining rule. The $i$th coordinate of each tuple in
$T_\lambda$ is paired with the $i$th input representation
$\mathcal F^{(i)}$. For each $\lambda\in\mathcal L$, define
\begin{align}
F_\lambda^{\mathrm{out}}
=
\bigcup_{(\lambda_1,\ldots,\lambda_m)\in T_\lambda}
F_{\lambda_1}^{(1)}
\times\cdots\times
F_{\lambda_m}^{(m)}.
\end{align}

\begin{proposition}[BPZ generalized combining rule]
\label{prop:BPZgeneralizedCombiningRule}
The collection
\begin{align}
\mathcal F^{\mathrm{out}}
=
\{F_\lambda^{\mathrm{out}}:\lambda\in\mathcal L\}
\end{align}
is a seven-family representation in
$G_1\boxtimes\cdots\boxtimes G_m$. Moreover, for every
$\lambda\in\mathcal L$,
\begin{align}
|F_\lambda^{\mathrm{out}}|
=
\sum_{(\lambda_1,\ldots,\lambda_m)\in T_\lambda}
\prod_{i=1}^m
|F_{\lambda_i}^{(i)}|.
\end{align}
\end{proposition}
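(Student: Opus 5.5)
The plan rests on one observation: in the strong product, two words are non-confusable exactly when some coordinate separates them. Write $\mathbf u\simeq\mathbf w$ in $G_1\boxtimes\cdots\boxtimes G_m$ if and only if $u_i\simeq w_i$ in $G_i$ for every $i$. Hence if $u_i\in Y$ and $w_i\in Z$ with $Y\perp_{G_i}Z$ for some index $i$, then $\mathbf u\not\simeq\mathbf w$. Everything else is bookkeeping on the tuples in the rule $\mathcal T$.

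\textbf{Step 1: each $F_\lambda^{\mathrm{out}}$ is independent.} Take two distinct words $\mathbf u,\mathbf w\in F_\lambda^{\mathrm{out}}$. Choose tuples $(\lambda_1,\ldots,\lambda_m),(\mu_1,\ldots,\mu_m)\in T_\lambda$ with $\mathbf u\in\prod_i F^{(i)}_{\lambda_i}$ and $\mathbf w\in\prod_i F^{(i)}_{\mu_i}$.

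If the two tuples are distinct, Condition~(i) gives an index $i$ with $\lambda_i\perp\mu_i$. Then $F^{(i)}_{\lambda_i}\perp_{G_i}F^{(i)}_{\mu_i}$ by Definition~\ref{def:BPZ7familyrep}, so $u_i\not\simeq w_i$.

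If the two tuples coincide, $\mathbf u$ and $\mathbf w$ are distinct points of the same block $\prod_i F^{(i)}_{\lambda_i}$. They differ in some coordinate $i$, where $u_i\ne w_i$ are both in the independent set $F^{(i)}_{\lambda_i}$, so again $u_i\not\simeq w_i$.

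\textbf{Step 2: the separations $F_\lambda^{\mathrm{out}}\perp F_\mu^{\mathrm{out}}$ whenever $\lambda\perp\mu$.} This is the same argument using Condition~(ii). Any $\mathbf u\in F^{\mathrm{out}}_\lambda$ and $\mathbf w\in F^{\mathrm{out}}_\mu$ come from tuples in $T_\lambda$ and $T_\mu$. Condition~(ii) supplies a separating coordinate, and the input separation in that coordinate gives $\mathbf u\not\simeq\mathbf w$.

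This step also covers the reflexive case. If $\lambda\perp\lambda$ were ever allowed, it would include $\mathbf u=\mathbf w$, so I will check that the relation table has no self-separations. It does not: no label lists itself. Hence separation is only ever asserted between distinct labels, and equal words never need to be separated.

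\textbf{Step 3: the cardinality formula.} I need the blocks in the union defining $F_\lambda^{\mathrm{out}}$ to be pairwise disjoint; then the size is $\sum\prod|F^{(i)}_{\lambda_i}|$, since each block is a Cartesian product. For distinct tuples in $T_\lambda$, Condition~(i) gives an index $i$ with $\lambda_i\perp\mu_i$, so $F^{(i)}_{\lambda_i}\perp F^{(i)}_{\mu_i}$. Since every vertex is confusable with itself, separated sets are disjoint. So the two blocks differ in coordinate $i$ and cannot share a word.

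\textbf{Main obstacle.} Nothing here is deep. The only care needed is the case split in Step 1 (same tuple versus distinct tuples) and recording that separated sets are disjoint because $\simeq$ is reflexive. Both the disjointness in Step 3 and the absence of self-separation in Step 2 hinge on that reflexivity.
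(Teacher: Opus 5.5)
Your proof is correct and follows essentially the same route as the paper: independence of each Cartesian block plus Condition~(i) gives independence of $F_\lambda^{\mathrm{out}}$, Condition~(ii) gives the required separations, and separation in one coordinate (using reflexivity of $\simeq$) makes the blocks disjoint, which yields the cardinality sum. You spell out the same-tuple case and the role of reflexivity more explicitly than the paper does. Your check that no label is self-separated is harmless but unnecessary: under a hypothetical $\lambda\perp\lambda$, the same Condition~(ii) argument would simply force $F_\lambda^{\mathrm{out}}=\varnothing$.
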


\begin{proof}
Fix $\lambda\in\mathcal L$. For every tuple
$(\lambda_1,\ldots,\lambda_m)\in T_\lambda$, the Cartesian product
\begin{align}
F_{\lambda_1}^{(1)}
\times\cdots\times
F_{\lambda_m}^{(m)}
\end{align}
is independent because each of its factors is independent. For two distinct
tuples in $T_\lambda$, Condition~(i) gives a coordinate in which the
corresponding constituent sets are separated. The two Cartesian products are
therefore separated, and their union is independent. Hence
$F_\lambda^{\mathrm{out}}$ is independent.

Now suppose that $\lambda\perp\mu$. By Condition~(ii), every Cartesian
product contributing to $F_\lambda^{\mathrm{out}}$ is separated from every
Cartesian product contributing to $F_\mu^{\mathrm{out}}$. It follows that
\begin{align}
F_\lambda^{\mathrm{out}}
\perp_{G_1\boxtimes\cdots\boxtimes G_m}
F_\mu^{\mathrm{out}}.
\end{align}
Thus all the separation relations required in
Definition~\ref{def:BPZ7familyrep} are preserved, and
$\mathcal F^{\mathrm{out}}$ is a seven-family representation. Finally, the Cartesian products indexed by distinct tuples in $T_\lambda$
are disjoint, since they are separated in at least one coordinate.
Consequently, their cardinalities add, while the cardinality of each
Cartesian product is the product of the cardinalities of its factors. This
gives
\begin{align}
|F_\lambda^{\mathrm{out}}|
=
\sum_{(\lambda_1,\ldots,\lambda_m)\in T_\lambda}
\prod_{i=1}^m
|F_{\lambda_i}^{(i)}|.
\end{align}
\end{proof}

In particular, if $\mathcal F^{(i)}$ is a seven-family representation in
$G^{\strong d_i}$ for each $i\in\{1,\ldots,m\}$, then the resulting
seven-family representation is in
$G^{\strong(d_1+\cdots+d_m)}$.
The framework is quite general, but its additional flexibility becomes more
transparent through a few examples. 

To distinguish the components $T_\lambda$ of a generic combining rule from
the particular rules in the BPZ repository, we denote the named admissible
combining rules by $S_{m\alpha}$, where $m$ records the arity. By
Proposition~\ref{prop:BPZgeneralizedCombiningRule}, each such rule also induces
a map on seven-family cardinality vectors; we use the same symbol for the
set-level rule and its induced cardinality map. The repository also contains
terminal codes $K_{m\alpha}$. A terminal code differs from a combining rule in
that it produces a single independent set rather than a new seven-family
representation. We begin by showing that Gao's binary product is itself a
particular admissible combining rule.

\vspace{5pt}
\noindent \textit{\textbf{Example $2$: Recovering Gao's binary product as a
special case of BPZ.}}
\textit{Consider the following admissible binary combining rule, denoted by
$S_{2a}$:
\begin{align}
T_B &= \{(B,B),(H,D),(V,A),(D,V),(A,H)\}, \nonumber\\
T_N &= \{(N,N),(A,A),(A,D),(D,A),(D,D)\}, \nonumber\\
T_A &= \{(A,N),(N,D)\}, \qquad
T_D = \{(D,N),(N,A)\}, \nonumber\\
T_O &= \{(O,N),(N,O)\}, \nonumber\\
T_H &= \{(H,N),(N,V)\}, \qquad
T_V = \{(V,N),(N,H)\}.
\end{align}
Under the correspondence established above,
\begin{align}
|F_B^{(i)}|&=a_i-t_i,
&
|F_N^{(i)}|&=o_i,
&
|F_A^{(i)}|&=v_i,
&
|F_D^{(i)}|&=h_i,
\nonumber\\
|F_O^{(i)}|
&=
|F_H^{(i)}|
=
|F_V^{(i)}|
=
t_i,
\end{align}
the resulting cardinalities reproduce Gao's product formulas exactly. For
example,
\begin{align}
|F_O^{\mathrm{out}}|
&=
t_1o_2+o_1t_2
=
t_{12},\\
|F_N^{\mathrm{out}}|
&=
o_1o_2+(h_1+v_1)(h_2+v_2)
=
o_{12},\\
|F_D^{\mathrm{out}}|
&=
h_1o_2+o_1v_2
=
h_{12},\\
|F_A^{\mathrm{out}}|
&=
v_1o_2+o_1h_2
=
v_{12}.
\end{align}
Moreover, the main independent set of a Gao gadget satisfies
\begin{align}
I
=
B\dot\cup R
=
F_B\dot\cup F_O.
\end{align}
Consequently,
\begin{align}
|F_B^{\mathrm{out}}|+|F_O^{\mathrm{out}}|
&=
(a_1-t_1)(a_2-t_2)+t_1s_2+s_1t_2
\nonumber\\
&=
a_{12}.
\end{align}
Thus the ordinary product
$\operatorname{Gao}(\mathcal G_1,\mathcal G_2)[X_1;X_2]$ is recovered as the
particular admissible binary combining rule $S_{2a}$.}

\begin{table}[t]
\centering
\renewcommand{\arraystretch}{1.2}
\begin{tabularx}{0.92\textwidth}{
@{}
>{\raggedright\arraybackslash}l
>{\centering\arraybackslash}p{0.08\textwidth}
>{\raggedright\arraybackslash}X
@{}}
\toprule
\textbf{Object type}
&
\textbf{Arity}
&
\textbf{Lean-verified rules or codes}
\\
\midrule

Admissible combining rules
&
$2$
&
$S_{2a}\ (20),\qquad S_{2b}\ (17)$
\\

Admissible combining rules
&
$3$
&
$S_{3a}\ (58),\quad S_{3b}\ (45),\quad
S_{3c}\ (48),\quad S_{3d}\ (46),$
\newline
$S_{3e}\ (53),\quad S_{3f}\ (47),\quad
S_{3g}\ (47),\quad S_{3h}\ (45)$
\\

Terminal codes
&
$3$
&
$K_{3a}\ (19)$
\\

Terminal codes
&
$4$
&
$K_{4a}\ (49),\qquad K_{4b}\ (57)$
\\

\bottomrule
\end{tabularx}
\caption{The admissible combining rules and terminal codes verified in the
BPZ Lean repository~\cite{BuysPolakZuiddamGitHub2026}. The arity $(m)$ is the
number of input seven-family representations that are combined, and the number in parentheses
is the number of ordered words in the corresponding rule or terminal code.}
\label{tab:BPZ-rule-library}
\end{table}

\vspace{5pt}
\noindent \textit{\textbf{Example $3$: Combining three inputs in one step.}}
\textit{The BPZ framework also permits several seven-family representations
to be combined simultaneously. Consider the admissible ternary rule
$S_{3a}$, which contains $58$ ordered triples in total. Rather than reproduce
the complete rule, consider its $O$ component:
\begin{align}
T_O
=
\{(O,A,A),(O,A,D),(O,D,A),(O,D,D)\}.
\end{align}
For three seven-family representations
$\mathcal F^{(1)},\mathcal F^{(2)},\mathcal F^{(3)}$, this component gives
\begin{align}
F_O^{\mathrm{out}}
={}&
(F_O^{(1)}\times F_A^{(2)}\times F_A^{(3)})
\,\dot\cup\,
(F_O^{(1)}\times F_A^{(2)}\times F_D^{(3)})
\nonumber\\
&\,\dot\cup\,
(F_O^{(1)}\times F_D^{(2)}\times F_A^{(3)})
\,\dot\cup\,
(F_O^{(1)}\times F_D^{(2)}\times F_D^{(3)}),
\end{align}
and hence
\begin{align}
|F_O^{\mathrm{out}}|
=
|F_O^{(1)}|
\bigl(
|F_A^{(2)}|+|F_D^{(2)}|
\bigr)
\bigl(
|F_A^{(3)}|+|F_D^{(3)}|
\bigr).
\end{align}
The remaining six output families are obtained from the corresponding
components of $S_{3a}$ in the same manner. This illustrates the additional
freedom of the BPZ construction: several representations may be combined in
one step, with a separate collection of allowed tuples specified for each of
the seven output families.}

The version of the BPZ Lean repository \cite{BuysPolakZuiddamGitHub2026} used here contains the admissible
combining rules and terminal codes summarized in
Table~\ref{tab:BPZ-rule-library}. The number in parentheses is the total
number of ordered words in the corresponding rule or code. All BPZ combining rules are presented in Appendix~\ref{app:BPZ-rules}. The specific BPZ construction for $C_7$ described below uses the combining
rules $S_{2a}, S_{3a}, S_{3b}$,
followed by the terminal code $K_{4a}$. The remaining verified rules and
terminal codes illustrate the substantially larger design space made
available by the multi-gadget framework.

\vspace{5pt}
\paragraph{The BPZ multi-gadget recursion for $C_7$.}
We now specialize the preceding framework to the BPZ construction for
$C_7$~\cite{BuysPolakZuiddamGitHub2026}. The five-dimensional base
representation is induced by the Gao gadget
\begin{align}
\pi(\mathcal G_5)
=
(367,8,367,322,26,19).
\end{align}
Applying the map $\Phi$ in \eqref{eq:Gao-to-BPZ} gives the cardinality
vectors associated with the two orientations of the five-dimensional base
gadget:
\begin{align}
\mathbf w
&:=
\Phi\bigl(\pi(\mathcal G_5)\bigr)
=
(359,322,19,26,8,8,8),
\\
\sigma\mathbf w
&:=
\Phi\bigl(\pi(\sigma\mathcal G_5)\bigr)
=
(359,322,26,19,8,8,8).
\end{align}

\begin{figure}[t]
\centering
\includegraphics[width=\textwidth]{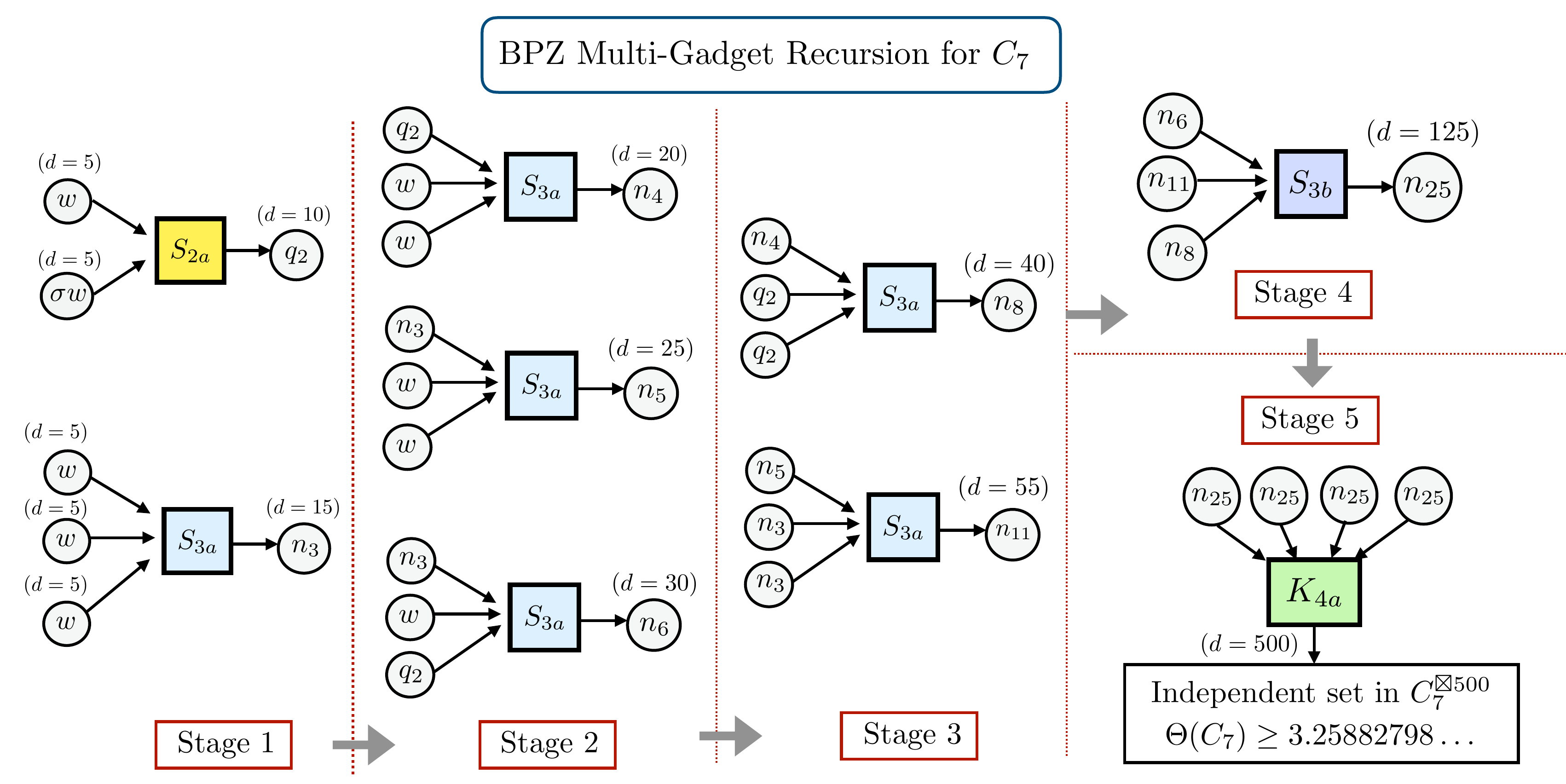}
\caption{The BPZ multi-gadget recursion for $C_7$. Starting from two
oriented five-dimensional base representations with cardinality vectors
$\mathbf w$ and $\sigma\mathbf w$, Stages $1$--$3$ use the combining rules
$S_{2a}$ and $S_{3a}$ to construct intermediate representations with
cardinality vectors $\mathbf q_2$, $\mathbf n_3$, $\mathbf n_4$,
$\mathbf n_5$, $\mathbf n_6$, $\mathbf n_8$, and $\mathbf n_{11}$. Stage $4$
applies $S_{3b}$ to obtain a $125$-dimensional representation with cardinality
vector $\mathbf n_{25}$. In Stage $5$, the terminal code $K_{4a}$ is applied
to four copies of this representation to produce an independent set in
$C_7^{\strong 500}$.}
\label{fig:BPZv2-recursion}
\end{figure}

Starting from representations with cardinality vectors $\mathbf w$ and
$\sigma\mathbf w$, BPZ repeatedly apply the combining rules $S_{2a}$ and
$S_{3a}$, followed by the ternary rule $S_{3b}$. We write $\mathbf n_r$ for
the cardinality vector of a seven-family representation constructed from $r$
copies of the five-dimensional base block. The symbol $\mathbf q_2$ denotes a
second two-block cardinality vector and is unrelated to the overlap quantities
denoted by $q$ in Theorem~\ref{thm:heteroGao}. The ordered recursion is
\begin{align}
\mathbf q_2
&=
S_{2a}(\mathbf w,\sigma\mathbf w),
&
\mathbf n_3
&=
S_{3a}(\mathbf w,\mathbf w,\mathbf w),
\nonumber\\
\mathbf n_4
&=
S_{3a}(\mathbf q_2,\mathbf w,\mathbf w),
&
\mathbf n_5
&=
S_{3a}(\mathbf n_3,\mathbf w,\mathbf w),
\nonumber\\
\mathbf n_6
&=
S_{3a}(\mathbf n_3,\mathbf w,\mathbf q_2),
&
\mathbf n_8
&=
S_{3a}(\mathbf n_4,\mathbf q_2,\mathbf q_2),
\nonumber\\
\mathbf n_{11}
&=
S_{3a}(\mathbf n_5,\mathbf n_3,\mathbf n_3),
&
\mathbf n_{25}
&=
S_{3b}(\mathbf n_6,\mathbf n_{11},\mathbf n_8).
\end{align}
The order of the inputs in each application is part of the construction. The
subscript records the number of five-dimensional base blocks used:
$\mathbf n_r$ is the cardinality vector of a representation in
$C_7^{\strong 5r}$, while $\mathbf q_2$ is the cardinality vector of a
representation in $C_7^{\strong 10}$. In particular, $\mathbf n_6$,
$\mathbf n_8$, $\mathbf n_{11}$, and $\mathbf n_{25}$ are realized by
seven-family representations in $C_7^{\strong 30}$, $C_7^{\strong 40}$,
$C_7^{\strong 55}$, and $C_7^{\strong 125}$, respectively. The complete
dependency structure is shown in Fig.~\ref{fig:BPZv2-recursion}.

To obtain the final independent set, BPZ apply the terminal code $K_{4a}$ to
four copies of a seven-family representation with cardinality vector
$\mathbf n_{25}$. The code $K_{4a}$ consists of $49$ ordered $4$-tuples of
labels whose corresponding Cartesian products are pairwise separated in at
least one coordinate. Their union is therefore an independent set. Since each
copy of the representation uses $25$ five-dimensional base blocks, the
resulting construction uses $4\cdot25=100$ base blocks and lies in
$C_7^{\strong 500}$. It gives the Lean-verified bound obtained by
BPZ~\cite{BuysPolakZuiddamGitHub2026}
\begin{align}
\Shannon(C_7)
\ge
3.258827985920007034526478965794221\ldots .
\end{align}

The improvement by BPZ~\cite{BuysPolakZuiddamGitHub2026} over Gao's binary
recursion comes from the substantially larger recursive design space made
available by the BPZ multi-gadget framework: the construction uses combining
rules of different arities, reuses previously constructed representations of
different dimensions, and incorporates both orientations of the
five-dimensional base representation.

\subsection{Improving the BPZ Recursion via Heterogeneity}
\label{subsec:BPZheterogeneousImprovement}

\noindent\textbf{Overview of our construction.}
The preceding subsection described the BPZ multi-gadget recursion for $C_7$,
whose final $125$-dimensional seven-family cardinality vector is
\begin{align}
\mathbf n_{25}
=
S_{3b}(\mathbf n_6,\mathbf n_{11},\mathbf n_8),
\end{align}
followed by an application of the terminal code $K_{4a}$ to four copies of a
representation with cardinality vector $\mathbf n_{25}$. As shown in Fig.~\ref{fig:hetBPZrecursion}, we retain this
top-level architecture: the ordered block split $25=6+11+8$, the combining
rule $S_{3b}$, and the terminal code $K_{4a}$ are unchanged. We instead
construct new cardinality vectors $\widetilde{\mathbf n}_6$,
$\widetilde{\mathbf n}_{11}$, and $\widetilde{\mathbf n}_8$ by applying
Theorem~\ref{thm:heteroGao} with explicit heterogeneous auxiliary choices.
The finite counts used in these choices, together with the exact final
arithmetic, are recorded in Appendix~\ref{app:C7-certificates}.

\begin{theorem}\label{thm:mainC7}
There exists an independent set in $C_7^{\strong 500}$ of cardinality
\begin{align}
M_\star
={}&
339646729181174569434085175737183332253173907546511006059711
\nonumber\\
&525903249092770468764004403539584283592629896215203947372766
\nonumber\\
&811966557767583863304708327733085591599903170761021314133740
\nonumber\\
&610665958348235292745609889506981587295019498664860797917849
\nonumber\\
&96498585401881281.
\end{align}
Consequently,
\begin{align}
\Shannon(C_7)
&\ge
M_\star^{1/500}
\nonumber\\
&=
3.2588326203532663091215390518104754376053875943219\ldots .
\end{align}
\end{theorem}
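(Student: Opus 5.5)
The proof is constructive. We build the three intermediate seven-family representations $\widetilde{\mathbf n}_6$, $\widetilde{\mathbf n}_{11}$, $\widetilde{\mathbf n}_8$ and then reuse the BPZ top level unchanged.

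\textbf{Building the intermediate gadgets.} Starting from the base gadget $\mathcal G_5$ with profile $(367,8,367,322,26,19)$ and its reversal $\sigma\mathcal G_5$, we realize each of $\widetilde{\mathbf n}_6$, $\widetilde{\mathbf n}_{11}$, $\widetilde{\mathbf n}_8$ as $\Phi$ applied to a Gao gadget (or its orientation reversal). Each gadget is produced by a binary tree of products in which every node is either a $\operatorname{Gao}$ product (Lemma~\ref{lemma:GaoProduct}) or a $\operatorname{HetGao}$ product (Theorem~\ref{thm:heteroGao}). The trees are tailored to the role each representation plays in $S_{3b}$. At each $\operatorname{HetGao}$ node the left codebook triple is $(X_L,J,J)$, where $X_L$ is the left auxiliary set, so the neutral condition holds by the gadget axioms. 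The set $J$ is an automorphic image of the left main code under products of the map $T$, possibly followed by a few private-pair exchanges, as for $J^+$ in the warmup. Theorem~\ref{thm:heteroGao} then guarantees that each output is again a Gao gadget. The conversion in Table~\ref{tab:GaoToBPZconversion} guarantees that it yields a valid seven-family representation, with cardinality vector given by $\Phi$ from the propagated profile.

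\textbf{Finite certificates.} The only non-formula inputs are the finite counts at each heterogeneous node:
\begin{itemize}
\item $|J|$ and the independence of $J$;
\item the overlap $q=|J\setminus N(X_L^0)|$;
\item the decomposition $(o_0,h_0,v_0)$ when $J_0\neq X_L$.
\end{itemize}
For $J$ a pure automorphic image of a product code, independence is automatic, and $|J|$ equals the size of the main code. The overlap count $q$ must be certified computationally. I would do this without enumerating the astronomically large sets. Because every set involved is a union of Cartesian products of five-dimensional blocks, membership in $N(X_L^0)$ factorizes coordinatewise over the blocks. The count $q$ can therefore be computed recursively from block-level confusability tables between the base sets ($I_0$, $X$, the transversals, $R$, $B$, and their $T$-images). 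This is a sum over pairs of product blocks of products of small neighborhood-intersection counts. I expect this step to be the main obstacle: correctly and efficiently certifying these overlap counts at depth, and doing the inclusion–exclusion over unions of product blocks exactly.

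\textbf{Top level and arithmetic.} We apply $S_{3b}$ to $(\widetilde{\mathbf n}_6,\widetilde{\mathbf n}_{11},\widetilde{\mathbf n}_8)$. By Proposition~\ref{prop:BPZgeneralizedCombiningRule}, this gives a seven-family representation $\widetilde{\mathbf n}_{25}$ in $C_7^{\strong 125}$, since each input lies in $C_7^{\strong 5r}$ with block counts summing to $25$. We then apply the terminal code $K_{4a}$ to four copies of $\widetilde{\mathbf n}_{25}$. Its $49$ pairwise-separated tuples give an independent set in $C_7^{\strong 500}$ whose size is the sum over tuples of products of family cardinalities. Evaluating this polynomial in exact integer arithmetic yields $M_\star$. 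The bound $\Shannon(C_7)\ge M_\star^{1/500}$ then follows from the definition~\eqref{eq:shannon-capacity}, with the decimal value checked by certified interval arithmetic.
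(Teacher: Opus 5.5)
You have the outer layer right: the three vectors go into $S_{3b}$ in the order $6+11+8$, and then $K_{4a}$ is applied to four copies. Your plan of factoring neighborhood membership over product blocks is also essentially how the paper certifies $C_4$. The gap is the middle. The theorem asserts a specific cardinality, so the proof must exhibit a specific construction whose count reaches $M_\star$. ``Trees tailored to the role'' followed by ``evaluating the polynomial yields $M_\star$'' does not do that.

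Worse, the one template you do commit to cannot reproduce the paper's vectors. That template has every $\operatorname{HetGao}$ node use $(X_L,J,J)$, with $J$ a $T$-image of the left gadget's own main code. The missing idea is the \emph{sibling} choice of $J_0$. Here $J_0$ is the auxiliary set of a different gadget that has the same product code, private pairs and transversals as the left gadget, so the neutral condition holds by that gadget's axioms. This decouples the left gadget's code data ($s_L$, and $o_L$, which drives $t_{LR}=t_Lo_R+o_Lt_R$) from the size $j_0$ of the output's auxiliary block.

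For $\widetilde{\mathbf n}_6$ the paper forms $\operatorname{HetGao}(\mathcal G_{15}^{X},\mathcal G_{15}^{\mathrm{het}})[(X_{15},J_{15},J_{15});X_{15}]$. The ordinary $\mathcal G_{15}^{X}$ fixes the product code with $o_L=35342398$, giving $t=176870111100096$. The heterogeneous sibling's $X_{15}$ supplies $j_0=49433743$. Under your rule, putting $\mathcal G_{15}^{X}$ on the left forces $j_0=|X_{15}^{X}|=49430863$. Putting $\mathcal G_{15}^{\mathrm{het}}$ on the left either drops $t$ to the warmup value $176701951181856$ or changes the right-hand profile. Either way $\widetilde{\mathbf n}_6$ changes, and you offer no alternative tree whose count reaches $M_\star$. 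The same device supplies $J_0=X_{15}^{A,\mathrm{het}}$ and $X_{15}^{D,\mathrm{het}}$ for the left gadgets $\mathcal G_{15}^{A,X}$ and $\mathcal G_{15}^{D,X}$, and $J_0=\widehat X_{30}$ for $\mathcal G_{30}^{L}$.

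The one-sided codebooks are not of your form either. They are often asymmetric, and often auxiliary sets rather than images of main codes: $(X_{10},J^+,X_{10})$, $(X_{10},X_{10},J^+)$, $(X_{15}^{D,\mathrm{het}},X_{15},J_{15})$, $(X_{15}^{A,\mathrm{het}},J_{15},X_{15}^{X})$, $(X_{15}^{A,\mathrm{het}},J_{15},X_{15})$. In the last step, $J_{30}^{++}$ is the $T^{\times 6}$-image of the main code of a \emph{different} gadget, $\operatorname{Gao}(\mathcal G_{15}^{\mathrm{het}},\mathcal G_{15}^{\mathrm{het}})$. That code is strictly larger than the main code of the left gadget $\mathcal G_{30}^{L}$.

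Your certificate list shifts accordingly:
\begin{itemize}
\item The decomposition of a sibling $J_0$ is free, since it is that sibling's own $(o,h,v)$.
\item The $q$-value of a one-sided codebook equal to $X_L$ is just $s_L-o_L$.
\item You still need counts such as $C_3=|X_{15}\setminus N(X_{15}^{0,X})|$, which measures one gadget's auxiliary set against another gadget's neutral part.
\item Because of the eight exchanges, $J^+$ (and hence $X_{15}$) is not a union of products of five-dimensional blocks, so $C_1$ comes from direct enumeration in dimension $10$.
\end{itemize}
To close the gap, write out the full dependency chain with its codebooks and the four certified counts $C_1$--$C_4$. The chain consists of:
\begin{itemize}
\item the three oriented $\mathcal G_{10}$'s;
\item the six $\mathcal G_{15}$'s;
\item $\mathcal G_{30}^{(6)}$, $\mathcal G_{25}^{(8)}$, $\mathcal G_{40}^{(8)}$, $\mathcal G_{30}^{L}$, $\widehat{\mathcal G}_{30}$, $\widehat{\mathcal G}_{25}$ and $\mathcal G_{55}$.
\end{itemize}
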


\begin{proof}
\vspace{5pt}
\noindent\textbf{\underline{Step $1$: Ten- and fifteen-dimensional ingredients.}}
We begin with the three oriented ten-dimensional Gao products
\begin{align}
\mathcal G_{10}
&=
{\color{red}\operatorname{Gao}}(\mathcal G_5,\mathcal G_5)[X_5;X_5],
\nonumber\\
\mathcal G_{10}^{A}
&=
{\color{red}\operatorname{Gao}}(\sigma\mathcal G_5,\mathcal G_5)[X_5;X_5],
\nonumber\\
\mathcal G_{10}^{D}
&=
{\color{red}\operatorname{Gao}}(\mathcal G_5,\sigma\mathcal G_5)[X_5;X_5],
\end{align}
with profiles
\begin{align}
\pi(\mathcal G_{10})
&=
(134753,5152,134689,105709,14490,14490),
\nonumber\\
\pi(\mathcal G_{10}^{A})
&=
(134753,5152,134689,105709,12236,16744),
\nonumber\\
\pi(\mathcal G_{10}^{D})
&=
(134753,5152,134689,105709,16744,12236).
\end{align}

\begin{figure}[t]
\centering
\includegraphics[width=\textwidth]{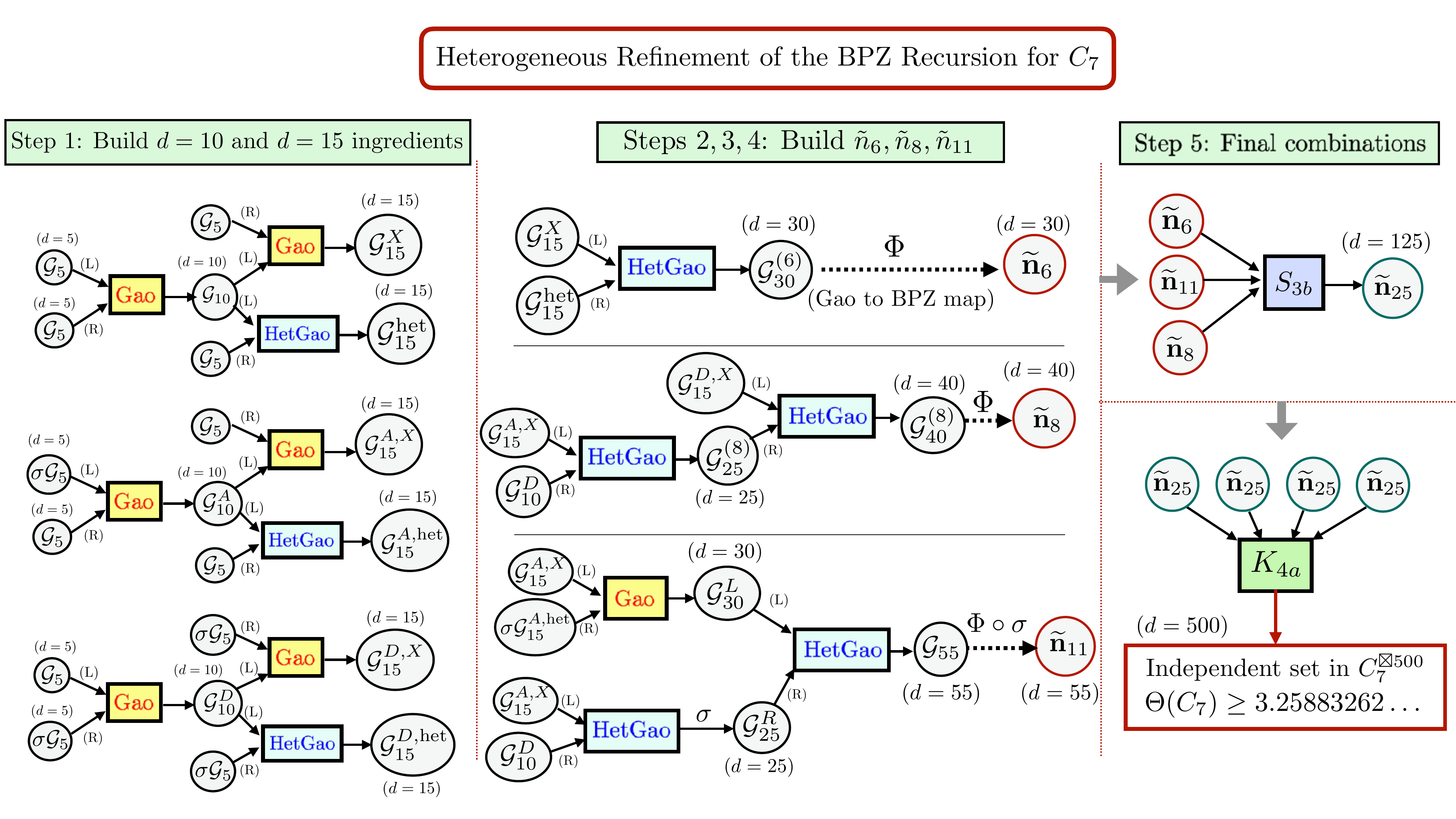}
\caption{Schematic of the heterogeneous refinement underlying
Theorem~\ref{thm:mainC7}. Step~1 constructs the oriented ten- and
fifteen-dimensional ingredients; Steps~2--4 produce the role-specific
cardinality vectors $\widetilde{\mathbf n}_6$,
$\widetilde{\mathbf n}_8$, and $\widetilde{\mathbf n}_{11}$; and
Step~5 applies the unchanged BPZ rule $S_{3b}$ and terminal code $K_{4a}$
to obtain an independent set in $C_7^{\strong 500}$. The labels $(L)$ and $(R)$ identify the first and second ordered
inputs, respectively, of each binary product, independently of
their vertical placement in the diagram.
The ordered product
arguments, auxiliary codebooks, and finite certificates are specified
in the proof and Appendix~\ref{app:C7-certificates}.}
\label{fig:hetBPZrecursion}
\vspace{-10pt}
\end{figure}

The superscripts $A$ and $D$ indicate the orientations relevant to the
corresponding BPZ coordinates. All three gadgets have the same physical
auxiliary set $X_{10}=X_5\times X_5$ and the same neutral part $X_{10}^0$.
Let $J^+$ be the independent set constructed in
Section~\ref{sec:heterogeneousGaoforC7}; recall that
\begin{align}
|J^+|
&=
134753,
&
\left|J^+\setminus N(X_{10}^0)\right|
&=
27488.
\end{align}

We next form three ordinary and three heterogeneous fifteen-dimensional
gadgets:
\begin{align}
\mathcal G_{15}^{X}
&=
{\color{red}\operatorname{Gao}}(\mathcal G_{10},\mathcal G_5)[X_{10};X_5],
\nonumber\\
\mathcal G_{15}^{\mathrm{het}}
&=
{\color{blue}\operatorname{HetGao}}(\mathcal G_{10},\mathcal G_5)
\bigl[(X_{10},J^+,J^+);X_5\bigr],
\nonumber\\
\mathcal G_{15}^{A,X}
&=
{\color{red}\operatorname{Gao}}(\mathcal G_{10}^{A},\mathcal G_5)[X_{10};X_5],
\nonumber\\
\mathcal G_{15}^{A,\mathrm{het}}
&=
{\color{blue}\operatorname{HetGao}}(\mathcal G_{10}^{A},\mathcal G_5)
\bigl[(X_{10},J^+,X_{10});X_5\bigr],
\nonumber\\
\mathcal G_{15}^{D,X}
&=
{\color{red}\operatorname{Gao}}(\mathcal G_{10}^{D},\sigma\mathcal G_5)[X_{10};X_5],
\nonumber\\
\mathcal G_{15}^{D,\mathrm{het}}
&=
{\color{blue}\operatorname{HetGao}}(\mathcal G_{10}^{D},\sigma\mathcal G_5)
\bigl[(X_{10},X_{10},J^+);X_5\bigr].
\end{align}
Their profiles are
\begin{align}
\pi(\mathcal G_{15}^{X})
&=
(49495055,2504616,49430863,35342398,6674251,7414214),
\nonumber\\
\pi(\mathcal G_{15}^{\mathrm{het}})
&=
(49495055,2504616,49433743,35275258,6703815,7454670),
\nonumber\\
\pi(\mathcal G_{15}^{A,X})
&=
(49495055,2504616,49430863,35342398,5948463,8140002),
\nonumber\\
\pi(\mathcal G_{15}^{A,\mathrm{het}})
&=
(49495055,2504616,49432527,35303606,5948463,8180458),
\nonumber\\
\pi(\mathcal G_{15}^{D,X})
&=
(49495055,2504616,49430863,35342398,8140002,5948463),
\nonumber\\
\pi(\mathcal G_{15}^{D,\mathrm{het}})
&=
(49495055,2504616,49432527,35303606,8180458,5948463).
\end{align}
The gadget $\mathcal G_{15}^{\mathrm{het}}$ is exactly the one constructed
in Section~\ref{sec:heterogeneousGaoforC7}; let $X_{15}$ denote its auxiliary
set. Let $X_{15}^{X}$ denote the common physical auxiliary set of the three
ordinary constructions, and let $X_{15}^{0,X}$ denote its neutral part. Let
$X_{15}^{A,\mathrm{het}}$ and $X_{15}^{D,\mathrm{het}}$ denote the auxiliary
sets of the two oriented heterogeneous constructions. These last two are the
same physical set, with their $\mathrm H$- and $\mathrm V$-parts
interchanged.

Let $J_{15}$ be the independent set defined in
Section~\ref{sec:heterogeneousGaoforC7}. Since it is an automorphic image of
a fifteen-dimensional main code,
\begin{align}
|J_{15}|=49495055.
\end{align}
The two finite counts needed below are
\begin{align}
\left|J_{15}\setminus N(X_{15}^{0,X})\right|
&=
12839823,
\nonumber\\
\left|X_{15}\setminus N(X_{15}^{0,X})\right|
&=
14045805.
\end{align}
Their verification is included in Appendix~\ref{app:C7-certificates}.

\vspace{5pt}
\noindent\textbf{\underline{Step $2$: Construction of $\widetilde{\mathbf n}_6$.}}
Define
\begin{align}
\mathcal G_{30}^{(6)}
=
{\color{blue}\operatorname{HetGao}}(\mathcal G_{15}^{X},\mathcal G_{15}^{\mathrm{het}})
\bigl[(X_{15},J_{15},J_{15});X_{15}\bigr].
\end{align}
The two input gadgets have the same product code, private pairs, and
transversals, so $X_{15}$ is admissible in the $J_0$ position relative to
$\mathcal G_{15}^{X}$, with
\begin{align}
(j_0,o_0,h_0,v_0)
=
(49433743,35275258,6703815,7454670).
\end{align}
The one-sided parameters are
\begin{align}
(j_{\mathrm H},q_{\mathrm H})
=
(j_{\mathrm V},q_{\mathrm V})
=
(49495055,12839823).
\end{align}
Theorem~\ref{thm:heteroGao} gives
\begin{align}
\pi(\mathcal G_{30}^{(6)})
=
(&2455719231434017,176870111100096,2444563032022369,\nonumber\\
&1426136268314719,509731462042710,508695301664940).
\end{align}
Applying the map $\Phi$ from \eqref{eq:Gao-to-BPZ} gives
\begin{align}
\widetilde{\mathbf n}_6
=
(&2278849120333921,1426136268314719,508695301664940,\nonumber\\
&509731462042710,176870111100096,\nonumber\\
&176870111100096,176870111100096).
\end{align}

\vspace{5pt}
\noindent\textbf{\underline{Step $3$: Construction of $\widetilde{\mathbf n}_8$.}}
First define
\begin{align}
\mathcal G_{25}^{(8)}
=
{\color{blue}\operatorname{HetGao}}(\mathcal G_{15}^{A,X},\mathcal G_{10}^{D})
\bigl[(X_{15}^{A,\mathrm{het}},J_{15},J_{15});X_{10}\bigr].
\end{align}
The set $X_{15}^{A,\mathrm{het}}$ is admissible in the $J_0$ position
because it is the auxiliary set of a gadget with the same product code,
private pairs, and transversals as $\mathcal G_{15}^{A,X}$, and its
decomposition is
\begin{align}
(j_0,o_0,h_0,v_0)
=
(49432527,35303606,5948463,8180458).
\end{align}
The resulting profile is
\begin{align}
\pi(\mathcal G_{25}^{(8)})
=
(&6682019915439,446844487240,6659829690543,\nonumber\\
&4104006957194,1077319494019,1478503239330).
\end{align}
Let $X_{25}^{(8)}$ denote its auxiliary set. Next define
\begin{align}
\mathcal G_{40}^{(8)}
=
{\color{blue}\operatorname{HetGao}}(\mathcal G_{15}^{D,X},\mathcal G_{25}^{(8)})
\bigl[(X_{15}^{D,\mathrm{het}},X_{15},J_{15});X_{25}^{(8)}\bigr].
\end{align}
Relative to $\mathcal G_{15}^{D,X}$, the decomposition of
$X_{15}^{D,\mathrm{het}}$ is
\begin{align}
(j_0,o_0,h_0,v_0)
=
(49432527,35303606,8180458,5948463).
\end{align}
The one-sided parameters are
\begin{align}
(j_{\mathrm H},q_{\mathrm H})
&=
(49433743,14045805),
\nonumber\\
(j_{\mathrm V},q_{\mathrm V})
&=
(49495055,12839823).
\end{align}
The resulting profile is
\begin{align}
\pi(\mathcal G_{40}^{(8)})
=
(&331761855244358723969,26071517201241409024,\nonumber\\
&329305968864222045505,179001784071649220449,\nonumber\\
&87767535795425989412,62536648997146835644).
\end{align}
Therefore,
\begin{align}
\widetilde{\mathbf n}_8
=
(&305690338043117314945,179001784071649220449,\nonumber\\
&62536648997146835644,87767535795425989412,\nonumber\\
&26071517201241409024,26071517201241409024,
26071517201241409024).
\end{align}

\vspace{5pt}
\noindent\textbf{\underline{Step $4$: Construction of $\widetilde{\mathbf n}_{11}$.}}
We first construct the thirty-dimensional left gadget using the ordinary Gao
product
\begin{align}
\mathcal G_{30}^{L}
=
{\color{red}\operatorname{Gao}}(\mathcal G_{15}^{A,X},
\sigma\mathcal G_{15}^{A,\mathrm{het}})
[X_{15}^{X};X_{15}^{D,\mathrm{het}}].
\end{align}
Its profile is
\begin{align}
\pi(\mathcal G_{30}^{L})
=
(&2455716185820961,176941111954464,2443492469880801,\nonumber\\
&1446768903083453,420235140891852,576488425905496).
\end{align}
Let $X_{30}^{0,L}$ denote its neutral part.

Using the same ordered pair of input gadgets, define the ``sibling gadget''
\begin{align}
\widehat{\mathcal G}_{30}
=
{\color{blue}\operatorname{HetGao}}(\mathcal G_{15}^{A,X},
\sigma\mathcal G_{15}^{A,\mathrm{het}})
\bigl[(X_{15}^{A,\mathrm{het}},J_{15},X_{15}^{X});
X_{15}^{D,\mathrm{het}}\bigr].
\end{align}
Its profile is
\begin{align}
\pi(\widehat{\mathcal G}_{30})
=
(&2455716185820961,176941111954464,2444076335041121,\nonumber\\
&1435184942161465,420235140891852,588656251987804).
\end{align}
Let $\widehat X_{30}$ denote its auxiliary set. Because
$\widehat{\mathcal G}_{30}$ and $\mathcal G_{30}^{L}$ have the same ordered
pair of input gadgets, their product code, propagated private pairs, and
transversals coincide. Hence $\widehat X_{30}$ is admissible in the $J_0$
position relative to $\mathcal G_{30}^{L}$, with
\begin{align}
(j_0,o_0,h_0,v_0)
={}&
(2444076335041121,1435184942161465,\nonumber\\
&420235140891852,588656251987804).
\end{align}

We next construct the twenty-five-dimensional right gadget:
\begin{align}
\widehat{\mathcal G}_{25}
=
{\color{blue}\operatorname{HetGao}}(\mathcal G_{15}^{A,X},\mathcal G_{10}^{D})
\bigl[(X_{15}^{A,\mathrm{het}},J_{15},X_{15});X_{10}\bigr].
\end{align}
Its profile is
\begin{align}
\pi(\widehat{\mathcal G}_{25})
=
(&6682019915439,446844487240,6659079476911,\nonumber\\
&4118763352946,1061812884635,1478503239330).
\end{align}
Reverse its orientation and write
\begin{align}
\mathcal G_{25}^{R}
&=
\sigma\widehat{\mathcal G}_{25},
\nonumber\\
\pi(\mathcal G_{25}^{R})
&=
(6682019915439,446844487240,6659079476911,\nonumber\\
&~~~~~4118763352946,1478503239330,1061812884635).
\end{align}
Let $X_{25}^{R}$ denote the auxiliary set of $\mathcal G_{25}^{R}$.

It remains to specify the independent set used on the two one-sided parts in
the final heterogeneous product. Let $\mathcal G_{30}^{++}$ be the ordinary
Gao product
\begin{align}
\mathcal G_{30}^{++}
=
{\color{red}\operatorname{Gao}}(\mathcal G_{15}^{\mathrm{het}},
\mathcal G_{15}^{\mathrm{het}})[X_{15};X_{15}],
\end{align}
and let $I_{30}^{++}$ denote its main independent set. Its size is
\begin{align}
|I_{30}^{++}|
&=
(49495055-2504616)^2
+2(2504616)(49433743)
\nonumber\\
&=
2455726444728097.
\end{align}
Recall the automorphism $T$ from
Section~\ref{sec:heterogeneousGaoforC7}. Apply $T$ independently to the six
five-coordinate blocks and set
\begin{align}
J_{30}^{++}
=
T^{\times 6}(I_{30}^{++}).
\end{align}
Then $J_{30}^{++}$ is independent and
\begin{align}
\left|J_{30}^{++}\setminus N(X_{30}^{0,L})\right|
=
841760069965664.
\end{align}
The seven-component verification of this count is given in
Appendix~\ref{app:C7-certificates}.

Finally, define
\begin{align}
\mathcal G_{55}
=
{\color{blue}\operatorname{HetGao}}(\mathcal G_{30}^{L},\mathcal G_{25}^{R})
\bigl[(\widehat X_{30},J_{30}^{++},J_{30}^{++});X_{25}^{R}\bigr].
\end{align}
Its profile is
\begin{align}
\pi(\mathcal G_{55})
=
(&16478688413981213775272008847,\nonumber\\
&1375259276200664518625890664,\nonumber\\
&16304893524159967980117037071,\nonumber\\
&8049523822718249447032963650,\nonumber\\
&3444579390015851502852512947,\nonumber\\
&4810790311425867030231560474).
\end{align}
Reversing the orientation and applying $\Phi$ gives
\begin{align}
\widetilde{\mathbf n}_{11}
=
(&15103429137780549256646118183,\nonumber\\
&8049523822718249447032963650,\nonumber\\
&3444579390015851502852512947,\nonumber\\
&4810790311425867030231560474,\nonumber\\
&1375259276200664518625890664,\nonumber\\
&1375259276200664518625890664,\nonumber\\
&1375259276200664518625890664).
\end{align}

\vspace{5pt}
\noindent\textbf{\underline{Step $5$: Final combinations.}}
We retain the ordered top-level BPZ rule and replace only its three inputs:
\begin{align}
\widetilde{\mathbf n}_{25}
=
S_{3b}(\widetilde{\mathbf n}_6,
\widetilde{\mathbf n}_{11},\widetilde{\mathbf n}_8).
\end{align}
By Proposition~\ref{prop:BPZgeneralizedCombiningRule},
$\widetilde{\mathbf n}_{25}$ is the cardinality vector of a seven-family
representation in $C_7^{\strong 125}$. Its exact entries are recorded in
Appendix~\ref{app:exact-final-arithmetic}. Finally, apply the unchanged BPZ
terminal code $K_{4a}$ to four copies of this representation. The resulting
independent set lies in $C_7^{\strong 500}$, and the exact terminal-code
cardinality formula gives $M_\star$ as stated in
Theorem~\ref{thm:mainC7}. This proves the theorem.
\end{proof}

The new bound strictly improves the BPZ \cite{BuysPolakZuiddamGitHub2026} value:
\begin{align}
\underbrace{3.258827985920007034526478965794221\ldots}_{\text{BPZ}~~ \cite{BuysPolakZuiddamGitHub2026}~\text{lower bound}}
<
\underbrace{3.258832620353266309121539051810475\ldots}_{\text{Theorem~}\ref{thm:mainC7}}.
\end{align}

The gain in the normalized lower bound is
$4.634433259274595060086\ldots \times 10^{-6}$.

\begin{remark}[\textbf{Intuition behind the construction}]
The improvement leaves the top-level BPZ split, the rule $S_{3b}$,
and the terminal code $K_{4a}$ unchanged. Rather than replacing
every ordinary gadget by a heterogeneous one, the construction
uses them for complementary purposes. For example, Step~2 uses
$\mathcal G_{15}^{X}$ as the left gadget but takes $J_0=X_{15}$
from $\mathcal G_{15}^{\mathrm{het}}$, which shares its transversals.
The ordinary gadget has a smaller auxiliary set but a larger
neutral part, trading some main-code words for more private pairs
in the next product. Theorem~\ref{thm:heteroGao} then allows the
heterogeneous choice of $J_0$ to modify the output auxiliary
structure without changing that product code or its private pairs.

Since $S_{3b}$ uses the constituent families differently in its
three ordered input positions, the branches can benefit from
different auxiliary distributions and orientations. The guiding
principle is therefore to choose intermediate profiles that work
better together in the final composition, rather than maximize
any single intermediate parameter.
\end{remark}

\section{Concluding Remarks and Open Problems}

This paper developed a heterogeneous refinement of recent recursive
constructions for lower-bounding zero-error Shannon capacity. Starting from
Gao's binary product construction, we showed that the auxiliary structure
propagated by a gadget need not be chosen uniformly across all parts of the
product. Allowing different independent sets to be used whenever the required
separation conditions permit leads to a strictly richer class of recursive
constructions. Even when such a heterogeneous choice does not improve the
size of the independent set immediately, it can produce a more useful
intermediate structure and thereby lead to a larger independent set at a
later stage of the recursion.
We then used this principle within the more general BPZ framework to construct
role-specific intermediate representations for different positions in the
recursion. Retaining the top-level BPZ split, combining rule, and terminal
code, we obtained an independent set in $C_7^{\strong 500}$ and the improved
lower bound
$\Shannon(C_7)
\ge
3.25883262\ldots$.
More broadly, these results suggest that recursive zero-error constructions
should not be optimized solely according to the size of the independent set
produced at each intermediate stage. Intermediate structures with the same
dimension and current main-code size can have different downstream value
because of the additional structure they carry. Recursive construction should
therefore be viewed as a structured, multi-objective optimization problem
rather than as a sequence of locally optimal choices.

Several directions appear particularly natural. First, the space of
heterogeneous codebooks, orientations, and recursive product trees has so far
been explored only sparsely. A systematic search should retain multiple
intermediate profiles, rather than only the construction with the largest
current code, and evaluate them according to their possible downstream roles.
More generally, one may jointly optimize the heterogeneous constructions,
BPZ combining rules, and terminal codes rather than keeping the top-level
architecture fixed. Second, a genuinely two-sided extension of
Theorem~\ref{thm:heteroGao} would allow both coordinates to vary (see Remark~\ref{remark:twosided}), but would
require additional compatibility conditions ensuring that every pair of
product blocks is separated in at least one coordinate. Third, it is natural
to investigate whether the same principles improve lower bounds for other
odd cycles and, more broadly, for other graphs whose Shannon capacities
remain unknown.

Finally, the recent progress on this problem reinforces the value of
zero-error Shannon capacity as a benchmark for AI-assisted mathematical
discovery. The target is precise, and every claimed improvement must come with a construction that can be checked independently. The real value, however, may lie less in the extra digits than in the ideas, search methods, and verification tools developed along the way.

\vspace{-5pt}
\section*{Code and Reproducibility}
\vspace{-5pt}
Code and machine-readable certificates for reproducing the constructions
and numerical bounds in this paper are available in the
accompanying repository~\cite{TandonC7Code2026}. These materials include
the finite neighborhood checks, recursive propagation routines, and
exact-arithmetic computations supporting the reported bounds. The
certificate data and final arithmetic for Theorem~\ref{thm:mainC7} are
recorded in Appendix~\ref{app:C7-certificates}.

\vspace{-5pt}
\section*{Disclosure of AI Use}\vspace{-5pt}

The author used large language models, including ChatGPT and Claude,
to explore mathematical ideas and candidate constructions, develop and
check proofs, implement computational searches and verification scripts,
and refine the manuscript. The mathematical claims are supported by the
proofs and finite computations presented in the paper. The author takes
full responsibility for the results and their presentation.

\bibliographystyle{IEEEtran}
\bibliography{references}


\vspace{-15pt}
\begin{appendices}

\vspace{-5pt}
\titleformat{\section}
  {\normalfont\Large\bfseries}
  {Appendix~\thesection:}
  {0.5em}
  {}

\section{BPZ Combining Rules and Terminal Codes}\label{app:BPZ-rules}

This appendix records the complete finite rule library used in the BPZ
multi-gadget framework.  The label set is
\begin{align}
\mathcal L=\{B,N,A,D,O,H,V\},
\end{align}
and tuple coordinates are ordered according to the input representations.
For a named combining rule $S_{m\alpha}$, we write
$T_{\lambda}^{(m\alpha)}\subseteq\mathcal L^m$ for its component associated
with the output label $\lambda\in\mathcal L$.  For a terminal code
$K_{m\alpha}$, the displayed set is the corresponding subset of
$\mathcal L^m$.

The lists below are transcribed from the files
\begin{center}
\texttt{ShannonBounds/Substitutions.lean},
\qquad
\texttt{ShannonBounds/TerminalCodes.lean},
\end{center}
at commit
\begin{center}
\texttt{aa21eeb12b75b0413d3fa9fb4208b5d0bf2c4d65}
\end{center}
of the BPZ Lean repository~\cite{BuysPolakZuiddamGitHub2026}.  Their total
word counts are, respectively,
\begin{align}
&20,17,58,45,48,46,53,47,47,45
\end{align}
for $S_{2a},S_{2b},S_{3a},\ldots,S_{3h}$, and
\begin{align}
19,49,57
\end{align}
for $K_{3a},K_{4a},K_{4b}$.  Admissibility of the combining rules and
pairwise separation of the terminal codes are verified in the cited Lean
source.

\subsection{Binary combining rules}

\subsubsection{The rule $S_{2a}$ (20 words)}\label{app:S2a}
\begin{longtable}{@{}>{\raggedright\arraybackslash}p{0.08\textwidth}>{\raggedright\arraybackslash}p{0.84\textwidth}@{}}
\toprule
Output & Ordered input words \\
\midrule
\endfirsthead
\toprule
Output & Ordered input words \\
\midrule
\endhead
$T_{B}^{(2a)}$ & $(B,B),\allowbreak\; (H,D),\allowbreak\; (V,A),\allowbreak\; (D,V),\allowbreak\; (A,H)$ \\
$T_{N}^{(2a)}$ & $(N,N),\allowbreak\; (A,A),\allowbreak\; (A,D),\allowbreak\; (D,A),\allowbreak\; (D,D)$ \\
$T_{A}^{(2a)}$ & $(A,N),\allowbreak\; (N,D)$ \\
$T_{D}^{(2a)}$ & $(D,N),\allowbreak\; (N,A)$ \\
$T_{O}^{(2a)}$ & $(O,N),\allowbreak\; (N,O)$ \\
$T_{H}^{(2a)}$ & $(H,N),\allowbreak\; (N,V)$ \\
$T_{V}^{(2a)}$ & $(V,N),\allowbreak\; (N,H)$ \\
\bottomrule
\end{longtable}
\addtocounter{table}{-1}

\subsubsection{The rule $S_{2b}$ (17 words)}\label{app:S2b}\vspace{-5pt}
\begin{longtable}{@{}>{\raggedright\arraybackslash}p{0.08\textwidth}>{\raggedright\arraybackslash}p{0.84\textwidth}@{}}
\toprule
Output & Ordered input words \\
\midrule
\endfirsthead
\toprule
Output & Ordered input words \\
\midrule
\endhead
$T_{B}^{(2b)}$ & $(A,V),\allowbreak\; (B,B),\allowbreak\; (D,H)$ \\
$T_{N}^{(2b)}$ & $(A,A),\allowbreak\; (N,N)$ \\
$T_{A}^{(2b)}$ & $(A,N),\allowbreak\; (B,D),\allowbreak\; (D,V),\allowbreak\; (V,H)$ \\
$T_{D}^{(2b)}$ & $(D,N),\allowbreak\; (N,A)$ \\
$T_{O}^{(2b)}$ & $\varnothing$ \\
$T_{H}^{(2b)}$ & $(H,B),\allowbreak\; (N,V)$ \\
$T_{V}^{(2b)}$ & $(H,A),\allowbreak\; (N,H),\allowbreak\; (V,D),\allowbreak\; (V,N)$ \\
\bottomrule
\end{longtable}
\addtocounter{table}{-1}

\subsection{Ternary combining rules}

\vspace{-5pt}

\subsubsection{The rule $S_{3a}$ (58 words)}\label{app:S3a}\vspace{-10pt}
\begin{longtable}{@{}>{\raggedright\arraybackslash}p{0.08\textwidth}>{\raggedright\arraybackslash}p{0.84\textwidth}@{}}
\toprule
Output & Ordered input words \\
\midrule
\endfirsthead
\toprule
Output & Ordered input words \\
\midrule
\endhead
$T_{B}^{(3a)}$ & $(A,H,N),\allowbreak\; (A,N,V),\allowbreak\; (B,A,H),\allowbreak\; (B,B,B),\allowbreak\; (B,D,V),\allowbreak\; (B,H,D),\allowbreak\; (B,V,A),\allowbreak\; (D,N,H),\allowbreak\; (D,V,N),\allowbreak\; (H,D,N),\allowbreak\; (H,N,A),\allowbreak\; (V,A,N),\allowbreak\; (V,N,D)$ \\
$T_{N}^{(3a)}$ & $(A,A,N),\allowbreak\; (A,D,N),\allowbreak\; (A,N,A),\allowbreak\; (A,N,D),\allowbreak\; (D,A,N),\allowbreak\; (D,D,N),\allowbreak\; (D,N,A),\allowbreak\; (D,N,D),\allowbreak\; (N,A,A),\allowbreak\; (N,A,D),\allowbreak\; (N,D,A),\allowbreak\; (N,D,D),\allowbreak\; (N,N,N)$ \\
$T_{A}^{(3a)}$ & $(A,A,A),\allowbreak\; (A,A,D),\allowbreak\; (A,D,A),\allowbreak\; (A,D,D),\allowbreak\; (A,N,N),\allowbreak\; (N,A,N),\allowbreak\; (N,N,A)$ \\
$T_{D}^{(3a)}$ & $(D,A,A),\allowbreak\; (D,A,D),\allowbreak\; (D,D,A),\allowbreak\; (D,D,D),\allowbreak\; (D,N,N),\allowbreak\; (N,D,N),\allowbreak\; (N,N,D)$ \\
$T_{O}^{(3a)}$ & $(O,A,A),\allowbreak\; (O,A,D),\allowbreak\; (O,D,A),\allowbreak\; (O,D,D)$ \\
$T_{H}^{(3a)}$ & $(H,A,A),\allowbreak\; (H,A,D),\allowbreak\; (H,D,A),\allowbreak\; (H,D,D),\allowbreak\; (H,N,N),\allowbreak\; (N,H,N),\allowbreak\; (N,N,H)$ \\
$T_{V}^{(3a)}$ & $(N,N,V),\allowbreak\; (N,V,N),\allowbreak\; (V,A,A),\allowbreak\; (V,A,D),\allowbreak\; (V,D,A),\allowbreak\; (V,D,D),\allowbreak\; (V,N,N)$ \\
\bottomrule
\end{longtable}
\addtocounter{table}{-1}

\vspace{-20pt}
\subsubsection{The rule $S_{3b}$ (45 words)}\label{app:S3b}\vspace{-10pt}
\begin{longtable}{@{}>{\raggedright\arraybackslash}p{0.08\textwidth}>{\raggedright\arraybackslash}p{0.84\textwidth}@{}}
\toprule
Output & Ordered input words \\
\midrule
\endfirsthead
\toprule
Output & Ordered input words \\
\midrule
\endhead
$T_{B}^{(3b)}$ & $(B,B,B),\allowbreak\; (B,H,A),\allowbreak\; (N,A,V),\allowbreak\; (N,D,H),\allowbreak\; (N,V,D),\allowbreak\; (A,B,H),\allowbreak\; (A,H,N),\allowbreak\; (A,H,D),\allowbreak\; (D,B,V),\allowbreak\; (D,V,N),\allowbreak\; (D,V,D),\allowbreak\; (H,N,D),\allowbreak\; (H,D,N),\allowbreak\; (H,D,D),\allowbreak\; (V,N,A),\allowbreak\; (V,A,B),\allowbreak\; (V,D,A)$ \\
$T_{N}^{(3b)}$ & $(B,V,D),\allowbreak\; (N,N,N),\allowbreak\; (N,V,A),\allowbreak\; (A,N,H),\allowbreak\; (A,D,H),\allowbreak\; (A,H,N),\allowbreak\; (D,N,V),\allowbreak\; (D,D,H),\allowbreak\; (D,H,N),\allowbreak\; (H,V,D)$ \\
$T_{A}^{(3b)}$ & $(N,B,A),\allowbreak\; (N,A,N),\allowbreak\; (A,B,B),\allowbreak\; (A,A,V),\allowbreak\; (A,H,A)$ \\
$T_{D}^{(3b)}$ & $(N,B,D),\allowbreak\; (N,D,N),\allowbreak\; (D,B,B),\allowbreak\; (D,A,V),\allowbreak\; (D,H,A)$ \\
$T_{O}^{(3b)}$ & $\varnothing$ \\
$T_{H}^{(3b)}$ & $(N,N,H),\allowbreak\; (N,H,N),\allowbreak\; (H,N,N),\allowbreak\; (H,V,V)$ \\
$T_{V}^{(3b)}$ & $(N,N,V),\allowbreak\; (N,V,N),\allowbreak\; (V,N,N),\allowbreak\; (V,V,V)$ \\
\bottomrule
\end{longtable}
\addtocounter{table}{-1}

\subsubsection{The rule $S_{3c}$ (48 words)}\label{app:S3c}\vspace{-10pt}
\begin{longtable}{@{}>{\raggedright\arraybackslash}p{0.08\textwidth}>{\raggedright\arraybackslash}p{0.84\textwidth}@{}}
\toprule
Output & Ordered input words \\
\midrule
\endfirsthead
\toprule
Output & Ordered input words \\
\midrule
\endhead
$T_{B}^{(3c)}$ & $(A,B,V),\allowbreak\; (A,H,H),\allowbreak\; (A,V,A),\allowbreak\; (B,B,B),\allowbreak\; (D,A,H),\allowbreak\; (D,H,A),\allowbreak\; (D,N,H),\allowbreak\; (H,A,A),\allowbreak\; (H,A,N),\allowbreak\; (H,H,H),\allowbreak\; (N,H,D),\allowbreak\; (N,V,A),\allowbreak\; (V,A,D),\allowbreak\; (V,D,B)$ \\
$T_{N}^{(3c)}$ & $(A,A,D),\allowbreak\; (A,D,A),\allowbreak\; (B,H,D),\allowbreak\; (D,A,B),\allowbreak\; (D,B,H),\allowbreak\; (H,H,A),\allowbreak\; (H,V,H),\allowbreak\; (N,N,N),\allowbreak\; (V,A,A),\allowbreak\; (V,D,H)$ \\
$T_{A}^{(3c)}$ & $(A,B,N),\allowbreak\; (A,N,D),\allowbreak\; (B,N,A),\allowbreak\; (B,V,V),\allowbreak\; (H,D,N),\allowbreak\; (N,A,B),\allowbreak\; (N,D,A),\allowbreak\; (V,N,V),\allowbreak\; (V,V,N)$ \\
$T_{D}^{(3c)}$ & $(D,N,N),\allowbreak\; (N,D,N),\allowbreak\; (N,N,D)$ \\
$T_{O}^{(3c)}$ & $\varnothing$ \\
$T_{H}^{(3c)}$ & $(B,H,N),\allowbreak\; (H,N,B),\allowbreak\; (N,B,H)$ \\
$T_{V}^{(3c)}$ & $(A,V,N),\allowbreak\; (D,H,N),\allowbreak\; (H,N,A),\allowbreak\; (N,A,H),\allowbreak\; (N,D,V),\allowbreak\; (N,N,V),\allowbreak\; (N,V,N),\allowbreak\; (V,N,D),\allowbreak\; (V,N,N)$ \\
\bottomrule
\end{longtable}
\addtocounter{table}{-1}

\subsubsection{The rule $S_{3d}$ (46 words)}\label{app:S3d}
\begin{longtable}{@{}>{\raggedright\arraybackslash}p{0.08\textwidth}>{\raggedright\arraybackslash}p{0.84\textwidth}@{}}
\toprule
Output & Ordered input words \\
\midrule
\endfirsthead
\toprule
Output & Ordered input words \\
\midrule
\endhead
$T_{B}^{(3d)}$ & $(A,B,V),\allowbreak\; (B,B,B),\allowbreak\; (B,H,D),\allowbreak\; (B,V,A),\allowbreak\; (D,A,H),\allowbreak\; (D,D,V),\allowbreak\; (D,N,H),\allowbreak\; (H,A,A),\allowbreak\; (H,A,N),\allowbreak\; (V,A,D),\allowbreak\; (V,D,N),\allowbreak\; (V,H,A),\allowbreak\; (V,H,H)$ \\
$T_{N}^{(3d)}$ & $(A,A,D),\allowbreak\; (B,H,D),\allowbreak\; (D,A,B),\allowbreak\; (D,B,H),\allowbreak\; (H,V,H),\allowbreak\; (N,N,N),\allowbreak\; (V,A,A),\allowbreak\; (V,D,H),\allowbreak\; (V,H,A)$ \\
$T_{A}^{(3d)}$ & $(A,B,N),\allowbreak\; (A,N,D),\allowbreak\; (B,D,A),\allowbreak\; (B,N,A),\allowbreak\; (B,V,V),\allowbreak\; (H,D,N),\allowbreak\; (N,A,B),\allowbreak\; (V,N,V),\allowbreak\; (V,V,N)$ \\
$T_{D}^{(3d)}$ & $(D,N,N),\allowbreak\; (N,D,N),\allowbreak\; (N,N,D)$ \\
$T_{O}^{(3d)}$ & $\varnothing$ \\
$T_{H}^{(3d)}$ & $(B,H,N),\allowbreak\; (H,N,B),\allowbreak\; (N,B,H)$ \\
$T_{V}^{(3d)}$ & $(A,V,N),\allowbreak\; (D,H,N),\allowbreak\; (H,N,A),\allowbreak\; (N,A,H),\allowbreak\; (N,D,V),\allowbreak\; (N,N,V),\allowbreak\; (N,V,N),\allowbreak\; (V,N,D),\allowbreak\; (V,N,N)$ \\
\bottomrule
\end{longtable}
\addtocounter{table}{-1}

\subsubsection{The rule $S_{3e}$ (53 words)}\label{app:S3e}
\begin{longtable}{@{}>{\raggedright\arraybackslash}p{0.08\textwidth}>{\raggedright\arraybackslash}p{0.84\textwidth}@{}}
\toprule
Output & Ordered input words \\
\midrule
\endfirsthead
\toprule
Output & Ordered input words \\
\midrule
\endhead
$T_{B}^{(3e)}$ & $(A,H,N),\allowbreak\; (B,A,H),\allowbreak\; (B,B,B),\allowbreak\; (B,D,V),\allowbreak\; (B,H,D),\allowbreak\; (B,V,A),\allowbreak\; (D,V,N),\allowbreak\; (H,D,N),\allowbreak\; (H,N,A),\allowbreak\; (V,A,N),\allowbreak\; (V,N,D)$ \\
$T_{N}^{(3e)}$ & $(A,A,N),\allowbreak\; (A,H,N),\allowbreak\; (D,A,N),\allowbreak\; (D,D,N),\allowbreak\; (H,N,A),\allowbreak\; (N,A,A),\allowbreak\; (N,A,D),\allowbreak\; (N,D,A),\allowbreak\; (N,D,D),\allowbreak\; (N,N,N)$ \\
$T_{A}^{(3e)}$ & $(A,A,A),\allowbreak\; (A,A,D),\allowbreak\; (A,D,A),\allowbreak\; (A,D,D),\allowbreak\; (A,N,N),\allowbreak\; (B,N,A),\allowbreak\; (H,N,D),\allowbreak\; (N,A,N)$ \\
$T_{D}^{(3e)}$ & $(A,N,D),\allowbreak\; (D,A,A),\allowbreak\; (D,A,D),\allowbreak\; (D,D,A),\allowbreak\; (D,D,D),\allowbreak\; (D,N,N),\allowbreak\; (N,D,N),\allowbreak\; (N,N,D)$ \\
$T_{O}^{(3e)}$ & $\varnothing$ \\
$T_{H}^{(3e)}$ & $(B,N,H),\allowbreak\; (H,A,A),\allowbreak\; (H,A,D),\allowbreak\; (H,D,D),\allowbreak\; (H,H,A),\allowbreak\; (H,N,N),\allowbreak\; (N,H,N)$ \\
$T_{V}^{(3e)}$ & $(A,N,V),\allowbreak\; (D,N,H),\allowbreak\; (N,N,V),\allowbreak\; (N,V,N),\allowbreak\; (V,A,A),\allowbreak\; (V,A,D),\allowbreak\; (V,D,A),\allowbreak\; (V,D,D),\allowbreak\; (V,N,N)$ \\
\bottomrule
\end{longtable}
\addtocounter{table}{-1}

\subsubsection{The rule $S_{3f}$ (47 words)}\label{app:S3f}
\begin{longtable}{@{}>{\raggedright\arraybackslash}p{0.08\textwidth}>{\raggedright\arraybackslash}p{0.84\textwidth}@{}}
\toprule
Output & Ordered input words \\
\midrule
\endfirsthead
\toprule
Output & Ordered input words \\
\midrule
\endhead
$T_{B}^{(3f)}$ & $(A,H,D),\allowbreak\; (A,H,N),\allowbreak\; (B,A,H),\allowbreak\; (B,B,B),\allowbreak\; (B,D,V),\allowbreak\; (B,V,A),\allowbreak\; (H,D,B),\allowbreak\; (H,H,H),\allowbreak\; (H,N,A),\allowbreak\; (N,H,D),\allowbreak\; (V,N,D)$ \\
$T_{N}^{(3f)}$ & $(A,H,B),\allowbreak\; (B,H,H),\allowbreak\; (D,H,A),\allowbreak\; (H,D,D),\allowbreak\; (H,N,A),\allowbreak\; (N,A,A),\allowbreak\; (N,A,H),\allowbreak\; (N,H,A),\allowbreak\; (N,N,N),\allowbreak\; (V,H,V)$ \\
$T_{A}^{(3f)}$ & $(A,A,A),\allowbreak\; (A,A,H),\allowbreak\; (A,B,N),\allowbreak\; (B,N,A),\allowbreak\; (H,H,N),\allowbreak\; (H,N,H),\allowbreak\; (N,A,N)$ \\
$T_{D}^{(3f)}$ & $(A,N,D),\allowbreak\; (D,A,A),\allowbreak\; (D,A,D),\allowbreak\; (D,B,N),\allowbreak\; (N,D,N),\allowbreak\; (N,N,D)$ \\
$T_{O}^{(3f)}$ & $\varnothing$ \\
$T_{H}^{(3f)}$ & $(B,N,H),\allowbreak\; (H,A,B),\allowbreak\; (H,A,V),\allowbreak\; (H,N,N),\allowbreak\; (N,H,N)$ \\
$T_{V}^{(3f)}$ & $(A,N,V),\allowbreak\; (D,N,H),\allowbreak\; (D,V,N),\allowbreak\; (N,N,V),\allowbreak\; (N,V,N),\allowbreak\; (V,A,B),\allowbreak\; (V,A,V),\allowbreak\; (V,N,N)$ \\
\bottomrule
\end{longtable}
\addtocounter{table}{-1}

\subsubsection{The rule $S_{3g}$ (47 words)}\label{app:S3g}
\begin{longtable}{@{}>{\raggedright\arraybackslash}p{0.08\textwidth}>{\raggedright\arraybackslash}p{0.84\textwidth}@{}}
\toprule
Output & Ordered input words \\
\midrule
\endfirsthead
\toprule
Output & Ordered input words \\
\midrule
\endhead
$T_{B}^{(3g)}$ & $(A,H,N),\allowbreak\; (B,A,H),\allowbreak\; (B,B,B),\allowbreak\; (B,D,V),\allowbreak\; (B,H,D),\allowbreak\; (B,V,A),\allowbreak\; (H,D,N),\allowbreak\; (H,N,A),\allowbreak\; (V,H,A),\allowbreak\; (V,H,H),\allowbreak\; (V,N,D)$ \\
$T_{N}^{(3g)}$ & $(A,H,B),\allowbreak\; (B,H,H),\allowbreak\; (D,H,A),\allowbreak\; (H,D,D),\allowbreak\; (H,N,A),\allowbreak\; (N,A,A),\allowbreak\; (N,A,H),\allowbreak\; (N,H,A),\allowbreak\; (N,N,N),\allowbreak\; (V,H,V)$ \\
$T_{A}^{(3g)}$ & $(A,A,A),\allowbreak\; (A,A,H),\allowbreak\; (A,B,N),\allowbreak\; (B,N,A),\allowbreak\; (H,H,N),\allowbreak\; (H,N,H),\allowbreak\; (N,A,N)$ \\
$T_{D}^{(3g)}$ & $(A,N,D),\allowbreak\; (D,A,A),\allowbreak\; (D,A,H),\allowbreak\; (D,B,N),\allowbreak\; (N,D,N),\allowbreak\; (N,N,D)$ \\
$T_{O}^{(3g)}$ & $\varnothing$ \\
$T_{H}^{(3g)}$ & $(B,N,H),\allowbreak\; (H,A,B),\allowbreak\; (H,A,V),\allowbreak\; (H,N,N),\allowbreak\; (N,H,N)$ \\
$T_{V}^{(3g)}$ & $(A,N,V),\allowbreak\; (D,N,H),\allowbreak\; (D,V,N),\allowbreak\; (N,N,V),\allowbreak\; (N,V,N),\allowbreak\; (V,A,B),\allowbreak\; (V,A,V),\allowbreak\; (V,N,N)$ \\
\bottomrule
\end{longtable}
\addtocounter{table}{-1}

\subsubsection{The rule $S_{3h}$ (45 words)}\label{app:S3h}
\begin{longtable}{@{}>{\raggedright\arraybackslash}p{0.08\textwidth}>{\raggedright\arraybackslash}p{0.84\textwidth}@{}}
\toprule
Output & Ordered input words \\
\midrule
\endfirsthead
\toprule
Output & Ordered input words \\
\midrule
\endhead
$T_{B}^{(3h)}$ & $(B,B,B),\allowbreak\; (B,H,A),\allowbreak\; (N,A,V),\allowbreak\; (N,D,H),\allowbreak\; (N,V,D),\allowbreak\; (A,B,H),\allowbreak\; (A,H,N),\allowbreak\; (A,H,D),\allowbreak\; (D,B,V),\allowbreak\; (D,V,N),\allowbreak\; (D,V,D),\allowbreak\; (H,N,D),\allowbreak\; (H,D,N),\allowbreak\; (H,D,D),\allowbreak\; (V,N,A),\allowbreak\; (V,A,B),\allowbreak\; (V,D,A)$ \\
$T_{N}^{(3h)}$ & $(B,V,D),\allowbreak\; (N,N,N),\allowbreak\; (N,V,A),\allowbreak\; (A,N,H),\allowbreak\; (A,D,H),\allowbreak\; (A,H,N),\allowbreak\; (D,N,V),\allowbreak\; (D,D,H),\allowbreak\; (D,H,N),\allowbreak\; (H,V,D)$ \\
$T_{A}^{(3h)}$ & $(N,B,D),\allowbreak\; (N,D,N),\allowbreak\; (D,B,B),\allowbreak\; (D,A,V),\allowbreak\; (D,H,A)$ \\
$T_{D}^{(3h)}$ & $(N,B,A),\allowbreak\; (N,A,N),\allowbreak\; (A,B,B),\allowbreak\; (A,A,V),\allowbreak\; (A,H,A)$ \\
$T_{O}^{(3h)}$ & $\varnothing$ \\
$T_{H}^{(3h)}$ & $(N,N,V),\allowbreak\; (N,V,N),\allowbreak\; (V,N,N),\allowbreak\; (V,V,V)$ \\
$T_{V}^{(3h)}$ & $(N,N,H),\allowbreak\; (N,H,N),\allowbreak\; (H,N,N),\allowbreak\; (H,V,V)$ \\
\bottomrule
\end{longtable}
\addtocounter{table}{-1}

\subsection{Terminal codes}

\subsubsection{The terminal code $K_{3a}$ (19 words)}\label{app:K3a}
\begin{center}
\scriptsize
\renewcommand{\arraystretch}{1.16}
\begin{tabular}{@{}ccc@{}}
$(A,B,V)$ & $(A,H,N)$ & $(A,V,H)$ \\
$(B,A,B)$ & $(B,D,D)$ & $(B,H,A)$ \\
$(B,N,B)$ & $(D,A,H)$ & $(D,D,N)$ \\
$(D,N,H)$ & $(H,B,A)$ & $(H,H,H)$ \\
$(H,V,N)$ & $(N,B,V)$ & $(N,H,N)$ \\
$(N,V,H)$ & $(V,B,D)$ & $(V,B,N)$ \\
$(V,V,A)$ &  &  \\
\end{tabular}
\end{center}

\subsubsection{The terminal code $K_{4a}$ (49 words)}\label{app:K4a}
\begin{center}
\scriptsize
\renewcommand{\arraystretch}{1.16}
\begin{tabular}{@{}ccc@{}}
$(B,B,B,B)$ & $(B,N,V,B)$ & $(B,A,H,B)$ \\
$(B,D,V,B)$ & $(B,H,N,B)$ & $(B,H,D,B)$ \\
$(B,V,A,B)$ & $(N,B,B,V)$ & $(N,N,V,H)$ \\
$(N,A,V,H)$ & $(N,D,H,H)$ & $(N,H,N,H)$ \\
$(N,H,A,H)$ & $(N,V,D,V)$ & $(A,B,B,V)$ \\
$(A,N,H,H)$ & $(A,A,V,H)$ & $(A,D,H,H)$ \\
$(A,H,A,H)$ & $(A,V,N,H)$ & $(A,V,D,H)$ \\
$(D,B,B,H)$ & $(D,N,V,V)$ & $(D,A,V,V)$ \\
$(D,D,H,V)$ & $(D,H,N,V)$ & $(D,H,A,V)$ \\
$(D,V,D,V)$ & $(H,B,B,A)$ & $(H,N,V,D)$ \\
$(H,A,H,D)$ & $(H,A,V,N)$ & $(H,D,H,N)$ \\
$(H,D,V,D)$ & $(H,H,N,N)$ & $(H,H,N,D)$ \\
$(H,H,A,N)$ & $(H,H,D,D)$ & $(H,V,A,D)$ \\
$(H,V,D,N)$ & $(V,B,B,N)$ & $(V,B,B,D)$ \\
$(V,N,V,N)$ & $(V,N,V,A)$ & $(V,A,H,A)$ \\
$(V,D,V,A)$ & $(V,H,D,A)$ & $(V,V,N,A)$ \\
$(V,V,A,A)$ &  &  \\
\end{tabular}
\end{center}

\subsubsection{The terminal code $K_{4b}$ (57 words)}\label{app:K4b}
\begin{center}
\scriptsize
\renewcommand{\arraystretch}{1.16}
\begin{tabular}{@{}ccc@{}}
$(A,A,A,V)$ & $(A,A,N,V)$ & $(A,B,H,A)$ \\
$(A,H,A,A)$ & $(A,H,A,N)$ & $(A,H,N,A)$ \\
$(A,H,N,N)$ & $(A,N,A,V)$ & $(A,N,D,H)$ \\
$(A,N,H,N)$ & $(A,N,N,H)$ & $(A,V,D,B)$ \\
$(B,A,D,H)$ & $(B,A,V,N)$ & $(B,B,B,B)$ \\
$(B,B,V,D)$ & $(B,D,B,H)$ & $(B,D,H,N)$ \\
$(B,H,H,H)$ & $(B,V,A,D)$ & $(B,V,N,D)$ \\
$(D,A,V,A)$ & $(D,H,H,A)$ & $(D,N,B,V)$ \\
$(D,N,V,A)$ & $(D,N,V,N)$ & $(D,V,B,A)$ \\
$(D,V,B,N)$ & $(H,A,B,H)$ & $(H,B,D,B)$ \\
$(H,B,V,H)$ & $(H,D,A,B)$ & $(H,D,N,B)$ \\
$(H,N,N,D)$ & $(H,V,H,D)$ & $(H,V,H,N)$ \\
$(N,A,A,V)$ & $(N,A,N,V)$ & $(N,B,H,A)$ \\
$(N,H,A,N)$ & $(N,H,B,A)$ & $(N,N,A,V)$ \\
$(N,N,D,H)$ & $(N,N,N,V)$ & $(N,N,V,N)$ \\
$(N,V,D,N)$ & $(N,V,H,A)$ & $(N,V,N,N)$ \\
$(V,A,A,B)$ & $(V,A,N,B)$ & $(V,D,D,H)$ \\
$(V,H,A,H)$ & $(V,H,N,H)$ & $(V,N,A,B)$ \\
$(V,N,N,A)$ & $(V,N,N,N)$ & $(V,V,H,V)$ \\
\end{tabular}
\end{center}

\section{Finite Certificates and Exact Arithmetic for Improved $C_7$ Bounds}
\label{app:C7-certificates}

This appendix records the finite certificate data supporting both improved
lower bounds, together with the final exact arithmetic for Theorem~\ref{thm:mainC7}.
Throughout, $N(S)$ denotes the closed confusability neighborhood of a set $S$.
\subsection{The eight exchanges defining $J^+$}
\label{app:Jplus}

For a word $u=(u_0,u_1,u_2,u_3,u_4)\in\mathbb Z_7^5$, use the base-$7$
encoding
\begin{align}
\operatorname{enc}(u)=\sum_{k=0}^{4}u_k7^k.
\end{align}
A word of $\mathbb Z_7^{10}$ is written as an ordered pair of such
five-dimensional encodings. Let
\begin{align}\label{eq:appendix-T}
T(w_0,w_1,w_2,w_3,w_4)
=
(2-w_1,w_3,w_0,2-w_2,w_4)
\pmod 7
\end{align}
and set
\begin{align}
J=(T\times T)(I_{10}).
\end{align}
Direct enumeration gives
\begin{align}
\left|J\setminus N(X_{10}^0)\right|=27{,}480.
\end{align}
The following eight simultaneous exchanges define $J^+$. In each line,
the word on the left is removed and the word on the right is inserted:
\begin{align}
(3660,6494)&\longrightarrow(6110,4092),
&
(16388,6494)&\longrightarrow(16046,4092),
\nonumber\\
(6494,3660)&\longrightarrow(6494,6110),
&
(15643,3660)&\longrightarrow(15643,6110),
\nonumber\\
(3660,15643)&\longrightarrow(6110,15685),
&
(16388,15643)&\longrightarrow(16046,15685),
\nonumber\\
(6494,16388)&\longrightarrow(6494,16046),
&
(15643,16388)&\longrightarrow(15643,16046).
\end{align}
Each inserted word has exactly one confusable neighbor in $J$, namely the
word removed on the same line, and the eight inserted words are pairwise
nonconfusable. The exchanges therefore preserve independence and
cardinality. Moreover, in each exchange the removed word belongs to
$N(X_{10}^0)$, whereas the inserted word does not. Consequently,
\begin{align}\label{eq:appendix-q10}
|J^+|=134{,}753,
\qquad
q_{10}:=\left|J^+\setminus N(X_{10}^0)\right|=27{,}488.
\end{align}

\subsection{Certificate ledger}
\label{app:certificate-ledger}

Using the notation of Section~\ref{subsec:BPZheterogeneousImprovement}, let
$X_{15}^{0,X}$ denote the neutral part of the common auxiliary set
$X_{15}^{X}$ of the three ordinary fifteen-dimensional constructions, and
let $X_{30}^{0,L}$ denote the neutral part of the thirty-dimensional left
reference gadget $\mathcal G_{30}^{L}$.

Let
\begin{align}
J_{15}&=(T\times T\times T)(I_{15}),
\end{align}
where $I_{15}$ is the common main code of $\mathcal G_{15}^{X}$ and
$\mathcal G_{15}^{\mathrm{het}}$. The two fifteen-dimensional auxiliary
sets used in the certificates are
\begin{align}
X_{15}^{X}
&=
(X_{10}\times X_5^0)
\mathbin{\dot\cup}
(X_{10}\times X_5^{\mathrm H})
\mathbin{\dot\cup}
(X_{10}\times X_5^{\mathrm V}),
\label{eq:appendix-X15X}\\
X_{15}
&=
(X_{10}\times X_5^0)
\mathbin{\dot\cup}
(J^+\times X_5^{\mathrm H})
\mathbin{\dot\cup}
(J^+\times X_5^{\mathrm V}).
\label{eq:appendix-X15het}
\end{align}
Thus $X_{15}$ is independent because it is the auxiliary set of
$\mathcal G_{15}^{\mathrm{het}}$. The sets $J_{15}$ and $J_{30}^{++}$ are
independent because they are automorphic images of main independent sets.
Specifically, let $I_{30}^{++}$ denote the main independent set of
\begin{align}
\mathcal G_{30}^{++}
=
\operatorname{Gao}(\mathcal G_{15}^{\mathrm{het}},
\mathcal G_{15}^{\mathrm{het}})[X_{15};X_{15}].
\end{align}
Then
\begin{align}\label{eq:appendix-J30pp}
J_{30}^{++}=T^{\times6}(I_{30}^{++}),
\end{align}
and
\begin{align}\label{eq:appendix-size-J30pp}
|J_{30}^{++}|=|I_{30}^{++}|=2{,}455{,}726{,}444{,}728{,}097.
\end{align}
The final construction depends on the following four nontrivial neighborhood
counts.

\begin{table}[htbp]
\centering
\small
\renewcommand{\arraystretch}{1.18}
\begin{tabularx}{0.96\textwidth}{@{}c >{\raggedright\arraybackslash}X r@{}}
\toprule
\textbf{Certificate} & \textbf{Definition} & \textbf{Value} \\
\midrule
$C_1$ & $q_{10}=|J^+\setminus N(X_{10}^0)|$ & $27{,}488$ \\
$C_2$ & $|J_{15}\setminus N(X_{15}^{0,X})|$ & $12{,}839{,}823$ \\
$C_3$ & $|X_{15}\setminus N(X_{15}^{0,X})|$ & $14{,}045{,}805$ \\
$C_4$ & $q_{30}^{++}:=|J_{30}^{++}\setminus N(X_{30}^{0,L})|$ & $841{,}760{,}069{,}965{,}664$ \\
\bottomrule
\end{tabularx}
\caption{The four nontrivial finite certificates in the final dependency
chain. Each value is recorded together with the neutral reference against
which it is evaluated.}
\label{tab:appendix-certificates}
\end{table}
\paragraph{Additional certificate for the ``Warmup'' in Section~\ref{sec:heterogeneousGaoforC7}.}
The construction in Section~\ref{sec:heterogeneousGaoforC7} uses the additional count
\[
q_{15}
=
\bigl|J_{15}\setminus N(X_{15}^{0})\bigr|
=
12{,}872{,}271,
\]
where $X_{15}^{0}$ is the neutral part of the auxiliary set
of $\mathcal G_{15}^{\mathrm{het}}$.
This reference set differs from $X_{15}^{0,X}$ used in
certificate $C_2$.

\subsection{Component verification of $q_{30}^{++}$}
\label{app:q30-component-verification}

Write the main code and auxiliary decomposition of
$\mathcal G_{15}^{\mathrm{het}}$ as
$I_{15}=B_{15}\mathbin{\dot\cup}R_{15}$ and
$X_{15}=X_{15}^0\mathbin{\dot\cup}X_{15}^{\mathrm H}
\mathbin{\dot\cup}X_{15}^{\mathrm V}$, with complementary transversals
$P_{15}^{\mathrm H}$ and $P_{15}^{\mathrm V}$. The main independent set
$I_{30}^{++}$ is the disjoint union of the seven Gao-product components
shown in Table~\ref{tab:appendix-q30-components}. The second numerical
column gives the contribution of the corresponding component, after applying
$T^{\times6}$, to $J_{30}^{++}\setminus N(X_{30}^{0,L})$. The first total agrees with \eqref{eq:appendix-size-J30pp}, while the second
is certificate $C_4$ in Table~\ref{tab:appendix-certificates}.

\begin{table}[htbp]
\centering
\small
\renewcommand{\arraystretch}{1.18}
\begin{tabular}{@{}lrr@{}}
\toprule
\textbf{Gao component} & \textbf{Component size} & \textbf{Contribution to $q_{30}^{++}$} \\
\midrule
$B_{15}\times B_{15}$ & $2{,}208{,}101{,}357{,}412{,}721$ & $827{,}599{,}599{,}610{,}592$ \\
$R_{15}\times X_{15}^0$ & $88{,}350{,}975{,}590{,}928$ & $6{,}686{,}131{,}598{,}464$ \\
$P_{15}^{\mathrm H}\times X_{15}^{\mathrm H}$ & $16{,}790{,}482{,}310{,}040$ & $194{,}829{,}147{,}616$ \\
$P_{15}^{\mathrm V}\times X_{15}^{\mathrm V}$ & $18{,}671{,}085{,}756{,}720$ & $194{,}829{,}147{,}616$ \\
$X_{15}^0\times R_{15}$ & $88{,}350{,}975{,}590{,}928$ & $6{,}694{,}533{,}241{,}984$ \\
$X_{15}^{\mathrm H}\times P_{15}^{\mathrm V}$ & $16{,}790{,}482{,}310{,}040$ & $195{,}073{,}609{,}696$ \\
$X_{15}^{\mathrm V}\times P_{15}^{\mathrm H}$ & $18{,}671{,}085{,}756{,}720$ & $195{,}073{,}609{,}696$ \\
\midrule
\textbf{Total} & $2{,}455{,}726{,}444{,}728{,}097$ & $841{,}760{,}069{,}965{,}664$ \\
\bottomrule
\end{tabular}
\caption{Seven-component verification of the dimension-$30$ certificate
$q_{30}^{++}$.}
\label{tab:appendix-q30-components}
\end{table}

To evaluate these contributions, we use the product structure of the
neutral reference set. Let $U_0$ and $V_0$ be the neutral auxiliary
parts of the two ordered inputs defining $\mathcal G_{30}^{\mathrm L}$,
and let $U_1$ and $V_1$ be their respective nonneutral auxiliary parts.
By Lemma~\ref{lemma:GaoProduct},
\[
X_{30}^{0,\mathrm L}
=
(U_0\times V_0)
\mathbin{\dot\cup}
(U_1\times V_1).
\]
Since closed neighborhoods distribute over unions and satisfy
$N(A\times B)=N(A)\times N(B)$, a point $(u,v)$ belongs to
$N(X_{30}^{0,\mathrm L})$ precisely when
\[
\bigl(u\in N(U_0)\ \text{and}\ v\in N(V_0)\bigr)
\quad\text{or}\quad
\bigl(u\in N(U_1)\ \text{and}\ v\in N(V_1)\bigr).
\]
For each transformed factor appearing in Table~\ref{tab:appendix-q30-components}, the verification
therefore records the joint frequencies of the two corresponding
neighborhood-membership indicators. The contribution of each product
component is then obtained by exact products and sums of these
frequencies. Joint frequencies are used so that overlaps between the
two neighborhood conditions are counted correctly.

\subsection{Exact final arithmetic}
\label{app:exact-final-arithmetic}

The three seven-family cardinality vectors entering the final BPZ rule,
recorded in the order $(B,N,A,D,O,H,V)$, are
\begin{align}
\widetilde{\mathbf n}_6
={}&(2278849120333921,1426136268314719,508695301664940,\notag\\
&509731462042710,176870111100096,176870111100096,176870111100096),
\label{eq:appendix-n6}\\
\widetilde{\mathbf n}_{11}
={}&(15103429137780549256646118183,\notag\\
&8049523822718249447032963650,\notag\\
&3444579390015851502852512947,\notag\\
&4810790311425867030231560474,\notag\\
&1375259276200664518625890664,\notag\\
&1375259276200664518625890664,\notag\\
&1375259276200664518625890664),
\label{eq:appendix-n11}\\
\widetilde{\mathbf n}_8
={}&(305690338043117314945,179001784071649220449,\notag\\
&62536648997146835644,87767535795425989412,\notag\\
&26071517201241409024,26071517201241409024,26071517201241409024).
\label{eq:appendix-n8}
\end{align}
Applying the ternary rule $S_{3b}$ in the ordered split $25=6+11+8$ gives
\begin{align}
\widetilde{\mathbf n}_{25}
={}&(12651555102711866584006925012965893561076969596025225905818339799,\notag\\
&3066132745665704601020173586804414424611769880543231510544034510,\notag\\
&4664412231532807578550047842170656906931895714626194718133859805,\notag\\
&5561607824514622663538712910059154540700557990699908637174352708,\notag\\
&0,\notag\\
&911561283323995002333709184151818987697897558779375600768623640,\notag\\
&911561283323995002333709184151818987697897558779375600768623640).
\label{eq:appendix-n25}
\end{align}
The zero $O$-coordinate is consistent with
$T_O^{(3b)}=\varnothing$. Applying the terminal code $K_{4a}$ to four
copies of \eqref{eq:appendix-n25} produces an independent set in
$C_7^{\strong 500}$ of cardinality
\begin{align}\label{eq:appendix-M}
\begin{split}
M_\star={}&3396467291811745694340851757371833322531739075465110\\
&0605971152590324909277046876400440353958428359262989\\
&6215203947372766811966557767583863304708327733085591\\
&5999031707610213141337406106659583482352927456098895\\
&0698158729501949866486079791784996498585401881281.
\end{split}
\end{align}
Consequently,
\begingroup
\small
\begin{align}
\Shannon(C_7)
\ge
M_\star^{1/500}
=
3.2588326203532663091215390518104754376053875943219555178734747247104368\ldots .
\end{align}
\endgroup

\subsection{Input parameters for the heterogeneous products}
\label{app:heterogeneous-inputs}

Table~\ref{tab:appendix-heterogeneous-inputs} collects the input parameters
for the nine applications of Theorem~\ref{thm:heteroGao} in
Section~\ref{subsec:BPZheterogeneousImprovement}.
For each ordered pair $(\mathcal G_L,\mathcal G_R)$, the parameters
$(j_0,o_0,h_0,v_0)$ describe the decomposition of $J_0$ relative to
the transversals of $\mathcal G_L$, while
$
q_{\mathrm H}=|J_{\mathrm H}\setminus N(X_L^0)|,
q_{\mathrm V}=|J_{\mathrm V}\setminus N(X_L^0)|$
are evaluated against the neutral auxiliary part of that same left gadget.

Some of these counts follow directly from independence.
For any independent set $X$ and subset $S\subseteq X$,
$X\cap N(S)=S$,
because the neighborhood is closed and no point of $X\setminus S$
is confusable with a point of $S$.
Consequently, when a one-sided codebook equals the left auxiliary
set $X_L$, its corresponding $q$-value is $s_L-o_L$.
The two such counts used below are
\begin{align*}
|X_{10}\setminus N(X_{10}^0)|
&=134689-105709=28980,\\
|X_{15}^{X}\setminus N(X_{15}^{0,X})|
&=49430863-35342398=14088465.
\end{align*}
The remaining $q$-values in
Table~\ref{tab:appendix-heterogeneous-inputs} are the certificates
$C_1,C_2,C_3,C_4$ in Table~\ref{tab:appendix-certificates}.

\newpage 
\begin{table}[htbp]
\centering
\footnotesize
\setlength{\tabcolsep}{3.5pt}
\renewcommand{\arraystretch}{1.2}

\textbf{(a) Ordered gadget inputs and $J_0$ parameters.}

\smallskip
\begin{tabularx}{\textwidth}
{@{}lll>{\raggedright\arraybackslash}X@{}}
\toprule
\textbf{Output gadget}
& $(\mathcal G_L,\mathcal G_R)$
& $J_0$
& $(j_0,o_0,h_0,v_0)$ \\
\midrule

$\mathcal G_{15}^{\mathrm{het}}$
& $(\mathcal G_{10},\mathcal G_5)$
& $X_{10}$
& $(134689,105709,14490,14490)$ \\

$\mathcal G_{15}^{A,\mathrm{het}}$
& $(\mathcal G_{10}^{A},\mathcal G_5)$
& $X_{10}$
& $(134689,105709,12236,16744)$ \\

$\mathcal G_{15}^{D,\mathrm{het}}$
& $(\mathcal G_{10}^{D},\sigma\mathcal G_5)$
& $X_{10}$
& $(134689,105709,16744,12236)$ \\

$\mathcal G_{30}^{(6)}$
& $(\mathcal G_{15}^{X},\mathcal G_{15}^{\mathrm{het}})$
& $X_{15}$
& $(49433743,35275258,6703815,7454670)$ \\

$\mathcal G_{25}^{(8)}$
& $(\mathcal G_{15}^{A,X},\mathcal G_{10}^{D})$
& $X_{15}^{A,\mathrm{het}}$
& $(49432527,35303606,5948463,8180458)$ \\

$\mathcal G_{40}^{(8)}$
& $(\mathcal G_{15}^{D,X},\mathcal G_{25}^{(8)})$
& $X_{15}^{D,\mathrm{het}}$
& $(49432527,35303606,8180458,5948463)$ \\

$\widehat{\mathcal G}_{30}$
& $(\mathcal G_{15}^{A,X},\sigma\mathcal G_{15}^{A,\mathrm{het}})$
& $X_{15}^{A,\mathrm{het}}$
& $(49432527,35303606,5948463,8180458)$ \\

$\widehat{\mathcal G}_{25}$
& $(\mathcal G_{15}^{A,X},\mathcal G_{10}^{D})$
& $X_{15}^{A,\mathrm{het}}$
& $(49432527,35303606,5948463,8180458)$ \\

$\mathcal G_{55}$
& $(\mathcal G_{30}^{L},\mathcal G_{25}^{R})$
& $\widehat X_{30}$
& $\begin{aligned}
   &(2444076335041121,1435184942161465,\\
   &\phantom{(}420235140891852,588656251987804)
   \end{aligned}$ \\

\bottomrule
\end{tabularx}

\medskip
\textbf{(b) One-sided codebooks and neighborhood counts.}

\smallskip
\begin{tabularx}{\textwidth}
{@{}ll>{\raggedleft\arraybackslash}X
l>{\raggedleft\arraybackslash}X@{}}
\toprule
\textbf{Output gadget}
& $J_{\mathrm H}$
& $(j_{\mathrm H},q_{\mathrm H})$
& $J_{\mathrm V}$
& $(j_{\mathrm V},q_{\mathrm V})$ \\
\midrule

$\mathcal G_{15}^{\mathrm{het}}$
& $J^+$
& $(134753,27488)$
& $J^+$
& $(134753,27488)$ \\

$\mathcal G_{15}^{A,\mathrm{het}}$
& $J^+$
& $(134753,27488)$
& $X_{10}$
& $(134689,28980)$ \\

$\mathcal G_{15}^{D,\mathrm{het}}$
& $X_{10}$
& $(134689,28980)$
& $J^+$
& $(134753,27488)$ \\

$\mathcal G_{30}^{(6)}$
& $J_{15}$
& $(49495055,12839823)$
& $J_{15}$
& $(49495055,12839823)$ \\

$\mathcal G_{25}^{(8)}$
& $J_{15}$
& $(49495055,12839823)$
& $J_{15}$
& $(49495055,12839823)$ \\

$\mathcal G_{40}^{(8)}$
& $X_{15}$
& $(49433743,14045805)$
& $J_{15}$
& $(49495055,12839823)$ \\

$\widehat{\mathcal G}_{30}$
& $J_{15}$
& $(49495055,12839823)$
& $X_{15}^{X}$
& $(49430863,14088465)$ \\

$\widehat{\mathcal G}_{25}$
& $J_{15}$
& $(49495055,12839823)$
& $X_{15}$
& $(49433743,14045805)$ \\

$\mathcal G_{55}$
& $J_{30}^{++}$
& $(2455726444728097,841760069965664)$
& $J_{30}^{++}$
& $(2455726444728097,841760069965664)$ \\

\bottomrule
\end{tabularx}

\caption{Complete input parameters for the heterogeneous products in
Section~\ref{subsec:BPZheterogeneousImprovement}.
Part (a) records the ordered input gadgets, the codebook $J_0$,
and its decomposition parameters.
Part (b) records the two one-sided codebooks and their parameters.
All decompositions and neighborhood counts are relative to the
indicated left gadget.
The first row also specifies the fifteen-dimensional heterogeneous
construction used in the warmup.}
\label{tab:appendix-heterogeneous-inputs}
\end{table}

\medskip
\noindent\textbf{Reproducibility.}
The BPZ base data, combining rules, and terminal codes are anchored to commit
\texttt{aa21eeb12b75b0413d3fa9fb4208b5d0bf2c4d65} of the BPZ Lean
repository~\cite{BuysPolakZuiddamGitHub2026}. The relevant source files are
\texttt{ShannonBounds/}\allowbreak\texttt{BaseC7Data.lean},
\texttt{ShannonBounds/}\allowbreak\texttt{Substitutions.lean}, and
\texttt{ShannonBounds/}\allowbreak\texttt{TerminalCodes.lean}.
Machine-readable certificate files and exact-arithmetic verification
scripts are available in the repository
\cite{TandonC7Code2026}.
\end{appendices}

\end{document}